\documentclass[lettersize,journal]{IEEEtran}

\usepackage{graphicx}

\usepackage{amsmath, amssymb, amsfonts, amsthm}
\usepackage{bm}

\usepackage{hyperref}
\usepackage{cleveref} 
\usepackage{stmaryrd}
\usepackage{subcaption}
\usepackage{makecell}
\usepackage{ulem}
\usepackage{fontspec}
\usepackage{suffix}
\usepackage{algorithm}
\usepackage{algpseudocode}

\usepackage{mdframed}

\theoremstyle{definition}
\newmdtheoremenv[
  topline=false,
  bottomline=false,
  leftline=false,
  rightline=false,
  linewidth=0.6pt,
  skipabove=0pt,
  skipbelow=6pt,
  leftmargin=0pt,
  rightmargin=0pt,
  innerleftmargin=0pt,
  innerrightmargin=0pt
]{definition}{Definition}

\newtheorem{theorem}{Theorem}[section]

\newtheorem{lemma}{Lemma}
\newtheorem{proposition}{Proposition}
\newtheorem{stipulate}{Stipulate}

\newtheorem{postulate}{Postulate}

\theoremstyle{remark}
\newtheorem*{remark}{Remark}

\crefformat{equation}{(#2#1#3)}
\crefrangeformat{equation}{(#3#1#4)–(#5#2#6)}

\usepackage{enumitem}
\usepackage{needspace}

\newlist{rsq}{enumerate}{2}
\setlist[rsq,1]{
  label=\textbf{RQ\arabic*},
  ref=RQ\arabic*,
  leftmargin=*,
}
\setlist[rsq,2]{
  label=\textbf{RQ\arabic{rsqi}.\alph*},
  ref=RQ\arabic{rsqi}.\alph*,
  leftmargin=2em,
}
\crefname{rsqi}{RQ}{RQs}
\crefname{rsqii}{RQ}{RQs}
\AtBeginEnvironment{rsq}{\Needspace{3\baselineskip}}

\newlist{cq}{enumerate}{2}
\setlist[cq,1]{
  label=\textbf{CQ\arabic*},
  ref=CQ\arabic*,
  leftmargin=*,
}
\setlist[cq,2]{
  label=\textbf{CQ\arabic{cqi}.\alph*},
  ref=CQ\arabic{cqi}.\alph*,
  leftmargin=2em,
}
\crefname{cqi}{CQ}{CQs}
\crefname{cqii}{CQ}{CQs}
\AtBeginEnvironment{cq}{\Needspace{3\baselineskip}}

\newlist{nf}{enumerate}{2}
\setlist[nf,1]{
  label=\textbf{NF\arabic*},
  ref=NF\arabic*,
  leftmargin=*,
}
\setlist[nf,2]{
  label=\textbf{NF\arabic{nfi}.\alph*},
  ref=NF\arabic{nfi}.\alph*,
  leftmargin=2em,
}
\crefname{nfi}{NF}{NFs}
\crefname{nfii}{NF}{NFs}
\AtBeginEnvironment{nf}{\Needspace{3\baselineskip}}

\usepackage{color}

\usepackage{array}
\usepackage{textcomp}
\usepackage{stfloats}
\usepackage{url}
\usepackage{verbatim}
\usepackage{cite}
\usepackage{booktabs}
\usepackage{enumitem}
\usepackage{mathtools}
\usepackage{mathrsfs}
\usepackage{cancel}

\usepackage{listings}

\lstdefinestyle{sparql}{
    basicstyle=\ttfamily\footnotesize,
    breakatwhitespace=false,         
    breaklines=true,                 
    captionpos=b,                    
    keepspaces=true,                 
    showspaces=false,                
    showstringspaces=false,
    showtabs=false,                  
    tabsize=2,
    numbers=left,
    xleftmargin=3em,
    framexleftmargin=2.5em
}
\lstdefinestyle{turtle}{
    basicstyle=\ttfamily\footnotesize,
    breakatwhitespace=false,         
    breaklines=true,                 
    captionpos=b,                    
    keepspaces=true,                 
    showspaces=false,                
    showstringspaces=false,
    showtabs=false,                  
    tabsize=2
}

\newcommand{\sidenote}[1]{}
\newcommand{\future}[1]{\textcolor{blue}{#1}}

\makeatletter
\newcommand{\mypar}{%
  \@ifstar{\mypar@star}{\mypar@nostar}%
}
\newcommand{\mypar@star}[1]{\textbf{#1}}
\newcommand{\mypar@nostar}[1]{\textbf{#1.}}
\makeatother

\newcommand{\qmark}{?{\!}}

\newcommand{\xor}{\oplus}

\newcommand{\component}{\mathcal{C}}

\newcommand{\resource}{\mathcal{R}}
\newcommand{\expectation}{\mathscr{E}}
\newcommand{\fault}{\mathscr{F}}
\newcommand{\dee}{\mathscr{D}}

\newcommand{\capability}{\mathscr{C}}
\newcommand{\ointerface}{\mathcal{O}}
\newcommand{\iinterface}{\mathcal{I}}

\newcommand{\channel}{\vec{\mathcal{C}}}

\newcommand{\function}{\mathbb{F}}

\newcommand{\fdep}{\dee^A}
\newcommand{\fdepb}{\dee^B}
\newcommand{\fdepc}{\dee^C}
\newcommand{\fdgnamefull}{{\text{Capability Interaction Graph}}} 
\newcommand{\fdgname}{{\text{CIG}}} 
\newcommand{\fdgnamearticle}{{\text{a}}} 
\newcommand{\fdg}{\mathcal{G}}
\newcommand{\fdginstance}{\Gamma}
\newcommand{\faultcreation}{\lightning}

\newcommand{\powsymbol}{\mathscr{P}}
\newcommand{\pow}[1]{\powsymbol(#1)}
\newcommand{\preq}{\psi}

\newcommand{\id}{\iota}

\newcommand{\faultinstance}[1]{\hbox{\sout{$#1$}}}
\newcommand{\channelcons}{\mathrm{cons}}
\newcommand{\channelprod}{\mathrm{prod}}

\newcommand{\quaindcomp}{\varpi}
\newcommand{\quaindchannel}{\varsigma}

\newcommand{\providedby}{\pi}
\newcommand{\provides}{\pi}

\newcommand{\capid}{\id}
\newcommand{\faultid}{\id}
\newcommand{\compid}{\id}
\newcommand{\chanid}{\id}
\newcommand{\interfaceid}{\id}

\newcommand{\mytext}[1]{\textrm{#1}}
\newcommand{\ufo}[1]{\mytext{#1}} 

\newcommand{\err}{\varepsilon}
\newcommand{\consdv}{\delta}

\begin{document}

\title{Ontology-Grounded {\fdgnamefull}s: From Knowledge Graphs to Fault Trees}

\author{Manzi Aimé Ntagengerwa,
Georgiana Caltais,
Mariëlle Stoelinga, University of Twente, The Netherlands}

\maketitle

\begin{abstract}

The development of Cyber-Physical Systems (CPSs) is inherently multidisciplinary, involving expertise from domains such as software engineering, electrical engineering, and mechatronics.
throughout the lifecycle of the system, from design to deployment. 
Ensuring system reliability in Cyber-Physical Systems (CPSs)
requires the identification and analysis of potential failures and their cascading effects. However, reliability modeling remains a challenging and error-prone activity, as it often depends on tacit expert knowledge, incomplete documentation of failure modes, and limited consideration of interactions between subsystems.

To address these challenges, this paper introduce the {\fdgnamefull} ({\fdgname}), an ontology-driven representation of CPS architectures grounded in the Unified Foundational Ontology (UFO). Due to its graph-based structure, {\fdgnamearticle} {\fdgname} is naturally represented as a knowledge graph (KG), enabling the explicit capture of functional dependencies and system semantics.

Building upon this representation, we propose an automated synthesis algorithm for generating Fault Trees (FTs) directly from {\fdgname}s encoded as knowledge graphs. Fault Tree Analysis provides an effective mechanism for evaluating critical failure properties, including failure propagation paths and minimal cut set sets. 
Our approach reduces this complexity by leveraging  {\fdgname}s and knowledge graphs. We  provide a common semantic representation across engineering domains and support the automated generation of reliability models.

\end{abstract}

\section{Introduction}

\newcommand{\myparagraph}[1]{\medskip {\it #1}}

Cyber-Physical Systems (CPSs) are engineered through the interaction of multiple disciplines, including software engineering, electrical engineering, mechanics, and control. During early design stages, engineers must reason about system reliability despite incomplete knowledge, heterogeneous models, and evolving architectures. Yet many reliability analyses are still performed only after substantial design decisions have already been fixed, when changes become costly and difficult to implement\cite{incose2023incose}.

\paragraph*{Fault tree analysis and knowledge graphs}
Fault Tree Analysis (FTA) is one of the most widely used techniques for analyzing system reliability and safety. Fault trees (FTs) provide a structured representation of how lower-level faults combine into system-level failures, enabling analyses such as minimal cut sets, failure propagation paths, and criticality assessment. Despite the maturity of FTA algorithms, constructing fault trees remains largely manual. In practice, FT development requires extensive coordination between domain experts and depends heavily on tacit engineering knowledge. As a consequence, fault trees are expensive to maintain, difficult to integrate across subsystem boundaries, and often unavailable during the early stages of CPS design.

Model-Based Systems Engineering (MBSE) and knowledge graphs (KGs) offer an opportunity to address this limitation. Modern CPS development already produces large amounts of structured architectural information describing components, interfaces, and interactions. Knowledge graphs are particularly attractive because they provide a unified semantic representation capable of integrating heterogeneous engineering artifacts under a common ontology. However, existing approaches still lack a principled intermediate representation that connects architectural knowledge to fault propagation semantics in a formally grounded way.

\paragraph*{{\fdgnamefull}s} This paper introduces the {\fdgnamefull} ({\fdgname}), a novel formal model for representing functional dependencies in CPS architectures. The {\fdgname} is the central contribution of this work. It provides an ontologically grounded representation of components, capabilities, channels, interfaces, functions, faults, and errors, together with the dependency structure that governs fault propagation. Unlike traditional FT models, {\fdgname}s explicitly capture the distinction between:

\begin{itemize}
    \item Faults and errors,
    \item Events and situations, and
    \item Fault creation and fault activation.
\end{itemize}

These distinctions are essential for accurately modeling how failures emerge and propagate in cyber-physical architectures, yet they are largely absent from conventional fault tree formalisms.

The {\fdgname} serves as an intermediate semantic layer between knowledge graphs and synthesized fault trees. Starting from a minimally structured CPS architecture represented in a KG, we automatically construct {\fdgnamearticle} {\fdgname} that captures the system’s functional dependency structure. Fault trees are then synthesized directly from the {\fdgname} semantics. This approach enables the automated generation of meaningful FT models under minimal modeling assumptions while preserving traceability to the underlying system architecture.

A key aspect of our approach is its ontological foundation. The {\fdgname} model and its supporting ontology are grounded in the Unified Foundational Ontology (UFO), providing precise semantics for notions such as capability manifestation, participation, causality, and error propagation. At the same time, the ontology remains intentionally lightweight so that it can absorb information from heterogeneous engineering models and remain applicable in early design stages where detailed behavioral information may not yet exist.

Thus, this paper connects foundational ontology, knowledge-graph reasoning, and fault-tree analysis into a unified framework for modeling and analyzing failures in Cyber-Physical Systems.







\autoref{fig:methodology} summarizes the information flow. Systems architecture descriptions, conforming to the proposed ontology, are represented as a knowledge graph. These KGs are then queried using SPARQL to obtain {\fdgnamearticle} {\fdgname}, and automatically transformed into fault trees suitable for standard FTA techniques.

\begin{figure}
    \centering
    \includegraphics[width=\linewidth]{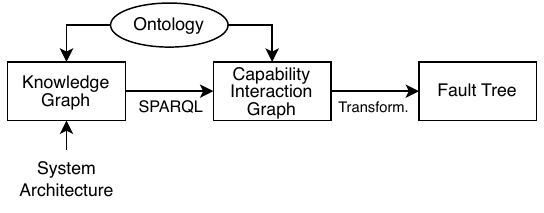}
    \caption{Proposed pipeline for the automatic synthesis of fault trees from a knowledge graph.}
    \label{fig:methodology}
\end{figure}

\subsection{Overview of the Approach}

To derive fault trees from architectural knowledge represented as a knowledge graph, we proceed through the following  transformations. Each introduces a new level of abstraction, adding the information required for reliability analysis.

\begin{enumerate}
    \item \textbf{Representation of the system architecture.}
    We begin with a description of the system architecture consisting of components, channels, and their interactions. The ontology provides explicit semantics for concepts such as components, capabilities, functions, interfaces, faults, and errors.

    At this stage, the model describes {\it what the system is} and {\it how its parts are connected}, but it does not yet describe failure propagation.

    This information is later represented as a knowledge graph conforming to an ontology grounded in UFO. 

    \item \textbf{Introduce failure semantics.}

    We then define how faults, fault activations, and errors are represented within the {\fdgname}. Faults are associated with capabilities, while errors arise when expected capabilities fail to manifest. Because capabilities depend on resources produced elsewhere in the graph, an error in one location may propagate to dependent capabilities.

    This step transforms the {\fdgname} from a purely architectural model into a model that explains {\it how failures can spread through the system}.

    \item \textbf{Extract the functional dependency structure.}

    The {\fdgname} identifies the capabilities present in the system and the dependencies between them. In particular, it captures how capabilities consume and produce resources, and how resources are exchanged through channels and interfaces.

    The {\fdgname} therefore makes explicit {\it which functions depend on which other functions}.

    \item \textbf{Characterize the causes of errors.}
    Using the failure semantics, we derive an error structure function for each capability. This function recursively describes the conditions under which the capability becomes erroneous. A capability may fail because it is directly affected by a fault, or because one of its required resources are unavailable due to failures elsewhere in the {\fdgname}.

    The resulting structure provides a complete causal explanation of the capability's failure.

    \item \textbf{{\fdgname}s from KGs.}
    Now that the structure and semantics of {\fdgname}s have been defined, we show how well-formed KGs (those that conform to the ontology) can be queried using SPARQL to construct {\fdgname}s.
    
    \item \textbf{Synthesize a fault tree.}
    Finally, we translate the {\fdgname} into a fault tree using its error semantics. The capability under analysis becomes the top-level event, direct and indirect causes are recursively expanded, and logical relationships are represented by fault-tree gates.

    Through induction we show that the resulting fault tree is semantically grounded in the original architecture and can be readily analyzed using standard Fault Tree Analysis techniques such as minimal cut-set computation and criticality analysis.
\end{enumerate}

In summary, the development of the paper follows the chain 
$
\text{{\fdgname}}
\;\rightarrow\;
\text{Failure Semantics}
\;\rightarrow\;
\text{Error Structure Function}
\;\rightarrow\;
\text{KG}
\;\rightarrow\;
\text{FT}.
$

Each step enriches the previous representation with additional semantics, ultimately transforming architectural knowledge into a form suitable for reliability analysis.

\paragraph{Running example}
\autoref{fig:print_head} shows a subsystem of a production printer. There are three components: the power supply unit (PSU), the ink reservoir, and the print head. The PSU has the capability to provide electricity, the ink reservoir can provide ink, and the print head can deposit ink and self-clean. All of these capabilities, except for the self-cleaning of the print head, are manifested as \textit{functions} in the presented situation. The output of the PSU and the ink reservoir reach the print head through a wire and a tube respectively.

\begin{figure}
    \centering
    \includegraphics[width=\linewidth]{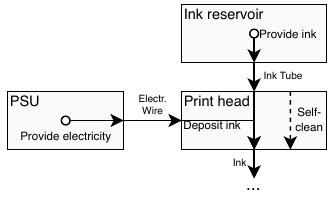}
    \caption{A simplified diagram of a production print head, illustrating its components, capabilities and interfaces.}
    \label{fig:print_head}
\end{figure}

\paragraph{Organization of the sections}
In \autoref{sec:background}, we introduce explain the concepts of knowledge graphs, formal ontology and fault trees.
In \autoref{sec:ontology}, we propose a novel ontology that describes the concepts of components, capabilities and component interactions in greater detail.
In \autoref{sec:fdg_formalization} we combine these ontological concepts to form the formal {\fdgnamefull} ({\fdgname}).
\hyperref[sec:dynamics_and_events]{Section~\ref*{sec:dynamics_and_events}} describes the dynamic evolution of {\fdgname}s over time through events, and in \autoref{sec:fault_creation_error_fault_activation} we specifically consider fault creation and activation, and error behavior.
\hyperref[sec:fdep]{Section~\ref*{sec:fdep}} formalizes the notion of functional dependency.
We introduce the mechanism of error propagation in \autoref{sec:error_propagation}, and in \autoref{sec:structure_function} we give the formal interpretation of error semantics in {\fdgname}s.
In \autoref{sec:kg_query_fdg_construction} we show how {\fdgname}s are constructed from well-formed KGs.
In \autoref{sec:transformation} we give the semantic mapping between {\fdgname}s and FTs.
In \autoref{sec:transformation_properties}, we mathematically prove completeness of the {\fdgname}-FT transformation w.r.t. preservation of error semantics.
Finally, \autoref{sec:related_work} discusses related work.


\section{Background}\label{sec:background}

\begin{figure}[ht!]
\centering
\begin{minipage}[t]{.23\textwidth}
    \includegraphics[width=\linewidth]{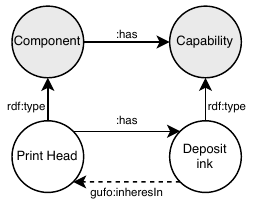}
    \caption{A small KG that instantiates the ontology based on the running example.}
    \label{fig:kg_example}
\end{minipage}{\ }
\begin{minipage}[t]{.23\textwidth}
    \includegraphics[width=\linewidth]{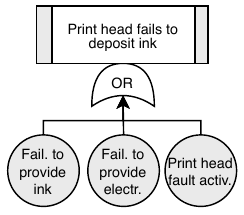}
    \caption{A simple FT that models the failure modes of the running example.}
    \label{fig:ft_example}
\end{minipage}
\end{figure}

\subsection{Knowledge Graphs}\label{sec:background_kg}
Knowledge graphs (KGs) are a flexible knowledge-base framework that allow for the integration of knowledge from various sources, and the reasoning over known facts to infer new knowledge. KGs typically contain knowledge over a particular domain. The concepts in this domain, and how they relate to each other, can be encoded explicitly with an ontology. Rule-based fact inference then reasons over the instances of the ontology. A prominent example of a KG is Wikidata\cite{Vrandecic2014}, the knowledge source of the online encyclopedia Wikipedia.

\autoref{fig:kg_example} shows an example of a KG,
formalizing a fragment of the diagram shown in \autoref{fig:print_head}. The gray elements show the relevant parts of the ontology. White elements are instances, and the black lines between the white elements are factual relationships between instances.

We use gUFO~\cite{almeida2026gufogentlefoundationalontology}, a lightweight OWL implementation of the Unified Foundational Ontology \cite{ufo_a}, to represent UFO concepts within RDF knowledge graphs. gUFO provides machine-readable semantics for notions such as objects, events, dispositions, and relators, enabling ontology-based reasoning over CPS architectures.

The dashed \texttt{gufo:inheresIn} association between the PSU and the capability to \textit{produce electricity} is one we can infer: 
Imagine an inference rule that states
\begin{multline}
\forall \alpha, f. {\ } \mytext{Component}(\alpha), \mytext{Capability}(f) \implies\\
\mytext{has}(\alpha, f) \iff \mytext{inheresIn}(f, \alpha)
\end{multline}
I.e. if a component has a capability, then we say that the capability inheres in that component and vice versa.
Applying this rule on the KG in \autoref{fig:kg_example}, we insert a new fact $(f, \texttt{gufo:inheresIn}, \alpha)$ that states \textit{``The capability to produce electricity inheres in the PSU component''}.

In this work, we use a GraphDB instance as our KG for its gUFO support. GraphDB is an open standard and it has a straight-forward connection to predicate logic. GraphDB implements a Resource Description Framework (RDF) triple-store\cite{w3c_rdf}, which is an open W3C standard. All RDF triples are of the form $(\textit{subject}, \textit{predicate}, \textit{object})$, where each entity is identified by its uniform resource identifier (URI)\cite{uri_rfc}. All facts in the KG have the form $xPy$ and can be described in predicate logic as a binary predicate $P(x, y)$. RDF triple-stores are queried using SPARQL\cite{w3c_sparql}.

SPARQL (SPARQL Protocol and RDF Query Language) is the standard query language for retrieving and manipulating data stored in RDF graphs. SPARQL enables users to express graph-pattern queries to match triples. The language supports core operations such as selection, filtering, aggregation, optional matching, unions, and federated queries across distributed data sources. Through these capabilities, SPARQL allows both flexible exploration of heterogeneous linked data and precise extraction of structured relationships.

\subsection{Fault Trees and Fault Tree Analysis}
Fault trees (FTs) are risk models that enable fault tree analysis (FTA)\cite{vesely1981fault}. There exist many variations of FTs\cite{Ruijters2015}. In this work, we focus on the well-known Static FT (formally defined in \autoref{sec:ft_def}) that allows, among other things, for the modeling of common cause failures\cite{jones2012common}. This makes FTs directed, acyclic graphs (DAGs) and not strictly trees, contrary to what their name suggests.

FTs model all foreseen failure modes that result in the particular top-level event (TLE) of interest. The leaves of the trees, so-called basic events (BEs) are stochastically independent events that are not decomposed further. The conditions under which these BEs propagate to result in the TLE are modeled by use of \texttt{AND} / \texttt{OR} logic gates.

\autoref{fig:ft_example} shows an example of a simple fault tree based on the diagram in \autoref{fig:print_head}. The TLE is a situation in which the print head fails to deposit ink. The failure of the PSU to provide electricity, the failure of the ink reservoir to provide ink, and the fault activation of the print head component are the basic events. In natural language, the FT expresses that ``The print head fails to deposit ink when it is faulty, or when either the PSU or the ink reservoir fail to deliver.''

Fault Tree Analysis (FTA) is a widely used deductive safety and reliability engineering technique that calculates metrics over FTs. FTA enables qualitative reasoning about system vulnerabilities by identifying minimal cut sets: the smallest combinations of basic events sufficient to cause the top event. Qualitative analysis also supports ranking critical contributors, revealing single points of failure, exposing common-cause dependencies, and improving understanding of fault propagation paths. Because of FTA's traceable logic, it remains valuable during system design, hazard assessment, and certification activities.

In addition to these structural insights, FTA also supports quantitative evaluation when failure data are available. Probabilities or failure rates can be assigned to basic events and propagated through the tree to estimate the likelihood or frequency of the top event, as well as importance measures for individual contributors. However, in many early design and model-based engineering contexts, qualitative metrics are especially useful because they can be derived without precise statistical data and still guide architectural decisions, redundancy strategies, and risk reduction efforts.

\subsection{The Unified Foundational Ontology (UFO)}

The Unified Foundational Ontology (UFO)~\cite{ufo_a} is a foundational ontology developed to provide precise semantic distinctions for conceptual modeling. UFO has been widely adopted in conceptual engineering and enterprise modeling because it offers a rigorous treatment of objects, events, dispositions and relations. The ontology is accompanied by the modeling language OntoUML~\cite{ufo_a}, which provides graphical constructs grounded in UFO semantics.

UFO distinguishes between endurants and perdurants. {\it Endurants} are entities that are wholly present whenever they exist, such as physical components, channels, or interfaces. {\it Perdurants} are entities that unfold in time and possess temporal parts, such as functions or errors. In our setting, CPS components, channels and capabilities are modeled as endurants, whereas functions, fault activations, and errors are modeled as events (i.e., perdurants).

A central notion in UFO is that of moments. {\it Moments} are dependent entities that cannot exist without another entity, called their bearer. In particular, intrinsic moments inhere in exactly one bearer. An intuitive example is a headache -- a person's headache cannot exist without the particular person (and only \textit{that} person) in whom it inheres~\cite{ufo_b}. Formally, UFO provides the relation
\[
\mathit{inheresIn}(m,o)
\]
to denote that moment $m$ inheres in object $o$. In this work, capabilities and faults are modeled as intrinsic moments that inhere in components or channels.

UFO further characterizes dispositions, which are moments that may manifest under certain triggering conditions. A capability such as ``providing electricity'' is therefore not itself an event, but a disposition that can be manifested through an event. UFO relates manifestations and dispositions through the relation
\[
\mathit{manifests}(e,d)
\]
meaning that event $e$ is the manifestation of disposition $d$. In our model, functions are manifestations of capabilities.

Another important distinction concerns situations. A situation represents a snapshot of the world at a particular time point.
Events transform one situation into another.
In this work, we introduce a syntax that facilitates the description of the preconditions and postconditions of events:
\[
\Gamma \xrightarrow{e} \Gamma'
\]
where event $e$ begins in situation $\Gamma$ and brings about situation $\Gamma'$. This treatment allows us to formalize fault creation, fault activation, and error propagation as temporally ordered events.

UFO also provides a formal treatment of causality. An event may bring about a situation that subsequently triggers another event, thereby establishing a causal chain. This distinction is particularly important in reliability modeling because it separates:
\begin{itemize}
    \item the creation of a fault,
    \item the activation of a fault under operational conditions,
    \item the resulting error state.
\end{itemize}
Traditional fault tree approaches often conflate these notions, whereas UFO enables them to be represented explicitly.

Finally, UFO distinguishes between objects, roles, relators, and events. We exploit these distinctions in the proposed ontology:
\begin{itemize}
    \item components and channels are modeled as objects,
    \item capabilities and faults are modeled as dispositions,
    \item producer and consumer are modeled as roles,
    \item interfaces are modeled as relators mediating interactions, and
    \item functions, errors, and fault activations are modeled as events.
\end{itemize}

Grounding the proposed {\fdgname} model in UFO provides precise semantics for participation, manifestation, dependency, and causality, while remaining lightweight enough for early-stage CPS architecture modeling.

\section{The Ontological Foundations of {\fdgnamefull}s}\label{sec:ontology}



We use OntoUML to formally visualize the elements of {\fdgname}s in \autoref{fig:ontology}.
We argue that this conceptualization is small enough to capture knowledge in the earliest stages of CPS design, yet generic enough to capture concepts across modeling languages and engineering domains, and sufficiently rich for the synthesis of meaningful fault trees.

Note that not all boxes in this diagram need to be provided by a user instantiating {\fdgnamearticle} {\fdgname}. The only input we strictly require (for the purpose of any \textit{meaningful} fault tree synthesis) is a set of components, channels, and a set of labeled edges that connect them. All the other concepts can implicitly be instantiated under a fixed set of assumptions.


\subsection{Formal notation}
Central to the {\fdgname} model is UFO's notion of situations. In essence, a situation describes a system under study at a single point in time. By convention, we here denote a situation $\fdginstance$.







Furthermore, we use some standard mathematical notations for our formal definitions. We introduce them here for convenience. Let $\mytext{proj}_i$ be the function that maps a tuple to its $i^{\mytext{th}}$ element. I.e. let $T$ be a tuple, such that
\[
T = \left( x_1, \dots, x_n \right)
\]
Then
\[
\mytext{proj}_i {\ } T = x_i
\]

Furthermore, we use the common FOL syntactic shorthand for quantifying over members of a set:
\begin{align*}
    \forall x \in S. {\ } \phi (x) &\equiv \forall x. \bigl( S(x) \implies \phi (x) \bigr)\\
    \exists x \in S. {\ } \phi (x) &\equiv \exists x. \bigl( S(x) \wedge \phi (x) \bigr)
\end{align*}

Lastly, let $\powsymbol$ denote the finite powerset (as we only work with finite sets).

\subsection{Concepts and formal definitions}\label{sec:reference_ontology}

Endurants (such as objects) are \textit{present in} a situation. E.g. the apple that fell on Sir Isaac Newton's head was present at the moment of impact. That same apple is not present now. We will introduce the elements of {\fdgnamearticle} {\fdgname} in the context of a particular situation. It suffices to know that this means \textit{the elements of the system that are present at that particular time point}.
In this section, we introduce sets of individuals that ``are present''. For instance, we collect components in the set $\component$ and we indicate the situation in which these objects are present by means of a subscript. I.e. the set $\component_\fdginstance$ collects exactly those object which are components and that are present in the situation $\fdginstance$.

In contrast, perdurants (events) \textit{exist throughout} a time interval. E.g. the event of that particular apple falling began in the situation where it had just detached from the stem, and ended in the situation that it reached Sir Newton's cranium (all the while accumulating temporal parts). Section \ref{sec:dynamics_and_events} formalizes this notion in the context of {\fdgname}s.

The bold-print terms introduced below refer to the endurant concepts shown in \autoref{fig:ontology}. We use the running example in \autoref{fig:print_head} to help the reader in building an intuition of these concepts.

\begin{figure*}[h!]
\centering
\includegraphics[width=\linewidth]{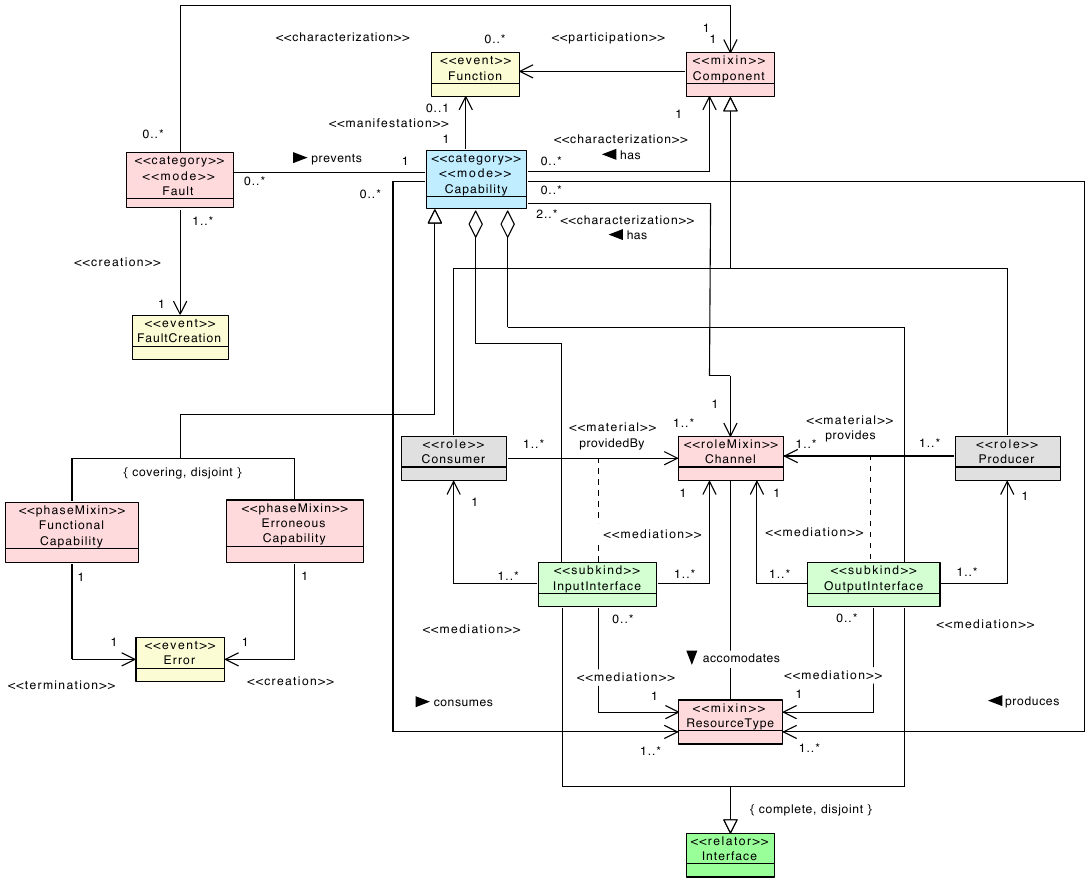}
\caption{Our proposed ontology of CPS architecture, introducing the notions of components, capabilities, functions, channels, etc.}\label{fig:ontology}
\end{figure*}



\textbf{Resource Types} are \textit{higher-order types} that denote the parameter and output types of a function. When instantiating the model, instances of \textit{Resource Type} (e.g. \textit{paper} or \textit{electricity}) are types themselves. To be precise, the instances of resource types are \textit{substantial types} (types whose instances are founded on matter).
The assertion that a type is a resource type is given by the unary predicate $\textit{ResourceType}: \textit{Individual} \rightarrow \{ \top, \bot \}$. This predicate comes from the ontology. Looking at \autoref{fig:print_head}, we find that $\textit{ResourceType}(\texttt{Ink}) = \top$, and $\textit{ResourceType}(\texttt{PSU}) = \bot$.

Resource types generalize any form of energy in the sense of \cite{feynman1965}, such as materials, fluids or signals. This broad definition covers many of the fluids, gases, control signals, etc. that are involved in the design of a CPS architecture.
The only meaningful semantics of resource types are conveyed by their relationships to ontologically richer concepts like capabilities and channels. \autoref{fig:print_head} shows two resource types; \textit{electricity} and \textit{ink}.

\begin{definition}[Resource Types]
    The set of resource types, denoted by the set $\resource = \left\{ r {\ } | {\ } \mytext{ResourceType}(r) \right\}$ represents a higher-order type.
\end{definition}
Elements of $\resource$ are sortal \textit{types} themselves. We do not consider individuals that instantiate resource types in this work. For instance, $\textit{Page} \in \resource$ is a valid resource type, but any particular sheet of paper (e.g. \textit{page 42 of my copy of ``A Room of One's Own''}) is not.
As a result, resource types are purely symbolic elements in $\resource$, whose symbol \textit{is} the (nominal) identity of the type. I.e., the elements $\mytext{Page} \in \resource$ and $\mytext{Ink} \in \resource$ are distinct because their symbols are distinct. Likewise, the elements $\mytext{Ink} \in \resource_\fdginstance$ and $\mytext{Ink} \in \resource_{\fdginstance'}$ are the same type, because their symbols are identical. For this reason, we generally do not subscript $\resource$ with a situation.

\textbf{Capabilities} are \textit{dispositions} (or more generally; \textit{modes}) that describe the possible behavior of a component. They inhere in exactly \underline{one} component. Dispositions are intrinsic moments that are existentially dependent on their bearer. They are particular, i.e. one capability cannot inhere in two distinct bearers.

A capability can be manifested as a function. We distinguish the \textit{Functional} and the \textit{Erroneous} phases of capabilities. A capability is in the erroneous phase only during the existence of an error over that capability (and otherwise it is in the functional phase). The generalization set of these phases is complete and disjoint (components are either functional or erroneous, but not both).
The assertion that an individual is a capability is given by the unary predicate $\textit{Capability}_\fdginstance: \textit{Individual} \rightarrow \{ \top, \bot \}$. In \autoref{fig:print_head}, we have $\textit{Capability}(\texttt{Provide electricity}) = \top$ and $\textit{Capability}(\texttt{Ink}) = \bot$.

The assertion that a capability is in the functional or erroneous phase is given by $\textit{FunctionalCapability}_\fdginstance: \textit{Individual} \rightarrow \{ \top, \bot \}$ and $\textit{ErroneousCapability}_\fdginstance: \textit{Individual} \rightarrow \{ \top, \bot \}$ respectively.


In \autoref{fig:print_head}, the capabilities are to \textit{provide electricity}, to \textit{provide ink}, and to \textit{deposit ink} and \textit{self-clean}. They inhere in the PSU, ink reservoir and print head respectively. The capability to \textit{self-clean} is not manifested in the presented situation.

\begin{definition}[Capabilities]
    Capabilities $\capability_\fdginstance \subseteq \mytext{CapabilityId}_\fdginstance \times \pow{\resource} \times \pow{\resource}$ is a set of tuples, where the first element ranges over $\mytext{CapabilityId}_\fdginstance$.
\end{definition}
We define the bijective function $\capid_\fdginstance$ that maps the elements in $\capability_\fdginstance$ to an individual in the ontology.
\[
\capid_\fdginstance : \capability_\fdginstance \rightarrow \mytext{CapabilityId}_\fdginstance
\]
 Its inverse is denoted $\capid_\fdginstance^{-1}$.

The second and third element of the tuple respectively denote the resource types its bearing component consumes and produces when the capability is manifested.

Capabilities inhere in a unique object (\textbf{D1}, \autoref{tab:ufo_b_axioms}):
\begin{equation*}
    \forall f \in \capability_\fdginstance. {\ } \ufo{inheresIn}_\fdginstance(\capid_\fdginstance(f), \beta(\capid_\fdginstance(f)))
\end{equation*}

\textbf{Functions} are \textit{atomic events} that are the manifestation of a capability. The bearing component of that capability participates in the function event. UFO-B\cite{ufo_b} defines that the begin point of this manifestation is the end point of the activation of the capability it manifests. The conditions for this activations must therefore be met at this begin point, as well as for the entire duration of the function. Such conditions are given by the capability's \textit{consumes} relation.

A function effects a transformation of resources, describing the behavior of a component. In this work, the mechanics of such transformations are not explicitly modeled.
The assertion that an individual is a function is given by the unary predicate $\textit{Function}_\fdginstance: \textit{Individual} \rightarrow \{ \top, \bot \}$.

\begin{definition}[Functions]\label{def:functions}
    The set $\function_\fdginstance = \mytext{FunctionId}_\fdginstance$ is a set of function identifiers. Elements of this set denote which capabilities are currently manifested as a function event.
    
\end{definition}

    The set $\function_\fdginstance$ is exactly the set of symbols that represent a function event in $\fdginstance$:
    \begin{equation*}
        \function_\fdginstance = \left\{ \bar{f} {\ } | {\ } \mytext{Function}_\fdginstance(\bar{f}) \right\}
    \end{equation*}
    
    The set $\mytext{FunctionId}_\fdginstance$ is derived from the set of capability identifiers $\textit{CapabilityId}_\fdginstance$ by a partial bijection
    \[
    \varrho_\fdginstance : \mytext{CapabilityId}_\fdginstance \rightarrow \function_\fdginstance
    \]
    
    \noindent and
    \begin{multline}
        \text{dom} \varrho_\fdginstance =\\
        \{ c {\ } | {\ } \mytext{Capability}_\fdginstance(c) \wedge \exists e : \mytext{Event}. {\ } \ufo{manifests}(e, c) \}
    \end{multline}
    \noindent and
    \begin{multline}
    \forall f \in \text{dom} \varrho_\fdginstance. \forall \bar{f} \in \function_\fdginstance.\\
    \varrho_\fdginstance(f) = \bar{f} \iff \ufo{manifests}(\bar{f}, f)
    \end{multline}

    Where $\text{dom} \varrho_\fdginstance$ denotes the domain of $\varrho_\fdginstance$. Furthermore, since $\varrho_\fdginstance$ is a partial bijection, \textit{if} a function manifests a capability, it is the \textit{unique} manifestation of that capability in $\fdginstance$.

    Let $\varrho^{-1}_\fdginstance : \textit{FunctionId}_\fdginstance \rightarrow \textit{CapabilityId}_\fdginstance$ be the inverse function of $\varrho$ that returns the unique capability that a given function manifests (\textbf{D2}, \autoref{tab:ufo_b_axioms}):
    \begin{equation*}
        \forall \bar{f} \in \function_\fdginstance. {\ } \ufo{manifests}(\bar{f}, \varrho^{-1}_\fdginstance(\bar{f}))
    \end{equation*}

    Note that $\varrho^{-1}_\fdginstance$ is injective where $\text{dom} \varrho^{-1}_\fdginstance = \function_\fdginstance$.

    In our notation, we use symbol names to indicate the manifestation relationship between a capability and a function. Bar-notation indicates that $\bar{f}$ is a function that manifests the capability $f$.
    
    

    

\textbf{Faults} are \textit{modes} that describe the possible prevention of the manifestation of a component's capabilities. They inhere in exactly \underline{one} component.
The assertion that an individual is a fault is given by the unary predicate $\textit{Fault}: \textit{Individual} \rightarrow \{ \top, \bot \}$.

\begin{definition}[Faults]\label{def:faults}
Faults $\fault_\fdginstance \subseteq \mytext{FaultId}_\fdginstance \times \pow{\mytext{CapabilityId}_\fdginstance}$ are modes that inhere in a component and prevent the manifestation of (some of) that component's capabilities.
\end{definition}


We denote by $\faultid_\fdginstance : \fault_\fdginstance \rightarrow \textit{FaultId}_\fdginstance$ the bijective function that maps the elements in $\fault_\fdginstance$ to an individual in the ontology. Its inverse is denoted $\faultid_\fdginstance^{-1}$.

A fault prevents the manifestation of certain capabilities:
\begin{equation*}
    \forall {\,} \fdginstance : \fdg. \forall (\mytext{id}_{\faultinstance{f}}, C) \in \fault_\fdginstance. \forall \mytext{id}_f \in C. {\ } \mytext{id}_f \not \in \text{dom} \varrho_\fdginstance
\end{equation*}

A fault can only prevent the manifestation of capabilities that inhere in the bearer of that fault:
\begin{equation}\label{eq:faults_prevent_manifestation}
\forall {\,} \fdginstance : \fdg. \forall (\mytext{id}_{\faultinstance{f}}, C) \in \fault_\fdginstance. \forall f \in C. {\ } \beta(f) = \beta(\mytext{id}_{\faultinstance{f}})
\end{equation}

We may denote a fault with the striked-out symbol of any of the capabilities whose manifestation is prevented by that fault. I.e. $\faultinstance{f} \in \fault_\fdginstance$ denotes that the manifestation of capability $f \in \capability_\fdginstance$ is prevented.

The \textbf{Component} type is a \textit{mixin} that describe physical entities with one or more capabilities. We also distinguish the \textit{Consumer} and \textit{Producer} roles. The generalization set of these roles if not disjoint and not complete. These roles are respectively brought about by a material \textit{providedBy} and \textit{provides} relation to a channel.
The assertion that an individual is a component is given by the unary predicate $\textit{Component}: \textit{Individual} \rightarrow \{ \top, \bot \}$. For instance, in \autoref{fig:print_head}, we find that $\textit{Component}(\texttt{PSU}) = \top$ and $\textit{Component}(\texttt{Provide electricity}) = \bot$.

In this work, we do not make an ontological commitment to the parthood of components or the definition of systems. Fault tree synthesis and analysis does not necessarily depend on the mereology of systems, and to keep the ontology minimal we do not involve it.

We adopt an atomistic worldview\cite{varzi2021mereology} and define all components to be mereologically atomic, i.e. components do not have proper parts. A common refrain is \textit{``but what is a component to one, is a system to another''}. This is not necessarily incompatible with our perspective.
In fact, it is one of the key principles that allows for the modeling of an entire ``print head'' as a single component, constrained only by its interfaces to other other components.

In \autoref{fig:print_head}, the components are the \textit{PSU}, the \textit{ink reservoir}, and the \textit{print head}.

\begin{definition}[Components]\label{def:components}
    Components are tuples of the form
    \[
    \component_\fdginstance \subseteq \mytext{ComponentId}_\fdginstance \times \pow{\capability_\fdginstance} \times \pow{\function_\fdginstance} \times \pow{\fault_\fdginstance}
    \]
    where $\mytext{ComponentId}_\fdginstance$ is a set of component identifiers.
\end{definition}

\noindent Let $\alpha = (\mytext{id}_\alpha, \capability^\alpha_\fdginstance, \function^\alpha_\fdginstance, \fault^\alpha_\fdginstance)$ be a component. Then:
\begin{itemize}
    \item[] $\capability^\alpha_\fdginstance \in \pow{\capability_\fdginstance}$ is the set of capabilities that inhere in $\alpha$, and;
    \item[] $\function^\alpha_\fdginstance \in \pow{\function_\fdginstance}$ is the set of functions that manifest a capability of $\alpha$, and;
    \item[] $\fault^\alpha_\fdginstance \in \pow{\fault_\fdginstance}$ is the set of faults that inhere in $\alpha$.
\end{itemize}

\noindent We define the bijective function
\[
\compid_\fdginstance : \component_\fdginstance \rightarrow \mytext{ComponentId}_\fdginstance
\]
that maps components to an instance of $\textit{Component}$ in the situation $\fdginstance$, and that preserves the mixed identity principles provided by the specializations of the $\textit{Component}$ mixin across different situations.
Its inverse is denoted $\compid_\fdginstance^{-1}$.

We denote by $\capability^\alpha_\fdginstance$ the set of capabilities that inhere in the component $\alpha \in \component_\fdginstance$ in situation $\fdginstance$:
\begin{equation*}
    \capability^\alpha_\fdginstance = \left\{ f {\ } | {\ } f \in \capability_\fdginstance. {\ } \compid_\fdginstance(\alpha) = \beta(\capid_\fdginstance(f)) \right\}
\end{equation*}

We denote by $\function^\alpha_\fdginstance$ the set of functions that manifest a capability of the component $\alpha \in \component_\fdginstance$ in the situation $\fdginstance$:
\begin{equation*}
    \function^\alpha_\fdginstance = \left\{ \bar{f} {\ } | {\ } \bar{f} \in \function_\fdginstance. {\ } \compid_\fdginstance(\alpha) = \beta_\fdginstance(\varrho^{-1}_\fdginstance(\bar{f})) \right\}
\end{equation*}

We denote by $\fault^\alpha_\fdginstance \subseteq \fault_\fdginstance$ the set of faults that inhere in the component $\alpha \in \component_\fdginstance$ in the situation $\fdginstance$:
\begin{equation*}
    \fault^\alpha_\fdginstance = \left\{ \faultinstance{f} {\ } | {\ } \compid_\fdginstance(\alpha) = \beta_\fdginstance(\faultinstance{f}) \right\}
\end{equation*}

\textbf{Channels} are \textit{role mixins} that interface with a component. The channel role is grounded in the association to a resource type that the channel is said to accommodate. Channels can transport only the resource types they accommodate. The electrical wire that connects the \texttt{PSU} and the \texttt{Print head} in \autoref{fig:print_head} is an example of a channel.
The transport of each resource type is made up of two \textit{distinct} capabilities: a pure consumption capability to \textit{consume} that resource, and a pure production capability to \textit{produce} that resource.
Since we only discuss resource types in this work, we do not associate a quantity with this capacity. The predicate $\textit{Channel}: \textit{Individual} \rightarrow \{ \top, \bot \}$ asserts whether an individual is a channel.

\begin{definition}[Channels]\label{def:channel}
    Channels $\channel_\fdginstance \subseteq \mytext{ChannelId}_\fdginstance \times \pow{\resource} \times \pow{\capability_\fdginstance} \times \pow{\function_\fdginstance} \times \pow{\fault_\fdginstance}$ are tuples that relate an individual channel to the resource types it accommodates, where $\mytext{ChannelId}_\fdginstance$ is a set of channel identifiers.
\end{definition}
Let the function $\chanid_\fdginstance$ denote the bijective function that maps the elements in $\channel$ to an individual in the ontology.
\[
\chanid_\fdginstance : \channel_\fdginstance \rightarrow \mytext{ChannelId}_\fdginstance
\]
Its inverse is denoted $\chanid_\fdginstance^{-1}$.

We denote by $\capability^c_\fdginstance$ the set of capabilities that inhere in the channel $c \in \channel_\fdginstance$ in situation $\fdginstance$:
\begin{equation*}
    \capability^c_\fdginstance = \left\{ f {\ } | {\ } f \in \capability_\fdginstance. {\ } \chanid_\fdginstance(c) = \beta(\capid_\fdginstance(f)) \right\}
\end{equation*}

We denote by $\function^c_\fdginstance$ the set of functions that manifest a capability of the channel $c \in \channel_\fdginstance$ in situation $\fdginstance$:
\begin{equation*}
    \function^c_\fdginstance = \left\{ \bar{f} {\ } | \bar{f} \in \function_\fdginstance. {\ } \capid_\fdginstance(c) = \beta_\fdginstance(\varrho^{-1}_\fdginstance(\bar{f})) \right\}
\end{equation*}

We denote by $\fault^c_\fdginstance \subseteq \fault_\fdginstance$ the set of faults that inhere in the channel $c \in \channel_\fdginstance$ in the situation $\fdginstance$:
\begin{equation*}
    \fault^c_\fdginstance = \left\{ \faultinstance{f} {\ } | {\ } \chanid_\fdginstance(c) = \beta_\fdginstance(\faultinstance{f}) \right\}
\end{equation*}

Let the function $\channelcons_\fdginstance : \channel_\fdginstance \times \resource \rightarrow \capability_\fdginstance$ denote the unique input capability that allows a channel's consumption of a resource type. Let the function $\channelprod_\fdginstance : \channel_\fdginstance \times \resource \rightarrow \capability_\fdginstance$ denote the unique output capability that allows a channel's production of a resource type.

Let $c \in \channel_\fdginstance$ be a channel, and let $r \in \resource$ be a resource type that $c$ accommodates.
Let $h = \channelcons_\fdginstance(c, r)$. Then by definition, $h$ must be unique:
\begin{equation*}
     h \in \capability^c_\fdginstance \wedge \left( \neg \exists h' \in \capability^c_\fdginstance. {\ } h' = \channelcons_\fdginstance(c, r) \wedge h \neq h' \right)
\end{equation*}
Furthermore, $h$ must -- if it exists -- refer to the same capability across situations\sidenote{This is true but may be overkill? \future{In future work, we can change this to allow channels to dynamically switch internal paths :)}}:
\begin{multline*}
    \forall {\,} \fdginstance, \fdginstance' : \fdg. {\ } \left( h = \channelcons_\fdginstance(c, r) \right) \implies\\
    \left( \left( \exists h' \in \capability_{\fdginstance'}. {\ } \capid_\fdginstance(h) = \capid_{\fdginstance'}(h') \right) \implies h = \channelcons_{\fdginstance'}(c, r) \right)
\end{multline*}
The same definitions hold for $k = \channelprod_\fdginstance(c, r)$.

The production and consumption capabilities must actually consume and produce the correct respective resources. Let $c \in \channel_\fdginstance$ be a channel, and let $r \in \resource$ be a resource type that $c$ accommodates in the situation $\fdginstance$. Let $h = \channelcons_\fdginstance(c, r)$ and let $k = \channelprod_\fdginstance(c, r)$. Then:
\begin{equation*}
    \bigwedge_{\mathclap{x \in \left\{ h, k \right\}}}
    \mytext{consumes}_\fdginstance(x, r) \wedge \mytext{produces}_\fdginstance(x, r)
\end{equation*}

Furthermore, this resource type $r$ is unique.
I.e. there exists no $r' \in \resource$ s.t.:
\[
    r' \neq r \wedge \bigvee_{x \in \{ h, k \}} \bigl( \mytext{consumes}_\fdginstance(x, r') \vee \mytext{produces}_\fdginstance(x, r') \bigr)
\]

The consumption and production capabilities in a channel $c \in \channel_\fdginstance$ are distinct. Let $c \in \channel_\fdginstance$ be a channel, and let $r \in \resource$ be a resource type that $c$ accommodates in the situation $\fdginstance$. Then:
\begin{equation*}
    \capid_\fdginstance(\channelcons_\fdginstance(c, r)) \neq \capid_\fdginstance(\channelprod_\fdginstance(c, r))
\end{equation*}

\textbf{Interfaces} are \textit{relators} that describe an interaction between a component and a channel through some resource. They aggregate a component's and a channel's externally dependent modes that are founded by the foundation \textit{an}\footnote{The founding interface is not necessarily \textit{this} interface.} interface, resulting in the material relations \textit{providedBy} (InputInterfaces) and \textit{provides} (OutputInterfaces). I.e., the component $\alpha$ has the role \textit{consumer of resource $r$} iff there exists a channel $c$ such that an input interface $\textit{in}$ materializes a \textit{providedBy} relationship that mediates that resource $r$ through $c$. Similarly, the component $\alpha$ has the role \textit{producer of resource $r$} iff there exists a channel $c$ such that an output interface $\textit{out}$ materializes a \textit{provides} relationship that mediates that resource $r$ through $c$.

We can compose functions such that the output of one component's function serves as the input of another component's function. Such patterns describe components exchanging a resource through a channel. We sometimes call this pattern a \textit{resource exchange}. 


In \autoref{fig:print_head}, there is an exchange of \textit{electricity} between the PSU's \textit{provide electricity} function and the print head's \textit{deposit ink} function. There is also a resource exchange between the ink reservoir's \textit{provide ink} and the print head's \textit{deposit ink}.

\begin{definition}[Output interfaces]\label{def:output_interface}
    An output interface $\ointerface_\fdginstance \subseteq \mytext{ComponentId}_\fdginstance \times \mytext{ChannelId}_\fdginstance \times \resource$ is the relator that the \textit{provides} association between a producer component and a channel is derived from.
\end{definition}
    
An output interface relates a component, a channel and the resource type that the component provides to the channel.

Let $\quaindcomp^o_\fdginstance : \ointerface_\fdginstance \rightarrow \mathcal{P}(\mytext{CapabilityId}_\fdginstance)$ be the function that returns the set of capabilities that make up the component's qua-individual \textit{component-$\alpha$-qua-provider-of-$o$}, such that
\begin{multline*}
    \forall {\,} \fdginstance : \fdg. \forall o \in \ointerface_\fdginstance. \forall \mytext{id}_f \in \quaindcomp^o_\fdginstance(o). {\ } \beta(\mytext{id}_f) = \mytext{proj}_1(o)
\end{multline*}

Let $\quaindchannel^o_\fdginstance : \ointerface_\fdginstance \rightarrow \mytext{CapabilityId}_\fdginstance$ be the function that returns the unique capability that makes up the channels's qua-individual \textit{channel-$c$-qua-consumer-of-$o$}, such that
\begin{multline*}
    \forall {\,} \fdginstance : \fdg. \forall o \in \ointerface_\fdginstance.\\
    \quaindchannel^o_\fdginstance(o) = \capid_\fdginstance(\channelcons(\chanid^{-1}(\mytext{proj}_2(o)), \mytext{proj}_3(o)))
\end{multline*}


The identity of an output interface is derived from the sum of the qua-entities it aggregates, given by
\[
\interfaceid^o_\fdginstance : \ointerface_\fdginstance \rightarrow \mytext{InterfaceId}_\fdginstance
\]
Its inverse is given by $\interfaceid^{o, -1}_\fdginstance$.

Lastly, the output interface relator is existentially dependent on the manifestations of the capabilities that make up its qua-entities:
\begin{multline*}
    \forall {\,} \fdginstance : \fdg. \forall o \in \ointerface_\fdginstance. \forall \mytext{id}_f \in \left( \quaindcomp^o_\fdginstance(o) \cup \left\{ \quaindchannel^o_\fdginstance(o) \right\} \right).\\
    \exists \bar{f} \in \function_\fdginstance. {\ } \varrho_\fdginstance(\mytext{id}_f) = \bar{f}
\end{multline*}
    

\begin{definition}[Input interfaces]\label{def:input_interface}
    An input interface $\iinterface_\fdginstance \subseteq \mytext{ChannelId}_\fdginstance \times \mytext{ComponentId}_\fdginstance \times \resource$ is the relator that the \textit{providedBy} association between a channel and a consumer component is derived from.
\end{definition}

An input interface relates a channel, a component and the resource type that the channel provides to the component.

Let $\quaindcomp^i_\fdginstance : \iinterface_\fdginstance \rightarrow \pow{\mytext{CapabilityId}_\fdginstance}$ be the function that returns the set of capabilities that make up the component's qua-individual \textit{component-$\alpha$-qua-provided-by-$i$}, such that
\begin{multline*}
    \forall {\,} \fdginstance : \fdg. \forall i \in \iinterface_\fdginstance. \forall \mytext{id}_f \in \quaindcomp^i_\fdginstance(i). {\ } \beta(\mytext{id}_f) = \mytext{proj}_2(i)
\end{multline*}

Let $\quaindchannel^i_\fdginstance : \iinterface_\fdginstance \rightarrow \mytext{CapabilityId}_\fdginstance$ be the function that returns the unique capability that makes up the channel's qua-individual \textit{channel-$c$-qua-provider-of-$i$}, such that
\begin{multline*}
    \forall {\,} \fdginstance : \fdg. \forall i \in \iinterface_\fdginstance.\\
    \quaindchannel^i_\fdginstance(i) = \capid_\fdginstance(\channelprod(\chanid^{-1}(\mytext{proj}_1(i)), \mytext{proj}_3(i)))
\end{multline*}


The identity of an input interface is derived from the sum of the qua-entities it aggregates, given by
\[
\interfaceid^i_\fdginstance : \iinterface_\fdginstance \rightarrow \mytext{InterfaceId}_\fdginstance
\]
Its inverse is given by $\interfaceid^{i, -1}_\fdginstance$.

Lastly, like with output interfaces, the input interface relator is existentially dependent on the manifestations of the capabilities that make up its qua-entities:
\begin{multline*}
    \forall {\,} \fdginstance : \fdg. \forall i \in \iinterface_\fdginstance. \forall \mytext{id}_f \in \left( \quaindcomp^i_\fdginstance(i) \cup \left\{ \quaindchannel^i_\fdginstance(i) \right\} \right).\\
    \exists \bar{f} \in \function_\fdginstance. {\ } \varrho_\fdginstance(\mytext{id}_f) = \bar{f}
\end{multline*}

\mypar{Expectations}
Expectations describe what capabilities \textit{should} be manifested as functions. In this work, we do not explore the justification of \textit{why} a particular capability's manifestation should be expected, nor do we explicitly describe \textit{who} has such expectations.
E.g. we may expect the capability \textit{to provide electricity} to be manifested. If it is not -- i.e. if this expectation is not met -- we call this an \textit{error}.

\begin{definition}[Expectations]\label{def:expectation}
    A set of expectations $\expectation_\fdginstance \subseteq \mytext{CapabilityId}_\fdginstance$ is a subset of capability identifiers.
\end{definition}


We say a situation \textit{meets} an expectation if and only if the expected capability is manifested by a function in that situation:
\begin{equation*}
\forall {\,} \fdginstance : \fdg. \forall e \in \expectation_\fdginstance. {\ } \mytext{meets}(\fdginstance, e) \iff e \in \text{dom} \varrho_\fdginstance
\end{equation*}


Note that, given \autoref{def:expectation}, {\fdgnamearticle} {\fdgname} can still meet all expectations when the system ``does more than expected''.

\mypar{On the sets $\capability_\fdginstance$, $\function_\fdginstance$ and $\fault_\fdginstance$}

Capabilities inhere in objects. In this work, the only objects we distinguish are components and channels. Hence, all capabilities must inhere in either a component or a channel. The union of component capabilities and channel capabilities therefore equals the set of capabilities $\capability_\fdginstance$:
\begin{equation}\label{eq:capabilities_partition}
    \forall {\,} \fdginstance : \fdg. {\ } \capability_\fdginstance = \left( \bigcup_{\alpha \in \component_\fdginstance} \capability^\alpha_\fdginstance \right) \cup \left( \bigcup_{c \in \channel_\fdginstance} \capability^c_\fdginstance \right)
\end{equation}

Functions are manifestations of exactly one capability, which in turn inhere in exactly one object. Like with capabilities, the union of component functions and channel functions therefore equals the set of functions $\function_\fdginstance$:
\begin{equation}\label{eq:functions_partition}
    \forall {\,} \fdginstance : \fdg. {\ } \function_\fdginstance = \left( \bigcup_{\alpha \in \component_\fdginstance} \function^\alpha_\fdginstance \right) \cup \left( \bigcup_{c \in \channel_\fdginstance} \function^c_\fdginstance \right)
\end{equation}

Faults inhere in exactly one object. Like with capabilities, the union of component faults and channel faults therefore equals the set of faults $\fault_\fdginstance$:
\begin{equation}\label{eq:faults_partition}
    \forall {\,} \fdginstance : \fdg. {\ } \fault_\fdginstance = \left( \bigcup_{\alpha \in \component_\fdginstance} \fault^\alpha_\fdginstance \right) \cup \left( \bigcup_{c \in \channel_\fdginstance} \fault^c_\fdginstance \right)
\end{equation}

Recall that {\fdgname}s permit components that play the role of a channel. Therefore, the families of sets $\bigcup_{\alpha \in \component_\fdginstance} \capability^\alpha_\fdginstance$, $\bigcup_{c \in \channel_\fdginstance} \capability^c_\fdginstance$, and $\left\{ \bigcup_{\alpha \in \component_\fdginstance} \function^\alpha_\fdginstance, \bigcup_{c \in \channel_\fdginstance} \function^c_\fdginstance \right\}$, and $\left\{ \bigcup_{\alpha \in \component_\fdginstance} \fault^\alpha_\fdginstance, \bigcup_{c \in \channel_\fdginstance} \fault^c_\fdginstance \right\}$ are not \textit{necessarily} partitions of the sets $\capability_\fdginstance$, $\function_\fdginstance$ and $\fault_\fdginstance$ respectively.



\section{{\fdgnamefull}s}\label{sec:fdg_formalization}
In this section we combine the individual concepts of CPS architecture from the previous section into one large model: the {\fdgname}. We use this formalization in the next section, to show how these {\fdgname}s can dynamically change over time.

In this section we also present a graphical notation, turning {\fdgname}s into readable diagrams which will be useful for examples throughout the paper. \autoref{tab:glossary_fdg_endurants} provides a glossary of endurant {\fdgname} elements and functions.


\begin{table*}[]
    \centering
    \begin{tabular}{cl}
        \toprule
        \textbf{Notation} & \textbf{Description} \\
        \midrule
        $\fdginstance : \fdg$ & The factual situation $\fdginstance$ \\
        \midrule
        $\component_\fdginstance , \channel_\fdginstance, \dots$ & The set of components, channels, etc. in the situation $\fdginstance$ \\
        \midrule
        $\chanid_\fdginstance$ & The function that returns the ontological identity of a given {\fdgname} element \\
        \midrule
        $\varrho_\fdginstance (f)$ & The function that returns the Function that manifests a given capability manifests in the situation $\fdginstance$ \\
        \midrule
        $f, \bar{f}, \faultinstance{f}$ & A capability, a Function manifesting that capability, and a fault preventing the manifestation of that capability resp.\\
        \midrule
        $\channelcons_\fdginstance(c, r)$ & The function that returns the unique input capability that allows the channel $c$'s consumption of the resource type $r$ \\
        \midrule
        $\channelprod_\fdginstance(c, r)$ & The function that returns the unique output capability that allows the channel $c$'s production of the resource type $r$ \\
        \midrule
        $\quaindcomp^o_\fdginstance(o), \quaindcomp^i_\fdginstance(i)$ & The functions that return, for a given interface, the set of capabilities that make up the component's qua-individual \\
        \midrule
        $\quaindchannel^o_\fdginstance(o), \quaindchannel^i_\fdginstance(i)$ & The functions that return, for a given interface, the unique capability that makes up the channels's qua-individual \\
        \bottomrule
    \end{tabular}
    \caption{A glossary of endurant {\fdgname} elements and functions.}
    \label{tab:glossary_fdg_endurants}
\end{table*}

\subsection{Formal definition}
A {\fdgname} is a tuple composed of \textbf{capabilities} ($\capability$), \textbf{functions} ($\function$), \textbf{faults} ($\fault$), \textbf{components} ($\component$), \textbf{channels} ($\channel$), \textbf{output interfaces} ($\ointerface$), \textbf{input interfaces} ($\iinterface$) and \textbf{expectations} ($\expectation$). Such a tuple represents the state of a system, and is typically denoted $\gamma$. The type of {\fdgnamearticle} {\fdgname} is denoted $\fdg$, and we write $\gamma \in \fdg$ for instances of $\fdg$.

We base our notion of time points on that introduced in UFO-B\cite{ufo_b}. Simply put, the set $\textit{TimePoint}$ and the $\textit{precedes}$ relation make up a strict total order.
We may assign a time point to the state $\gamma$. We then write $\fdginstance : \fdg$.

Furthermore, we may denote by a subscript the {\fdgname} to which a set belongs. E.g. we mean by $\component_{\fdginstance}$ the set of components that belong to some {\fdgname} $\fdginstance : \fdg$.

Additionally, in the places where it is obvious, we do not write the particular situation in which predicates are said to hold. For instance, when we write
\[
\forall \alpha \in \component_\fdginstance. {\ } \textit{Component}( \compid(\alpha) )
\]
we mean that $\alpha$ is \textit{present at} the time point at which the situation $\fdginstance$ obtains, and that the predicate $\textit{Component}( \compid_\fdginstance(\alpha)$ holds in that situation. We describe time points and situations in more detail in \autoref{sec:dynamics_and_events}.

The formal definition of {\fdgname}s is grounded in UFO-A and UFO-B predicates and functions. For reference, the relevant portions of UFO-A are described in \autoref{tab:ufo_predicates}. \autoref{tab:ufo_b_axioms} lists the relevant portion of UFO-B.

\begin{table*}[]
\centering
\begin{tabular}{lll}
\toprule
\textbf{ID}                        & \textbf{Definition}                                     & \textbf{Description}                                                   \\
\midrule
\textbf{10}\cite{ufo_a} & $\beta : \mytext{IntrinsicMoment} \rightarrow \mytext{Individual}$ & The function that returns the bearer of a particular intrinsic moment. \\
\midrule
\textbf{a102}\cite{ufo_a_axioms} & $\mytext{manifests}(x, y) \implies \mytext{Perdurant}(x) \wedge \mytext{Endurant}(y) $ & Relates an event to the particular moment it is a manifestation of. \\
\bottomrule
\end{tabular}
\caption{The relevant portion of UFO, taken from \cite{ufo_a} and \cite{ufo_a_axioms}.}\label{tab:ufo_predicates}
\end{table*}


\begin{definition}\label{def:fdg}

Consider the set $\fdg = \pow{\capability} \times \pow{\function} \times \pow{\fault} \times \pow{\component} \times \pow{\channel} \times \pow{\ointerface} \times \pow{\iinterface} \times \pow{\expectation}$. A {\fdgname} $\gamma = (\capability_\fdginstance, \function_\fdginstance, \fault_\fdginstance, \component_\fdginstance, \channel_\fdginstance, \ointerface_\fdginstance, \iinterface_\fdginstance, \expectation_\fdginstance)$ is a tuple that is an element of the set $\fdg$, denoted $\gamma \in \fdg$.
\end{definition}

An factual {\fdgname} $\fdginstance$ (i.e. one that obtains at some time point) is a \textit{situation} in the sense of UFO-B\cite{ufo_b} and denoted $\fdginstance : \fdg$. In UFO-B, situations are akin to \textit{states of affairs} that describe a fragment of the world at a particular point in time. However, two situations are distinct if the time point at which they obtain are different -- even if the state of affairs they describe are identical.

The multiplicity function $\mu_{\fdg'} : \fdg \rightarrow \mathbb{N}$ defines the multiset $\fdg' = (\fdg, \mu_{\fdg'})$. Contrary to common sets, a multiset allows for more than one element of the same value. This is an important property, since in UFO two situations are distinct if they describe a different time points\cite{ufo_b}.

Then let $\textit{Occ}(\fdg') \subset \fdg \times \mathbb{N}$ be the set that relates {\fdgname}s to the multiplicity of their occurrences:
\[
\mytext{Occ}(\fdg') = \{ (\gamma, i) {\ } | {\ } \gamma \in \fdg. {\ } 0 \leq i \leq \mu_{\fdg'}(\gamma) \}
\]

We introduce the notation $\fdginstance$ to denote an occurrence $(\gamma, i) \in \mytext{Occ}(\fdg')$.
The notation $\fdginstance : \fdg$ is such that the multiplicity of the occurrence is at least one:
\[
\forall {\,} (\gamma, i) \in \fdg. {\ } \fdginstance : \fdg \iff \mu_{\fdg'}(\fdginstance) \geq 1
\]

Let the function $\tau : \textit{Occ}(\fdg') \rightarrow \textit{TimePoint}$ be the total function that relates occurrences of {\fdgnamearticle} {\fdgname} to a time point. By the definitions of $\textit{Occ}(\fdg')$ and $\tau$, all {\fdgname}s such that $\fdginstance : \fdg$ have obtained in a time point:
\begin{equation*}
\forall \fdginstance \in \fdg. {\ } \fdginstance : \fdg \iff \exists t \in \mytext{TimePoint}. {\ } \tau(\fdginstance) = t
\end{equation*}
In UFO, such situations are termed \textit{facts}.

Two {\fdgname}s $\fdginstance = (\gamma, i) \in \textit{Occ}(\fdg')$ and $\fdginstance' = (\gamma', j) \in \textit{Occ}(\fdg')$ are structurally equivalent when they describe the same state of affairs (regardless of their time point):
\begin{equation*}
    \fdginstance \equiv \fdginstance' \iff \gamma = \gamma'
\end{equation*}

Finally, we define that two {\fdgname}s $\fdginstance = (\gamma, i) \in \textit{Occ}(\fdg')$ and $\fdginstance' = (\gamma', j) \in \textit{Occ}(\fdg')$ are distinct if they obtain at a different time point:
\begin{equation*}
    \tau(\fdginstance) \neq \tau(\fdginstance') \implies i \neq j
\end{equation*}

In this way, the notation $\fdginstance : \fdg$ mirrors the UFO notion of factual situations, whereas $\fdginstance \in \fdg$ denotes the set of all world \textit{possibilia} (i.e. all the states that a system \textit{could} be in).

\subsection{On {\fdgname}s as objects}
{\fdgname}s describe a collection of interacting elements. To argue that these interacting elements can be the focus of a situation, we must argue that this collection itself has object-like properties. We call the elements that {\fdgnamearticle} {\fdgname} comprises \textit{endogenous} elements.
And we call the sum of such endogenous elements \textit{systems}. We do not make any ontological commitments to the nature of systems -- such as their mereology -- beyond the stipulate that a system is the object comprising all components, channels and interactions in {\fdgnamearticle} {\fdgname} $\gamma$, and that this object is precisely the focus of the situation $\fdginstance$.
We must now be precise: {\fdgnamearticle} {\fdgname} $\gamma$ is an \textbf{object}. When we discuss the dynamics and evolution of such objects, we describe their occurrences (e.g. $\fdginstance$) as \textbf{situations}.



\subsection{Formal graphical notation of {\fdgname}s}
A {\fdgname} can be interpreted as a graph by having its set of capabilities as vertices, and its interfaces as solid edges. The label of each edge is the resource type that is exchanged. We draw a rectangle around capabilities that have the same bearer (i.e. components). A dashed edge denotes the flow of a resource type between channel input and output capabilities. We overset a capability label with a bar (e.g. $\bar{f}$) to indicate that it is currently manifested as a function.
The running example shown in \autoref{fig:print_head} is more formally denoted in \autoref{fig:print_head_formal}.\sidenote{TODO: include the capability to self-clean!}

\begin{figure}[ht!]
\centering

\includegraphics[]{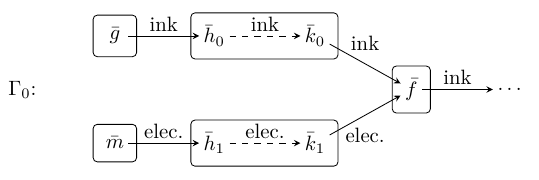}

\caption{A formal {\fdgname} diagram showing the running example (derived from \autoref{fig:print_head}).}
    \label{fig:print_head_formal}
\end{figure}

\subsection{Further model constraints}
To further eliminate unintended model instantiations, we provide additional integrity constraints.

\mypar{Individuation}
The elements of {\fdgname}s must have a surjective relation to the instances that the ontology in \autoref{sec:ontology} permits for concepts of the same name;
\begin{align*}
    \forall r \in \resource&. {\ } \mytext{ResourceType}(r)\\
    \forall f \in \capability_\fdginstance&. {\ } \mytext{Capability}(\capid_\fdginstance(f))\\
    \forall \bar{f} \in \function_\fdginstance&. {\ } \mytext{Function}(\bar{f}) \\
    \forall \hbox{\sout{$f$}} \in \fault_\fdginstance&. {\ } \mytext{Fault}(\faultid_\fdginstance(\hbox{\sout{$f$}}))\\
    \forall \alpha \in \component_\fdginstance&. {\ } \mytext{Component}( \compid_\fdginstance(\alpha) )\\
    \forall c \in \channel_\fdginstance&. {\ } \mytext{Channel}(\chanid_\fdginstance(c))\\
    \forall o \in \ointerface_\fdginstance&. {\ } \mytext{OutputInterface}(\interfaceid^o_\fdginstance(o))\\
    \forall i \in \iinterface_\fdginstance&. {\ } \mytext{InputInterface}(\interfaceid^i_\fdginstance(i))\\
    \forall e \in \expectation_\fdginstance&. {\ } \mytext{Capability}(\capid_\fdginstance(e))
\end{align*}

\mypar{Integrity of resource types}
The second and third element of a capability tuple denote the resource types that the capability, when manifested, consumes and produces respectively. Let $\fdginstance : \fdg$ and $(\mytext{id}_f, R_{in}, R_{out}) \in \capability_\fdginstance$:
\begin{multline}\label{eq:capability_tuple_constraints}
    \left( \forall r^\downarrow \in R_{in}. {\ } \mytext{consumes}(\beta(\mytext{id}_f), r^\downarrow) \right) \wedge\\
    \left( \forall r^\uparrow \in R_{out}. {\ } \mytext{produces}(\beta(\mytext{id}_f), r^\uparrow) \right)
\end{multline}

The second element of a channel tuple denotes the resource types that channel accommodates. Let $\fdginstance : \fdg$:
\[
\forall (\mytext{id}_c, R, \dots) \in \channel_\fdginstance. {\ } \forall r \in R. {\ } \mytext{accommodates}(\mytext{id}_c, r)
\]

The component capabilities mediated by interfaces are compatible with that interface w.r.t. the resource type they produce or consume (for output and input interfaces respectively). Let $\fdginstance : \fdg$:
\begin{equation*}
    \forall o \in \ointerface. \forall \mytext{id}_f \in \quaindcomp^o_\fdginstance(o). {\ } \mytext{proj}_3(o) \in \mytext{proj}_3(\capid^{-1}_\fdginstance(\mytext{id}_f))
\end{equation*}
and
\begin{equation*}
    \forall i \in \iinterface. \forall \mytext{id}_f \in \quaindcomp^i_\fdginstance(i). {\ } \mytext{proj}_3(i) \in \mytext{proj}_2(\capid^{-1}_\fdginstance(\mytext{id}_f))
\end{equation*}

\mypar{On the special case of transport capabilities}
The consumption capability of a channel cannot provide a channel, and the production capability of a channel cannot be provided by a channel. Let $\fdginstance : \fdg$ and $c \in \channel_\fdginstance$:
\begin{multline*}
    \bigl( \exists r \in \resource. {\ } \mytext{accommodates}(c, r) \bigr) \wedge \forall f \in \capability^c_\fdginstance. \\
    \left( f = \channelcons_\fdginstance(c, r) \implies \neg \exists o \in \ointerface. {\ } \capid_\fdginstance(f) = \quaindchannel^o_\fdginstance(o) \right) \wedge\\
    \left( f = \channelprod_\fdginstance(c, r) \implies \neg \exists i \in \iinterface. {\ } \capid_\fdginstance(f) = \quaindchannel^i_\fdginstance(i) \right)
\end{multline*}

\mypar{The bearer of transport capabilities}
The bearer of the consumption and production capabilities of a channel $c$, is $c$. Let $\fdginstance : \fdg$ be a situation, an $c \in \channel_\fdginstance$ be a channel. Let $r \in \resource$ be a resource that $c$ accommodates. Then:
\[
\beta(\channelcons_\fdginstance(c, r)) = \beta(\channelprod_\fdginstance(c, r)) = c
\]

\mypar{Manifestation of capabilities}
Let $\fdginstance : \fdg$. The following equation explicitly connects all members of $\function_\fdginstance$ to a unique member of $\capability_\fdginstance$:
\begin{equation*}
    \forall \bar{f} \in \function_\fdginstance. \exists !f \in \capability_\fdginstance.\\
    \bar{f} = \varrho_\fdginstance(\capid_\fdginstance(f)) \wedge \ufo{manifests}(\bar{f}, \capid_\fdginstance(f))
\end{equation*}





\section{Dynamics and system evolution}\label{sec:dynamics_and_events}
In this section, we extend the representation of the system \textit{at some point in time} (formalized in the previous section) with a representation of a system's dynamic evolution \textit{throughout} time.

To encode the dynamic aspects of a system, we can evolve {\fdgname}s through the UFO-B notion of events\cite{ufo_b}. We typically denote by $\fdginstance'$ the situation that is derived from the situation $\fdginstance$ by the occurrence (or \textit{application}) of some event. For reference, we also list the relevant portions of UFO-B in \autoref{tab:ufo_b_axioms}\sidenote{Update this with UFO-B*\cite{ufo_ab} axioms.}.


\begin{table*}[t]
\centering
\begin{tabular}{@{}ll@{}}
\textbf{Axiom} & \textbf{FOL formula} \\
\toprule
S1             & $\forall s : \mytext{Situation}, e : \mytext{Event}. {\ } \mytext{triggers}(s, e) \implies \mytext{obtainsIn}(s, \mytext{begin-point}(e))$ \\
\midrule
S2             & $\forall s : \mytext{Situation}, e : \mytext{Event}. {\ } \mytext{brings-about}(e, s) \implies \mytext{obtainsIn}(s, \mytext{end-point}(e))$ \\
\midrule
S3             & $\forall e : \mytext{Event}. \exists !s : \mytext{Situation}. {\ } \mytext{triggers}(s, e)$ \\
\midrule
S4             & $\forall e : \mytext{Event}. \exists !s : \mytext{Situation}. {\ } \mytext{brings-about}(e, s)$ \\
\midrule
S5             & $\forall s : \mytext{Situation}. {\ } \mytext{fact}(s) \iff \exists t : \mytext{TimePoint}. {\ } \mytext{obtainsIn}(s, t)$ \\
\midrule
S6             & $\forall e, e' : \mytext{Event}. {\ } \mytext{directly-causes}(e, e') \iff \exists s : \mytext{Situation}. {\ } \mytext{brings-about}(e, s) \wedge \mytext{triggers}(s, e')$ \\
\midrule
S7             & $\forall e, e'' : \mytext{Event}. {\ } \mytext{causes}(e, e'') \iff \mytext{directly-causes}(e, e'') \vee \left( \exists e' : \mytext{Event}. {\ } \mytext{causes}(e, e') \wedge \mytext{causes}(e', e'') \right)$ \\
\midrule
T1             & $\forall t : \mytext{TimePoint}. {\ } \neg \mytext{precedes}(t, t)$ \\
\midrule
T2             & $\forall t, t' : \mytext{TimePoint}. {\ } \mytext{precedes}(t, t') \implies \neg \mytext{precedes}(t', t)$ \\
\midrule
T3             & $\forall t, t', t'' : \mytext{TimePoint}. {\ } \mytext{precedes}(t, t') \wedge \mytext{precedes}(t', t'') \implies \mytext{precedes}(t, t'')$ \\
\midrule
t4\cite{ufo_ab}& $\forall t, t': \mytext{TimePoint}. {\ } \mytext{immediateSuccessorOf}(t', t) \iff \mytext{precedes}(t, t') \wedge \neg \exists t'' : \mytext{TimePoint}. \left( \mytext{precedes}(t'', t') \wedge  \mytext{precedes}(t, t'') \right)$ \\          
\midrule
T5             & $\forall e : \mytext{Event}. \exists !t : \mytext{TimePoint}. \exists !t' : \mytext{TimePoint}. (t = \mytext{begin-point}(e)) \wedge (t' = \mytext{end-point}(e))$ \\
\midrule
P1             & $\forall e : \mytext{AtomicEvent}. \exists !o : \mytext{Object}. {\ } \mytext{dependsOn}(e, o)$ \\
\midrule
P2             & $\forall e : \mytext{AtomicEvent}, o : \mytext{Object}. \mytext{excDepends}(e, o) \iff \mytext{dependsOn}(e, o)$ \\
\midrule
P4             & $\forall e : \mytext{Event}. \mytext{Participation}(e) \iff \exists !o : \mytext{Object}. {\ } \mytext{excDepends}(e, o)$ \\
\midrule
P5             & $\forall o : \mytext{Object}, p : \mytext{Participation}. \mytext{participationOf}(p, o) \iff \mytext{excDepends}(p, o)$ \\
\midrule
a102\cite{ufo_a_axioms} & $\mytext{manifests}(x, y) \implies \mytext{Perdurant}(x) \wedge \mytext{Endurant}(y) $\\
\midrule
D1             & $\forall d : \mytext{Disposition}. \exists !o : \mytext{Object}. {\ } \mytext{inheresIn}(d, o)$ \\
\midrule
D2             & $\forall e : \mytext{AtomicEvent}. \exists !d : \mytext{Disposition}. {\ } \mytext{manifests}(e, d)$ \\
\midrule
D3             & $\forall s : \mytext{Situation}, e : \mytext{AtomicEvent}. {\ } \mytext{triggers}(s, e) \iff \exists d : \mytext{Disposition}. {\ } \mytext{activates}(s, d) \wedge \mytext{manifests}(e ,d)$ \\
\midrule
D4             & $\forall d : \mytext{Disposition}, e : \mytext{AtomicEvent}, o : \mytext{Object}. \mytext{manifests}(e ,d) \wedge \mytext{inheresIn}(d, o) \implies \mytext{dependsOn}(e, o)$ \\
\bottomrule
\end{tabular}
\caption{A subset of axioms of UFO-B, taken from \cite{ufo_b}.}\label{tab:ufo_b_axioms}
\end{table*}

Events transform one situation into another. We call the situation that an event begins in its \textit{antecedent}, and the situation that is brought about by the event its \textit{consequent}.

An atomic event is an event that is not made up of other events (in contrast to \textit{complex} events)\cite{ufo_b}. We distinguish two types of atomic events: triggered events and spontaneous events. 

\newcommand{\event}{e}
\newcommand{\spevent}{\hat{\event}}
\begin{definition}[Triggered events]\label{def:triggered_events}
    We may denote a triggered event $\event$, the situation that triggers it (S3, \autoref{tab:ufo_b_axioms}), and the situation it brings about (S4, \autoref{tab:ufo_b_axioms}) as $\fdginstance \xrightarrow{\event} \fdginstance'$:
    \begin{multline}\label{eq:triggered_event_definition}
    \forall {\,} \fdginstance, \fdginstance' : \fdg. \forall \event : \mytext{Event}.\\
    \fdginstance \xrightarrow{\event} \fdginstance' \iff \ufo{triggers}(\fdginstance, \event) \wedge \ufo{brings-about}(\event, \fdginstance')
    \end{multline}
\end{definition}

\begin{definition}[Spontaneous events]\label{def:spontaneous_events}
    We may denote a spontaneous event $\spevent$, the situation that precedes it, and the situation it brings about (S4, \autoref{tab:ufo_b_axioms}) as $\fdginstance \overset{\spevent}\looparrowright \fdginstance'$:
    \begin{multline}\label{eq:spontaneous_event_definition}
    \forall {\,} \fdginstance, \fdginstance' : \fdg. \forall \event : \mytext{Event}. {\ } \fdginstance \overset{\spevent}\looparrowright \fdginstance' \iff\\
    \ufo{obtainsIn}(\fdginstance, \ufo{begin-point}(\spevent)) \wedge\\
    \neg \ufo{triggers}(\fdginstance, \spevent) \wedge \ufo{brings-about}(\spevent, \fdginstance')
    \end{multline}
\end{definition}

By S3, for each event there must exist a unique situation that triggers it. However, for spontaneous events we cannot describe this situation since, by \autoref{def:spontaneous_events}, we do not know it (if we did, the event would be a \textit{triggered} event).

Furthermore:
\begin{itemize}
    \item An antecedent situation that triggers an event obtains at the begin point of that event (S1, \autoref{tab:ufo_b_axioms}), and;
    \item The consequent situation that an event brings about obtains at the end point of that event (S2, \autoref{tab:ufo_b_axioms}).
\end{itemize}

\newcommand{\anyevent}{\dot{e}}
\newcommand{\generalevent}{\curvearrowright}
In certain cases, we need only consider the consequent situation that an event brings about. If we do not need to distinguish between triggered events and spontaneous events, we may write $\fdginstance \overset{\anyevent}\generalevent \fdginstance'$ in lieu of either $\fdginstance \xrightarrow{\event} \fdginstance'$ or $\fdginstance \overset{\spevent}\looparrowright \fdginstance'$:
\begin{equation}\label{eq:event_shorthand_definition}
    \forall {\,} \fdginstance, \fdginstance' : \fdg. \forall \event : \mytext{Event}. {\ } \fdginstance \overset{\event}\generalevent \fdginstance' \iff \fdginstance \xrightarrow{\event} \fdginstance' \xor \fdginstance \overset{\event}\looparrowright \fdginstance'
\end{equation}

where $\xor$ is the \texttt{XOR} connective.

\begin{definition}[Temporal order of situations]\label{def:temporal_order_situations}
    Let the strict partial order $\prec {\ } \subseteq {\ } \fdg \times \fdg$ denote the temporal order of factual situations that obtain at some time point
    (S5, T1, T2, T3, \autoref{tab:ufo_b_axioms}):
    \begin{equation}\label{eq:factual_temporal_ordering}
    \forall {\,} \fdginstance, \fdginstance' : \fdg. {\ } \fdginstance \prec \fdginstance' \iff \ufo{precedes}(\tau(\fdginstance), \tau(\fdginstance'))
    \end{equation}
\end{definition}

By axiom T5 (\autoref{tab:ufo_b_axioms}), events cannot have zero duration in UFO-B. Therefore, the time point of an event's antecedent always precedes the time point of its consequent. By \cref{eq:factual_temporal_ordering}, we obtain:
\begin{equation*}
\forall {\,} \fdginstance, \fdginstance' : \fdg, \anyevent : \mytext{Event}. {\ }
\fdginstance \overset{\anyevent}\generalevent \fdginstance' \implies \fdginstance \prec \fdginstance'
\end{equation*}

We also make use of a weaker relation $\preceq {\ } : {\ } \fdg \times \fdg$ that is entailed by $\prec$:
\begin{equation*}
    \forall {\,} \fdginstance, \fdginstance' : \fdg. {\ } \fdginstance \preceq \fdginstance' \iff \fdginstance \prec \fdginstance' \vee \tau(\fdginstance) = \tau(\fdginstance')
\end{equation*}




\begin{definition}[Channel depletion]\label{def:channel_transportation}
    A channel $c$ can only output a resource of type $r$ if that resource is present in the channel. If the resource is not (continuously) replenished by the manifestation of the channel's input capability, then after some non-zero time $\consdv$, its store of $r$ is depleted and the channel's output capability cannot be manifested. We denote this depletion event $\Delta \langle c \rangle$, which has a constant duration $\consdv$ for all channels. The begin point of $\Delta \langle c \rangle$ is the end point of the manifestation of $c$'s input of $r$, and the end point of $\Delta \langle c \rangle$ is the first time point at which $r$ is depleted.
\end{definition}


Let $c \in \channel_\fdginstance$ be a channel. Then:
\begin{multline}\label{eq:channel_transportation}
    \forall {\,} \fdginstance, \fdginstance' : \fdg. {\ } \hat\fdginstance \xrightarrow{\Delta \langle c \rangle} \fdginstance' \iff\\
    \channelcons_{\hat\fdginstance}(c) \not \in \text{dom} \varrho_{\hat\fdginstance} \wedge \channelprod_{\fdginstance'}(c) \not \in \text{dom} \varrho_{\fdginstance'}
\end{multline}

Where by $\hat\fdginstance$ in \cref{eq:channel_transportation} we mean the earliest situation in which $\channelcons_\fdginstance(c)$ is not manifested (see \autoref{def:earliest_situation}).


\begin{definition}[Earliest antecedent such that $\varphi$]\label{def:earliest_situation}
    Let $\varphi$ be a predicate. Let $\fdginstance \xrightarrow{e} \fdginstance'$ be some event such that $\varphi$ holds in $\fdginstance$ (denoted $\fdginstance \models \varphi$). Let $t$ be the time point such that $\mytext{obtainsIn}(\fdginstance, t)$. Then $\fdginstance$ is the earliest situation in which $\varphi$ holds iff there does not exist a situation $\fdginstance''$ that obtains at a time point $t''$, such that $t''$ immediately succeeded by $t$ and $\varphi$ holds in $\fdginstance''$. I.e.:
    \begin{multline*}
        \neg \exists t'' : \textit{TimePoint}. {\ } \exists \fdginstance'' : \fdg. {\ } \mytext{obtainsIn}(\fdginstance'', t'') \wedge \\ \ufo{\mytext{immediateSuccessorOf}}(t, t'') \wedge \fdginstance'' \models \varphi
    \end{multline*}
    $\fdginstance$ is then the earliest situation such that $\varphi$ holds, denoted $\hat{\fdginstance}$.
\end{definition}

\newcommand{\causalorder}{\Leftarrow}
\begin{definition}[Causality]\label{def:causation}
    Recall from \autoref{tab:ufo_b_axioms} the definitions of direct (S6) and general (S7) causality. This is captured by \cref{eq:direct_causation} and \cref{eq:indirect_causation} respectively.
    
    \begin{multline}\label{eq:direct_causation}
        \forall {\,} \fdginstance, \fdginstance', \fdginstance'' : \fdg. \forall \anyevent, \event' : \mytext{Event}.\\
        \fdginstance \overset{\anyevent}\generalevent \fdginstance' \xrightarrow{\event'} \fdginstance'' \iff \ufo{directly-causes}(\anyevent, \event')
    \end{multline}
    and
    \begin{multline}\label{eq:indirect_causation}
    \forall n \geq 2. {\ } \forall {\,} \fdginstance, \dots, \fdginstance_n : \fdg. \forall \anyevent_0, \event_1, \dots, \event_n : \mytext{Event}.\\
    \fdginstance_0 \overset{\anyevent_0}\generalevent \fdginstance_1 \xrightarrow{\event_1} \dots \xrightarrow{\event_n} \fdginstance_n \iff \ufo{causes}(\anyevent_0, \event_n)
    \end{multline}
    for any sequence of causal events of length $n + 1$.
\end{definition}

Note that by \cref{eq:spontaneous_event_definition} and S6, a spontaneous event cannot have a (direct) cause:
\begin{multline*}
    \forall {\,} \fdginstance, \fdginstance', \fdginstance'' : \fdg. \forall \anyevent, \spevent : \mytext{Event}.\\
    \fdginstance \overset{\anyevent}\generalevent \fdginstance' \overset{\spevent}\looparrowright \fdginstance'' \implies \neg \ufo{causes}(\anyevent, \spevent)
\end{multline*}

Lastly, we define a causal relationship between the event that a channel is not provided with a resource, and the event that a channel stops producing that resource. Let $c \in \channel_\fdginstance$ be a channel, and $r \in \resource$ a resource that it accommodates. Let $\fdginstance$ be the situation in which $h = \channelcons_\fdginstance(c, r)$ and $k = \channelprod_\fdginstance(c, r)$ are manifested as functions $\bar{h}$ and $\bar{k}$. Then let $\event$ be the event that terminates $\bar{h}$ in $\fdginstance'$, such that $\fdginstance \overset{\event}\generalevent \fdginstance'$. Let $\Delta \langle c \rangle$ be the event introduced in \cref{eq:channel_transportation}, and $\fdginstance''$ be the situation that obtains some time $\delta$ after $\fdginstance'$, such that $k$ is not manifested in $\fdginstance''$. Then by \cref{eq:triggered_event_definition}, \cref{eq:spontaneous_event_definition}, \cref{eq:event_shorthand_definition}, and (S6 and S7, \autoref{tab:ufo_b_axioms}):
\begin{equation*}
    \fdginstance \overset{\event}\generalevent \fdginstance' \xrightarrow{\Delta \langle c \rangle} \fdginstance'' \implies \ufo{causes}(\event, \Delta \langle c \rangle)
\end{equation*}

In the definitions of {\fdgname} events in the following sections, we define preconditions and postconditions in the antecedent and consequent situations of an event, respectively. An event is defined by these conditions.
However, we do not describe the full antecedent and consequent situations -- only the relevant portion that defines the particular event. For instance, when we write about the pre and postconditions of a channel depletion event like in \cref{eq:channel_transportation}, we do not make any commitments to other components, channels, resource types, etc. that may be part of $\fdginstance$ or $\fdginstance'$. This naturally raises the question: what happens to the rest of the {\fdgname}?

\begin{postulate}[Locality of events]\label{post:events_as_changes}
    The elements of {\fdgnamearticle} {\fdgname} that are not terms in the pre- or postconditions of an event are unaltered by the event.
\end{postulate}

\autoref{post:events_as_changes} states that the elements of a situation $\fdginstance$ are ``copied over'' by an event, only changing the (aspects of) elements described in the pre and postconditions. Do note that events are not \textit{necessarily} changes of the elements in a situation, but all non-relational changes of elements of situations \textit{are} events\cite{ufo_ab}.

\begin{postulate}[The passive nature of time]\label{post:time_is_passive}
    In this work, we assume that the passing of time effects no change in the \textit{moments} of objects.
\end{postulate}

We admit that \autoref{post:time_is_passive} is too strong a postulate for physical systems. However, a description of the mechanisms of passing time is not given in UFO and its exploration warrants intent study. We leave this to future work.

\begin{remark}[Events as Types]
    It is worth noting that we have introduced events as \textit{types}, fully characterized by their antecedent and consequent situations. When we write something of the form
    \[
        \forall {\,} \fdginstance, \fdginstance' : \fdg. \forall \event : \mytext{Event}. {\ } \fdginstance \xrightarrow{\event} \fdginstance' \iff \dots
    \]
    The universal quantification indicates that we are introducing a property that holds for all instances of a particular event nature (i.e. an event type). For instance in \cref{eq:triggered_event_definition}, the combination of universal quantification and bi-implication indicates a (complete) characterization of the event type ``triggered event''.

    When we talk about particular events, we usually do so by means of existential quantification:
    \[
        \exists \fdginstance, \fdginstance' : \fdg. {\ } \fdginstance \xrightarrow{\event} \fdginstance' \implies \dots
    \]
    Here, the implication indicates that we are deriving a property that holds for the occurrence of a particular event.

    Contrary to what \cref{eq:event_shorthand_definition} suggests, generalization sets over event types are not \textit{necessarily} disjoint.

    This position aims to align with the notion of event types as employed in COVER~\cite{sales_cover}. We leave a more rigorous integration of COVER, as well as a detailed taxonomy of the events we present in this paper, for future work.
\end{remark}

\section{On faults, errors and fault activation}\label{sec:fault_creation_error_fault_activation}
The previous section describes the modeling of dynamic aspects of {\fdgname}s in a \textit{general} sense. In this section, we specifically describe how faults come about, and how faults and errors evolve over time. We also disambiguate between the core notions of fault creation, errors and fault activation.

Our notions of faults, errors and fault activation is inspired by the taxonomy proposed in \cite{Avizienis2004}. We redefine these concepts in the context of {\fdgname}s using the ontological framework of objects and events provided by UFO\cite{ufo_a}\cite{ufo_b}.

\mypar{Example}
An example of \textit{fault creation} in the running example is the power supply unit (PSU) burning out. The created fault (a molten internal winding) is dormant as long as we do not try to use the PSU. \textit{Fault activation} commences at the instant we put a load on the PSU and expect an output current. The fault then prevents the PSU from supplying this expected current, resulting in an \textit{error}.

Note that we do not consider \textit{failures} in this work. A notion of failure pertains to a notion of \textit{service}\cite{Avizienis2004} or \textit{goals}\cite{wer201710}, and we do not consider such notions in this work. We do consider \textit{expectations}, which we describe only as an analytical artifact.

\begin{definition}[Fault creation]\label{def:fault_creation}
In this work, we consider the \textit{creation} of a set of faults $F \subseteq \fault_{\fdginstance'}$ in an object $\mytext{obj} \in \mytext{ComponentId}_{\fdginstance'} \cup \mytext{ChannelId}_{\fdginstance'}$ a spontaneous atomic event, denoted $\fdginstance \overset{\faultcreation_F}\looparrowright \fdginstance'$, such that $\forall \faultinstance{f} \in F. {\ } \beta(\faultid_{\fdginstance'}(\faultinstance{f})) = \mytext{obj}$.
\end{definition}

A spontaneous event is a fault creation event, iff the spontaneous event creates faults in a component or channel that did not exist in that object before the event.
Let $F \subseteq \fault_{\fdginstance'}$ be the set of faults that are created. Let $\fdginstance, \fdginstance' : \fdg$. Then:
\begin{equation}\label{eq:fault_creation}
    \fdginstance \overset{\faultcreation_{F}}\looparrowright \fdginstance' \iff \forall \faultinstance{f} \in F. {\ } \bigl( \neg \exists \faultinstance{f} \in \fault_{\fdginstance} \bigr) \wedge \bigl( \exists \faultinstance{f} \in \fault_{\fdginstance'} \bigr)
\end{equation}



In the consequent situation of a fault creation event, the created faults prevent the manifestation of capabilities of the particular object. By the existential dependency of interfaces (as relators) on the manifestation of such capabilities, any interface that is associated with these capabilities cannot exist. Let $\fdginstance, \fdginstance' : \fdg$. Then:
\begin{multline*}
    \fdginstance \overset{\faultcreation_F}\looparrowright \fdginstance' \implies \forall (\mytext{id}_{\faultinstance{f}}, C) \in F. {\ } \forall f \in C.\\
    \neg \exists o \in \ointerface_{\fdginstance'}. {\ } f \in \quaindcomp^o_{\fdginstance'}(o) \wedge \neg \exists i \in \iinterface_{\fdginstance'}. {\ } f \in \quaindcomp^i_{\fdginstance'}(i)
\end{multline*}

I.e.: all material $\mytext{provides}$ and $\mytext{providedBy}$ relations that are existentially dependent on the capabilities covered by $F$ are terminated \textit{synchronously} in the fault creation's consequent situation.


\begin{definition}[Errors]\label{def:error}
    
\textbf{Errors} are atomic events that are triggered by unmet expectations placed on capabilities in a particular situation. An error event $\err \langle f \rangle$ describes the period of time during which a capability $f$ remains erroneous. An error event ends as soon as the capability is no longer erroneous.

\end{definition}

An error is triggered in the antecedent situation $\fdginstance$, if and only if we have an expectation for the manifestation of a function in $\fdginstance$, and that expectation is not met. The situation that an error brings about is one where either i) there is no expectation for the manifestation of the capabilities that the error comprises, or ii) the capabilities that the error comprises are manifested again. I.e. an error \textit{persists} so long as the expectation exists and the capabilities that it comprises are not manifested. Formally:
\begin{multline}\label{eq:error}
    \forall e \in \expectation_\fdginstance. {\ } \exists \fdginstance, \fdginstance' : \fdg. {\ }
    \hat\fdginstance \xrightarrow{\err \langle e \rangle} \fdginstance' \iff\\
    \neg \mytext{meets}(\hat\fdginstance, e) \wedge \left( e \not \in \expectation_{\fdginstance'} \vee \mytext{meets}(\fdginstance', e) \right)
\end{multline}


Where by $\hat\fdginstance$ in \cref{eq:error} we mean the earliest situation $\fdginstance$ in which the expectation $e$ is not met (see \autoref{def:earliest_situation}). Note that $e \not \in \expectation_{\fdginstance'} \implies \neg \mytext{meets}(\fdginstance', e)$ -- we only include it in the formula to emphasize this special case.


In the consequent situation of $\err \langle f \rangle$, the phase of the capability that the error comprises has changed from \textit{Erroneous} back to \textit{Functional}:
\begin{multline*}
    \forall {\,} \fdginstance, \fdginstance' : \fdg. {\ } \hat\fdginstance \xrightarrow{\err \langle f \rangle} \fdginstance' \iff\\
    \mytext{ErroneousCapability}_{\hat\fdginstance}( f ) \wedge\\
    \mytext{FunctionalCapability}_{\fdginstance'}( f )
\end{multline*}

In this work, we do not describe mechanisms that change the phase of capabilities from Erroneous to Functional. As a result, erroneous capabilities remain erroneous once they have entered this phase. That is to say: components cannot be repaired, rerouted or replaced. In \autoref{sec:structure_function} we touch on the consequences of this w.r.t. the evaluation of our model.
One way to look at this is that the number of erroneous capabilities as a function of situations is monotonic (given the order $\preceq$ over the set of situations).

\begin{proposition}\label{prop:no_repairs}
    Capabilities that are erroneous in a situation $\fdginstance$ are erroneous in all situations $\fdginstance''$ where $\fdginstance \preceq \fdginstance'' \prec \hat{\fdginstance'}$, s.t. $\hat{\fdginstance'}$ is the earliest situation in which the error is resolved.
\end{proposition}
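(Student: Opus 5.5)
The plan is to reduce the statement to the characterization of the error event in \cref{eq:error} and its phase-change consequence, together with \autoref{post:events_as_changes} and \autoref{post:time_is_passive}. Fix a capability $f$ that is erroneous in $\fdginstance$, i.e.\ $\mytext{ErroneousCapability}_\fdginstance(f)$. By \autoref{def:error} and the complete and disjoint phase partition, being erroneous means that an error event $\err\langle f\rangle$ is ongoing. By S3 and S4, this event has a unique triggering situation, which is the earliest situation $\hat\fdginstance \preceq \fdginstance$ in which the expectation on $f$ is unmet (\autoref{def:earliest_situation}). It also brings about a unique consequent situation $\hat{\fdginstance'}$, and by \cref{eq:error} this is exactly the earliest situation in which the error is resolved. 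By T5 and the derived implication $\fdginstance \overset{\anyevent}\generalevent \fdginstance' \implies \fdginstance \prec \fdginstance'$, we get $\tau(\hat\fdginstance) \preceq \tau(\fdginstance) \prec \tau(\hat{\fdginstance'})$. So every $\fdginstance''$ with $\fdginstance \preceq \fdginstance'' \prec \hat{\fdginstance'}$ obtains strictly inside the temporal extent of $\err\langle f\rangle$.

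Next we argue that the phase of $f$ cannot change inside that interval. First, $\mytext{ErroneousCapability}$ and $\mytext{FunctionalCapability}$ are phases of the moment $f$. By \autoref{post:time_is_passive}, the mere passing of time does not alter moments. Hence any change of phase between $\fdginstance$ and $\fdginstance''$ must be brought about by some event whose postcondition mentions the phase of $f$; by \autoref{post:events_as_changes}, all other elements are copied over. The only event defined so far whose consequent changes a capability's phase from Erroneous to Functional is $\err\langle f\rangle$ itself, which by the equation following \cref{eq:error} flips the phase exactly at its consequent $\hat{\fdginstance'}$. No other event type (fault creation, depletion $\Delta\langle c\rangle$, or triggered functions) has a postcondition restoring the Functional phase, since the paper explicitly excludes repair, rerouting and replacement.

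We then carry out an induction along the situations $\fdginstance = \fdginstance_0 \prec \fdginstance_1 \prec \dots \prec \fdginstance''$ connected by events. The invariant is that $f$ is erroneous in each $\fdginstance_j$ with $\fdginstance_j \prec \hat{\fdginstance'}$. The base case is the hypothesis. For the step, take an event $\fdginstance_j \overset{\anyevent}\generalevent \fdginstance_{j+1}$. Either $\anyevent$ does not mention $f$'s phase, in which case locality preserves it, or it is $\err\langle f\rangle$. In the latter case $\fdginstance_{j+1} = \hat{\fdginstance'}$ by S4, which lies outside the range. One also needs a uniqueness point: a second, distinct error event $\err\langle f\rangle$ cannot start inside the interval and end earlier. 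This follows because its trigger would have to be an earliest unmet situation, yet the predecessor situation already fails the expectation, contradicting \autoref{def:earliest_situation}.

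The main obstacle is that the paper's framework only orders situations that are linked by events, while $\preceq$ ranges over all factual situations. The induction therefore needs every $\fdginstance''$ in the interval to be reachable from $\fdginstance$ by a finite chain of events, or else a separate appeal to \autoref{post:time_is_passive} for situations not so linked. I would first try to handle this by treating any two consecutive time points (t4) as related either by an event or by no change at all. The no-change case then follows from \autoref{post:time_is_passive} combined with \autoref{post:events_as_changes}, and this closes the induction.
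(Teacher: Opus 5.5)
Your proposal is correct and follows the paper's own justification. The paper gives no separate proof, only the preceding remark that nothing other than the termination of $\err\langle f\rangle$ returns a capability to the Functional phase (no repair, rerouting or replacement), and your locality/passive-time induction formalizes exactly that. Since the paper treats the Erroneous phase as holding precisely during the existence of $\err\langle f\rangle$, your first paragraph (placing every $\fdginstance''$ with $\fdginstance \preceq \fdginstance'' \prec \hat{\fdginstance'}$ inside the temporal extent of the error) already closes the argument, so the chain induction and the discreteness concern in your last paragraph can be dropped.
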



\begin{definition}[Fault activation]\label{def:fault_activation}
Fault activation is the complex event in which the situation brought about by a fault creation event triggers an error.
\end{definition}

Let $\fdginstance \overset{\faultcreation_F}\looparrowright \fdginstance'$ be a fault creation event. The consequences of the fault creation then depend on the expectations we place on the affected capabilities. If there exists any expectation $e \in \expectation_{\fdginstance'}$ such that the created fault inhibits the manifestation of that expectation (i.e. $e \in F$), then by \cref{eq:fault_creation}, \cref{eq:faults_prevent_manifestation} and \autoref{def:error}, $\fdginstance'$ triggers an error, and $\faultcreation_F$ is the direct cause for this error (\autoref{def:causation}):
    
\begin{multline}\label{eq:fault_activation_full}
\fdginstance \overset{\faultcreation_F}\looparrowright \fdginstance' \implies \Bigl( \exists e \in \expectation_{\fdginstance'}. {\ } e \in F \wedge \neg \mytext{meets}(\fdginstance', e) \implies\\
\exists \fdginstance'' : \fdg. {\ }
\fdginstance' \xrightarrow{\err \langle e \rangle} \fdginstance'' \Bigr)
\end{multline}

If no such expectation exists in $\fdginstance'$, no error is triggered. The distinct cases of \textit{fault activation} are:
\begin{enumerate}
    \item With a witness to such an unmet expectation $e \in \expectation_{\fdginstance'}$, we obtain from \cref{eq:fault_activation_full}:
    \begin{equation}\label{eq:fault_activation}
        \fdginstance \overset{\faultcreation_F}\looparrowright \fdginstance' \xrightarrow{\err \langle e \rangle} \fdginstance''
    \end{equation}
    and by \cref{eq:direct_causation}, $\faultcreation_F$ is then a direct cause for $\err \langle e \rangle$.

    \item Otherwise, the created fault is called \textit{dormant}\cite{Avizienis2004}. Since faults inhere in objects, it will remain dormant as long as the fault persists, the object exists, or the fault is activated in a future situation where an expectation exists for the affected capability and the activation of the (until then dormant) fault causes an error.
\end{enumerate}

We do not aim to encode the dynamics of expectations in this work. We therefore leave the exploration of dormant fault activation to future work. From here on, when we refer to fault activation, we always refer to its first, causal case.

\section{Functional Dependency}\label{sec:fdep}
In this section, we define the notion of functional dependency. In the next section, we show how this notion combines with the definition of errors from the previous section to form error propagation.

Functional dependency is a complex notion that describes the necessary conditions for the manifestation of a function and comprises i) the prerequisites for the manifestation of a capability as a function, and ii) the fragment of the system architecture that fulfills these prerequisites. We address each constituent notion individually:
\begin{enumerate}[label=\roman*.]
    \item Recall that, in UFO, an event is the manifestation of a disposition, created and terminated by the activation and deactivation of a disposition\cite{ufo_b}. The manifestation of a capability as a function depends on access to certain types of resources. The instant that access to these resources is realized, the capability's manifestation as a function begins. The instant that access to these resources stops, this manifestation is terminated. We call the need for access to this set of resources the \textit{prerequisites} for the manifestation of a capability as a function.
    \item The minimal configuration fragment of the interaction between components (through channels) that provides access to the prerequisites for the manifestation of a capability in a given situation.
\end{enumerate}

\begin{definition}[Capability prerequisites]
    The manifestation of a capability as a function is contingent on access to that capability's prerequisites. We define the function $\preq_\fdginstance : \capability_\fdginstance \rightarrow \mathcal{P}(\resource)$ that returns a set of resource types.
    
    Let $\fdginstance : \fdg$ be a situation, and $f \in \capability_\fdginstance$ be a capability. Then:
    \[
    \preq_\fdginstance(f) = \{ r {\ } | {\ } r \in \resource. {\ } \mytext{consumes}(\capid_\fdginstance(f), r) \}
    \]
    or equivalently, by \cref{eq:capability_tuple_constraints}:
    \[
    \preq_\fdginstance(f) = \mytext{proj}_2(f)
    \]
\end{definition}

The notion of ``access'' of a capability to a particular resource type $r$ is formally defined based on the \textit{role}\footnote{``Role'', here, in the colloquial sense.} that capability plays:
\begin{itemize}
    \item For any component capability $f$, access is defined as the taking part of that capability in an input interface from which it \mytext{consumes} $r$.
    \item For any channel \textit{output} capability $k$, it is defined as the manifestation of that channel's input capability for $r$\footnote{The matter of \textit{when} that input capability must be manifested is discussed in \autoref{sec:error_propagation}.}.
    \item For any channel \textit{input} capability $h$, it is defined as the taking part of that capability in an output interface from which it \mytext{consumes} $r$.
\end{itemize}
Note that in all three cases, the capability \mytext{consumes} that resource.

The set of capabilities that provide access to a particular resource type is called a \textit{configuration fragment}.

\begin{definition}[Component configuration fragments]
    We define $\providedby_{\fdginstance} : \capability_\fdginstance \times \resource \rightarrow \pow{\capability_\fdginstance}$ to be the function that returns the set of channel production capabilities that supply a given component capability with a given resource type. Let $\fdginstance : \fdg$ be a situation. Let $f \in \capability_\fdginstance$ be a component capability and let $r \in \resource$ be a resource type that $f$ consumes, such that $\exists \alpha \in \component_\fdginstance. {\ } f \in \capability^\alpha_\fdginstance$. Then:
    \[
        \providedby_{\fdginstance}(f, r) = \{ \quaindchannel^i_{\fdginstance}(i) {\ } | {\ } i \in \iinterface_\fdginstance. f \in \quaindcomp^i_{\fdginstance}(i) \wedge \mytext{proj}_3 (i) = r \}
    \]
\end{definition}

\begin{definition}[Channel output configuration fragments]\label{def:channel_output}
    Recall that, by \autoref{def:channel_transportation}, there is a delayed dependency between a channel's output capability and its corresponding input capability for a particular resource. By definition, this delay is precisely $\delta$ time units.
    
    We define the function $\providedby_\fdginstance : \capability_\fdginstance \times \resource_\fdginstance \rightarrow \capability_\fdginstance$ to be the function that returns the channel's unique input capability that provides its output capability with access to a particular resource type. Let $\fdginstance^-, \fdginstance : \fdg$ be situations, such that $\fdginstance^-$ precedes $\fdginstance$ by $\delta$ time units. Let $k \in \capability_\fdginstance$ be a channel output capability and let $r \in \resource$ be a resource type that $k$ consumes, such that $\exists c \in \channel_\fdginstance. {\ } k = \channelprod_\fdginstance(c, r)$.  Then:
    \begin{equation}
        \providedby_\fdginstance(k, r) = \channelcons_{\fdginstance^-}(\beta_\fdginstance(k), r)
    \end{equation}
\end{definition}

\begin{definition}[Channel input configuration fragments]
    We define $\providedby_{\fdginstance} : \capability_\fdginstance \times \resource \rightarrow \pow{\capability_\fdginstance}$ to be the function that returns the set of component production capabilities that provide a given channel input capability with access a given resource type. Let $\fdginstance : \fdg$ be a situation. Let $h \in \capability_\fdginstance$ be a channel consumption capability and let $r \in \resource$ be a resource type that $h$ consumes, such that $\exists c \in \channel_\fdginstance. {\ } h = \channelcons_\fdginstance(c, r)$. Then:
    \[
        \providedby_{\fdginstance}(h, r) = \bigcup \{ \quaindcomp^o_{\fdginstance}(o) {\ } | {\ } o \in \ointerface_\fdginstance. {\ } h = \quaindchannel^o_{\fdginstance}(o) \}
    \]
\end{definition}

A capability $f$ is said to have \textit{access} to a resource type $r$ iff $(f, r)$ is in the domain of $\providedby_{\fdginstance}$, and $\providedby_{\fdginstance}(f, r) \neq \emptyset$ (for component capabilities and channel input capabilities):
\[
    \mytext{access}(f, r) \iff (f, r) \in \mytext{dom} \providedby_\fdginstance \wedge \providedby_{\fdginstance}(f, r) \neq \emptyset
\]

Since we only consider resource \textit{types} in this work, we do not explicitly deal with quantities or qualities of individual resources. We capture this assumption in \autoref{post:any_access_to_resource_type_sufficient}.
\begin{postulate}[Sufficient access]\label{post:any_access_to_resource_type_sufficient}
    Any access to the appropriate resource types is sufficient to satisfy the prerequisites of a capability.
\end{postulate}

Access to all of a capability's prerequisites necessarily activates (or maintains, if already activated) a capability. Similarly, losing access to any prerequisite necessarily deactivates the manifestation.
\begin{postulate}[Liveliness]\label{post:vivacitas}
    A capability is necessarily activated, respectively deactivated, when it gains or loses access to all, respectively any, of its prerequisites.
\end{postulate}
This liveliness is, in a sense, an over-approximation of the behavior of a system.

Let $\fdginstance : \fdg$ be a situation and let $f \in \capability_\fdginstance$ be a capability. By \autoref{post:vivacitas}, we obtain:
\begin{equation}\label{eq:access_necessary_manifestation}
    f \in \text{dom} \varrho_\fdginstance \iff \forall r \in \preq_\fdginstance(f). {\ } \mytext{access}(f, r)
\end{equation}

The minimal sets of such interactions that satisfy a capability's prerequisites are called \textit{minimal configuration fragments}.
\newcommand{\mcs}{\mytext{MCS}}
\begin{definition}[Minimal configuration fragments]
    A configuration fragment is a minimal configuration fragment (\mcs) w.r.t. to a capability $f$ and a resource type $r$ is a configuration fragment, iff if one element of that set would cease to be manifested, then $f$ would not have sufficient access to $r$.
\end{definition}






\begin{definition}[Functional dependency A]\label{def:functional_dependency}
    The function $\fdep_\fdginstance : \capability_\fdginstance \times \resource \rightarrow \pow{\capability_\fdginstance}$ is defined to return the minimal configuration fragments that a given capability $f$ is functionally dependent on for access to the resource type $r$. I.e. if any element is removed from a set $S \in \fdep_\fdginstance(f ,r)$, the set of channel capabilities $S$ would not provide $f$ with sufficient access to the resource type $r$, as in \cref{eq:minimal_configuration_fragments}.
\end{definition}



We define the functions $\mytext{ProvidedQuantity}_\fdginstance : \resource \times \mathcal{P}(\channel_\fdginstance) \rightarrow \mathbb{N}$ and $\mytext{RequiredQuantity}_\fdginstance : \resource \times \capability_\fdginstance \rightarrow \mathbb{N}$ to return the quantity of resources that (a set of) capabilities provide and require respectively.

Let $f \in \capability_\fdginstance$ be a component capability. Let $r \in \preq_\fdginstance(f)$ be a resource type. Then:
\begin{multline}\label{eq:minimal_configuration_fragments}
\fdep_\fdginstance(f, r) =\\
\{ S {\ } | {\ } S \subseteq \providedby_{\fdginstance}(f, r). {\ } \mytext{Prov.Q.}_\fdginstance(r, S) \geq \mytext{Req.Q.}_\fdginstance(r, f) \wedge\\
\bigl( \neg \exists S' \subset S. {\ } \mytext{Prov.Q.}_\fdginstance(r, S \setminus S' ) \geq \mytext{Req.Q.}_\fdginstance(r, f) \bigr) \}
\end{multline}

Note that by \autoref{post:any_access_to_resource_type_sufficient} all non-empty sets $S \in \fdep_\fdginstance(f, r)$ provide the capability $f$ with access to a sufficient quantity of the resource $r$ in $\fdginstance$. I.e. the set $\fdep_\fdginstance(f, r)$ is the set of minimal configuration fragments. To make this explicit, we provide definitions for the functions $\mytext{ProvidedQuantity}_\fdginstance$ and $\mytext{RequiredQuantity}_\fdginstance$ that do not consider any aspects of instances of $r$:
\begin{align*}
\mytext{ProvidedQuantity}_\fdginstance(r, S) &= |S|\\ 
\mytext{RequiredQuantity}_\fdginstance(r, f) &= 1
\end{align*}

These definitions allow us to simplify \cref{eq:minimal_configuration_fragments} by substitution:
\begin{equation}\label{eq:fdepa_simplified}
\fdep_\fdginstance(f, r) = \{ S {\ } | {\ } S \subseteq \providedby_{\fdginstance}(f, r). {\ } |S| = 1\}
\end{equation}

Finally, if we cannot construct any such minimal configuration fragment, then the prerequisites for the manifestation of a capability are not satisfied and the capability is not manifested. Let $\fdginstance : \fdg$. Let $\alpha \in \component_\fdginstance$ and $f \in \capability^\alpha_\fdginstance$. Then by \cref{eq:access_necessary_manifestation}:
\begin{equation*}
    \forall r \in \preq_\fdginstance(f). {\ } \fdep_\fdginstance(f, r) = \emptyset \implies f \not \in \text{dom} \varrho_\fdginstance
\end{equation*}



\begin{definition}[Functional Dependency B]

    Recall that, by \autoref{def:channel_transportation} and \autoref{def:channel_output}, there is a delayed dependency between a channel's output capability and its corresponding input capability for a particular resource. We define the function $\fdepb_\fdginstance : \capability_\fdginstance \times \resource \rightarrow \capability_\fdginstance$ such that
    \begin{equation}\label{eq:fdepb_simplified}
        \fdepb_\fdginstance(k, r) = \channelcons_{\fdginstance^-}(\beta_\fdginstance(k), r)
    \end{equation}
    where $\fdginstance^-$ precedes $\fdginstance$ by $\delta$ time units.

    
\end{definition}


\begin{definition}[Functional Dependency C]\label{def:functional_dependency_c}
    The function $\fdepc_\fdginstance : \capability_\fdginstance \times \resource \rightarrow \pow{\capability_\fdginstance}$ is defined to return the minimal configuration fragments that a given channel consumption capability is functionally dependent on such that, if any element is removed from a set $S \in \fdep_\fdginstance(h ,r)$, the set of channel capabilities $S$ would not provide $h$ with sufficient access to the resource type $r$.
\end{definition}

    Following similar steps as for \autoref{def:functional_dependency}, we obtain:
    \begin{equation}\label{eq:fdepc_simplified}
    \fdepc_\fdginstance(h, r) = \{ S {\ } | {\ } S \subseteq \provides_\fdginstance(h, r). {\ } |S| = 1\}
    \end{equation}

    And here too, if we cannot construct any such minimal configuration fragment, then the prerequisites for the manifestation of a capability are not met and the capability is not manifested. Let $\fdginstance : \fdg$. Let $c \in \channel_\fdginstance$ and $h \in \capability^c_\fdginstance$. Then:
    \begin{equation*}
        \forall r \in \preq_\fdginstance(h). {\ } \fdepc_\fdginstance(h, r) = \emptyset \implies h \not \in \mytext{dom} \varrho_\fdginstance
    \end{equation*}

\begin{definition}[Functional dependency]
    We define the transitive relation $\dee_\fdginstance \subseteq \mytext{CapabilityId}^2$, which denotes that for any ordered pair $(\mytext{id}_f, \mytext{id}_g) \in \dee_\fdginstance$, the capability $f$ is functionally dependent on the capability $g$ in the situation $\fdginstance$.
\end{definition}


The relation $\dee_\fdginstance$ captures all three cascading notions of functional dependency (A, B and C).

    Let $\fdginstance : \fdg$ be a situation. Let $f \in \capability_\fdginstance$ be a capability, and let $\mytext{id}_f = \capid_\fdginstance(f)$. Then:
    \begin{multline*}
        \forall r \in \preq_\fdginstance(f).\\
        \left\{\begin{alignedat}{2}
            &\forall S^A \in \fdep_\fdginstance(f, r). {\ } \mu(S^A), {\ } && \text{if} {\ } \exists \alpha \in \component_\fdginstance. f \in \capability^\alpha_\fdginstance\\
            & (\mytext{id}_f, \capid_\fdginstance(\fdepb_\fdginstance(f, r))) \in \dee_\fdginstance, {\ } && \text{if} {\ } \exists c \in \channel_\fdginstance. \overset{\rightarrow}{\theta}(c, r, f)\\
            & \forall S^C \in \fdepc_\fdginstance(f, r). {\ } \mu(S^C), {\ } && \text{if} {\ } \exists c \in \channel_\fdginstance. \overset{\leftarrow}{\theta}(c, r, f)
        \end{alignedat}
        \right.
    \end{multline*}

    Where
    \begin{align*}
        \mu(S) &\iff \forall g \in S. {\ } (\mytext{id}_f, \capid_\fdginstance(g)) \in \dee_\fdginstance \\
        \overset{\rightarrow}{\theta}(c, r, f) &\iff \channelprod_\fdginstance(c, r) = f\\
        \overset{\leftarrow}{\theta}(c, r, f) &\iff \channelcons_\fdginstance(c, r) = f
    \end{align*}

This notion of functional dependency is intimately tied to that of errors: if there exist prerequisites for a component's capability, and there does not exist a minimal configuration fragment that satisfies these prerequisites, then that capability is not, and cannot be, manifested as a function. If, in any particular situation, there exists an expectation for the manifestation of such a function, then an error is triggered.

\mypar{On Dependency and Expectation}
Knowing that the manifestation of a capability has functional dependencies, we cannot expect a capability to be manifested if such functional dependencies are not satisfied. Therefore, we introduce a cascading notion of expectation, mirroring the transitive relation of functional dependency.

\begin{stipulate}[Cascading expectations]\label{def:transitivity_of_expectations}
    If we expect a capability to be manifested, we must also expect the sufficient conditions for that manifestation to hold.
\end{stipulate}

    The sufficient conditions for the manifestation of a capability are determined by access to its prerequisites. That is:
    \begin{enumerate}
        \item If we have an expectation for a component's capability $f$, then we also have an expectation for the channels' capabilities that provide access to $f$'s prerequisites, and;
        \item If we have an expectation for a channel to produce a resource, then we also have an expectation for that channel to have consumed that resource, and;
        \item If we have an expectation for a channel to consume a resource, then we also have an expectation for some component capabilities to provide that resource.
    \end{enumerate}
    
    The cascading of expectations in this manner reflect the three types of functional dependency (A, B and C) captured by the transitive relation $\dee_\fdginstance$. Formally:

    \begin{equation*}
        \forall {\,} \fdginstance : \fdg. \forall (\mytext{id}_f, \mytext{id}_g) \in \dee_\fdginstance. {\ } \mytext{id}_f \in \expectation_\fdginstance \implies \mytext{id}_g \in \expectation_\fdginstance
    \end{equation*}




\mypar{Example}
Recall the running example from \autoref{fig:print_head}. The \texttt{print head}'s function \textit{to deposit ink} is functionally dependent on the \texttt{PSU}'s function \textit{to provide electricity}, and the \texttt{ink reservoir}'s function \textit{to provide ink}.

\section{Error propagation}\label{sec:error_propagation}

In this section, we combine the definitions of errors and functional dependency to construct a notion of error propagation. This forms the theoretical backbone for {\fdgname} analysis and FT synthesis.

Error propagation is a sequence of causal events, where each situation that is the antecedent of such an event triggers an error. This is similar in structure to the causal chain shown in \cref{eq:indirect_causation}, \autoref{def:causation}.

\begin{definition}[Error propagation]\label{def:error_propagation}
Error propagation is a complex event, denoted $\fdginstance_i \overset{E}\rightsquigarrow \fdginstance_{i+n}$, that comprises a causal sequence of events
\[
    \fdginstance_i \overset{e_i}{\generalevent} \dots \xrightarrow{e_{i+n-1}} \fdginstance_{i+n}
\]
such that each consequent situation $\fdginstance_{i+j}$, $0 < j \leq n$ in the sequence triggers an error, i.e. there exists some capability $f_j$ such that
\[
    \fdginstance_{i+j} \xrightarrow{\err \langle f_{j} \rangle} \fdginstance_{n_j}
\]
Then \textit{all} these events are mereological parts of $E$, i.e.
\[
    \mytext{has-part}(E, e_i); \qquad \mytext{has-part}(E, \err \langle f_j \rangle)
\]
\end{definition}

\mypar{Errors and functional dependency}
By \autoref{def:error}, when a producer of a particular resource experiences an error over the capability that produces it, it no longer provides that resource to any other capabilities. And through functional dependency, the manifestation of a capability as a function depends on access to certain types of resources.

These facts together mean that error propagation is observed if in an antecedent situation $\fdginstance$, the following holds:
\begin{enumerate}
    \item A capability $g$ provides a channel $c$ with a resource $r$, and;
    \item A capability $f$, with $r$ as  prerequisite, is provided with $r$ by $c$, and;
    \item Let $h = \channelcons_\fdginstance(c, r)$ be the input capability of channel $c$, such that there exists no minimal configuration fragment for $h$ that does not include $g$, and;
    \item Let $k = \channelprod_\fdginstance(c, r)$ be the output capability of cannel $c$, such that there exists no minimal configuration fragment for $f$ that does not include $k$, and;
    \item There exists an expectation for the manifestation of $f$ (whereby there exist expectations for $k$, $h$ and $g$).
\end{enumerate}
Then...
\begin{enumerate}
    \item If in the situation $\fdginstance$ fault creation occurs over $g$, i.e. $\fdginstance \overset{\faultcreation_{\{ \faultinstance{g} \}}}{\looparrowright} \fdginstance'$, then;
    \item[] Fault activation occurs, such that an error $\err \langle g \rangle$ is triggered over $g$, i.e. $\fdginstance' \xrightarrow{\err \langle g \rangle} \fdginstance_n'$, and;
    \item[] Also in $\fdginstance'$ an error $\err \langle h \rangle$ is triggered over $h$, i.e. $\fdginstance' \xrightarrow{\err \langle h \rangle} \fdginstance_{n'}'$.
    \item Then in the situation $\fdginstance'$, the capability $h$ is no longer manifested, and a depletion event $\Delta$ is triggered over the bearing channel, i.e. $\fdginstance' \xrightarrow{\Delta \langle \beta(h) \rangle} \fdginstance''$, and;
    \item[] In the situation $\fdginstance''$, $k$ is no longer manifested and an error is triggered over $k$, such that $\fdginstance'' \xrightarrow{\err \langle k \rangle} \fdginstance_j''$, and;
    \item[] In that situation $\fdginstance''$, $f$ is no longer manifested and an error $\err \langle f \rangle$ is triggered over $f$, i.e. $\fdginstance'' \xrightarrow{\err \langle f \rangle} \fdginstance_{j'}''$.
\end{enumerate}

Let $\fdginstance_i \overset{E}\rightsquigarrow \fdginstance_{i+n}$ denote a causal chain of error propagation, such that
\[
\fdginstance_i \overset{E}\rightsquigarrow \fdginstance_{i+n} \iff \fdginstance_i \overset{e}{\generalevent} \fdginstance_{i+1} \xrightarrow{\Delta_{i+1}}  \dots \xrightarrow{\Delta_{i+n-1}} \fdginstance_{i+n} 
\]
where the event $e$ and the depletion events $\Delta_{i + 1}$ and $\Delta_{i+2}$ combine to form the complex error propagation event:
\begin{equation*}
    \mytext{has-part}(E, e)
\end{equation*}
and
\begin{equation*}
    \forall j. {\ } i < j < i+n. {\ } \mytext{has-part}(E, \Delta_{j})
\end{equation*}

Lastly, we call such a causal chain an \textit{error propagation} iff in the consequent situation of each event an error is triggered:
\begin{equation*}
    \forall j. {\ } i \leq j < i+n. {\ } \exists \fdginstance' : \fdg. {\ } \fdginstance_j \xrightarrow{\err_j} \fdginstance' \wedge \mytext{has-part}(E, \err_{j})
\end{equation*}

The situation this complex event brings about is determined by the temporal order of the situations brought about by the events it comprises. We find that the error propagation $E$, with $n \geq 3$ events and $n+1$ situations $\fdginstance_i, \dots, \fdginstance_{i+n}$, is a sequence over $\prec$, bringing about the latest situation $\fdginstance_{i+n}$ in that sequence.

\mypar{Example of error propagation}
In \autoref{fig:example_error_propagation} we show a smaller version of the running example introduced in \autoref{fig:print_head_formal}. Let $g$ be the symbol for the \texttt{Ink reservoir}'s capability to \textit{provide ink}. Let $f$ denote the \texttt{Print head}'s capability to \textit{deposit ink}. Let there be a channel $c$ between them that accommodates \texttt{Ink} (denoted $r$). Let $h = \channelcons_\fdginstance(c, r)$ be the channel's input capability, and let $k = \channelprod_\fdginstance(c, r)$ be the channel's output capability. Lastly, recall that $\bar{g}$ and $\bar{f}$ denote the manifestation of the capabilities $g$ and $f$ as a function respectively.

The left-hand side of \autoref{fig:example_error_propagation} labels each situation, and the edges between those labels depict events (e.g. the event $\fdginstance \overset{\faultcreation_{\{ \faultinstance{g} \}}}\looparrowright \fdginstance'$ is depicted).

\begin{figure}[hb!]
\centering
\includegraphics[]{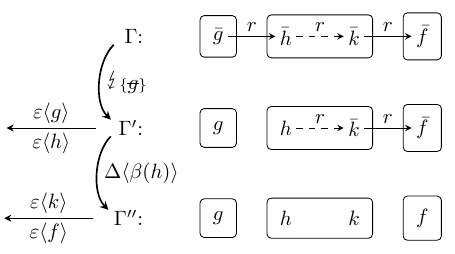}
\caption{An example of error propagation.}\label{fig:example_error_propagation}
\end{figure}

We wish to place an expectation on the manifestation of $f$. By \autoref{def:transitivity_of_expectations}, we must then also place an expectation on $k$, $h$ and $g$. \autoref{fig:example_error_propagation} then shows how an error over $g$ propagates, through $c$, to an error over $f$.

A solid arrow in the diagram denotes the presence of an interface. As the errors propagate, the interactions between capabilities are terminated and the arrows are removed. The dashed arrow signifies a satisfied functional dependency between the output and input capabilities of a channel (for a particular resource type).

Furthermore, if $\bar{f}$ provides another expected function in $\fdginstance''$, the propagation continues.

\begin{remark}[The end point of an error]
Note that in UFO, all events must have an endpoint. Events are also immutable. Foregoing the philosophical debate on determinism v.s. free will, this only makes sense if these points are necessarily in the past. Therefore, these error events must have obtained in a situation before we ``query" (or ``factually know") anything about such error events.
\end{remark}

\begin{remark}[The causality of errors]    
There are no causal relationships between error events themselves. The causal aspect of an error propagation is carried by the causality between fault creation and depletion events across functional dependency.
\end{remark}

\begin{proposition}[Direct causes for errors]\label{prop:causes_for_errors}
For any function, fault creation (\autoref{def:fault_activation}) and propagation of depletion events through unmet functional dependency (\autoref{def:error_propagation}) make up the two possible direct causes for errors in {\fdgname}s.
\end{proposition}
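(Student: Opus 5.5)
Fix an error $\err \langle f \rangle$ with $\hat\fdginstance \xrightarrow{\err \langle f \rangle} \fdginstance'$ and ask which events $\anyevent$ can satisfy $\ufo{directly-causes}(\anyevent, \err \langle f \rangle)$. By \cref{eq:direct_causation} (S6), such an event must bring about the triggering situation $\hat\fdginstance$, i.e. $\fdginstance^\ast \overset{\anyevent}\generalevent \hat\fdginstance$. By \cref{eq:error}, $\hat\fdginstance$ is the earliest situation in which the expectation on $f$ is not met, so $f \notin \text{dom}\,\varrho_{\hat\fdginstance}$. Since $\hat\fdginstance$ is earliest, the antecedent $\fdginstance^\ast$ does meet the expectation, so $f \in \text{dom}\,\varrho_{\fdginstance^\ast}$. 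Here we use the cascading-expectation stipulation and the absence of expectation dynamics to assume $f \in \expectation$ throughout. Hence $\anyevent$ is an event whose postconditions include the termination of the manifestation of $f$. By \autoref{post:events_as_changes} and \autoref{post:time_is_passive}, only events that mention $f$ (or what $f$'s manifestation depends on) in their postconditions can change this.

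Next, \cref{eq:access_necessary_manifestation} characterises non-manifestation: either (a) $f$ is prevented by a fault, or (b) some $r \in \preq_{\hat\fdginstance}(f)$ has $\neg\mytext{access}(f,r)$. I will argue that these are the only ways $f$ can leave $\text{dom}\,\varrho$. In case (a), the fault must be new relative to $\fdginstance^\ast$: otherwise, by the fault-prevention constraint, $f$ could not have been manifested in $\fdginstance^\ast$. Faults are moments and do not arise from the passing of time (\autoref{post:time_is_passive}), so the only event that introduces faults is the spontaneous fault creation $\faultcreation_F$ of \cref{eq:fault_creation} with $\capid(f) \in$ some covered set. This gives exactly the fault activation case \cref{eq:fault_activation}.

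In case (b), $\mytext{access}(f,r)$ fails only when $\providedby_{\hat\fdginstance}(f,r)=\emptyset$, i.e. when the relevant interface (or, for a channel output, the delayed input manifestation) disappears. I will split by the role of $f$ according to the three functional dependency notions A, B and C. For a channel output capability $k$, losing access means $\channelcons$ was not manifested $\delta$ earlier. By \cref{eq:channel_transportation}, the event bringing about $\hat\fdginstance$ is then precisely a depletion $\Delta\langle c\rangle$. For component capabilities and channel input capabilities, the interface is a relator that existentially depends on the manifestation of its qua-capabilities. So its disappearance is synchronous with the loss of manifestation of a provider $g$ with $(\capid(f),\capid(g)) \in \dee$. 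By stipulation, $g$ is expected and hence erroneous. Recursing on $g$ along $\dee$, the change is ultimately brought about either by a depletion event (if $g$ is a channel output) or by a fault creation synchronous with it (the interface-termination clause after \cref{eq:fault_creation}). Both fall under the two stated cases, the former being part of an error propagation in the sense of \autoref{def:error_propagation}.

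The main obstacle is the exhaustiveness step: ruling out every other event type as something that could bring about $\hat\fdginstance$. The paper gives no closed catalogue of event types, so I would lean on \autoref{post:events_as_changes} to restrict attention to events whose postconditions mention $f$'s manifestation, faults, or interfaces. I would then argue that, among the events defined so far, only $\faultcreation$ and $\Delta$ have such postconditions. Error events are excluded by the remark that errors have no causal relationships among themselves. The recursion in case (b) terminates because $\dee$ restricted to the finite $\capability_\fdginstance$ has a well-founded chain back to either a channel depletion or a fault creation within the finitely many synchronous steps. I would prove this by induction on the length of the synchronous dependency chain.
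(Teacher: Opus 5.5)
The paper gives no proof of this proposition; it is stated as a summary of \cref{eq:fault_activation_full}, the error-propagation walk-through, and the remark that causality is carried by fault-creation and depletion events. Your case analysis on why $f$ leaves $\text{dom}\,\varrho$ is a sensible way to supply the missing exhaustiveness argument. Its mechanics are right: the A/B/C split, and a fault creation in a provider acting synchronously through the interface-termination clause.

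The gap is the step ``hence $\anyevent$ is an event whose postconditions include the termination of the manifestation of $f$''. By S6 (\cref{eq:direct_causation}), direct causation depends only on which situation an event brings about. So \emph{every} event ending in $\hat\fdginstance$ directly causes $\err\langle f\rangle$, and \autoref{post:events_as_changes} cannot filter these out. Locality says an event leaves unaltered what its conditions do not mention; it does not say that every change between $\fdginstance^\ast$ and $\hat\fdginstance$ is due to $\anyevent$.

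A concrete witness is the manifestation $\bar f$ itself. \autoref{def:channel_transportation} and \cref{eq:channel_transportation} identify the end point of a manifestation with the earliest situation in which the capability is no longer manifested. So by S2 and S4, $\bar f$ brings about $\hat\fdginstance$, and by S6 it directly causes $\err\langle f\rangle$. Your closing claim that, among the defined events, only $\faultcreation$ and $\Delta$ have the relevant postconditions therefore overlooks function events, as well as any unrelated event that happens to end in $\hat\fdginstance$.

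Exhaustiveness cannot be derived from Locality. You have to state as an explicit hypothesis the reading the paper uses tacitly: a direct cause is an event bringing about $\hat\fdginstance$ whose postconditions effect the loss of $f$'s manifestation. With that hypothesis, your cases (a) and (b) do close the argument.

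Relatedly, ``earliest'' in \autoref{def:earliest_situation} only constrains the immediately preceding time point. But $\fdginstance^\ast$ obtains at the begin point of $\anyevent$, which for a depletion is $\consdv$ earlier. That $f$ is manifested in $\fdginstance^\ast$ should be argued from the no-repair assumption together with fixed expectations, not from earliness.

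Your termination argument is also misjustified. $\dee_\fdginstance$ need not be well-founded. A feedback loop, with component $X$ feeding $Y$ through one channel and $Y$ feeding $X$ through another, gives $(\mytext{id}_f,\mytext{id}_f)\in\dee_\fdginstance$ by transitivity; this is why \autoref{alg:fdg_ft_transformation} carries the recursion guard $P$. So finiteness of $\capability_\fdginstance$ gives you nothing.

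What actually bounds the recursion is the $\consdv$-delay. Dependencies A and C are synchronous, because interface relators vanish together with the manifestations they depend on, but B is not. Hence every synchronous chain has the form $h \leftarrow g \leftarrow k$ (channel input, component capability, channel output) with at most two steps. The loss of $k$ is then brought about either by $\Delta\langle\beta(k)\rangle$ or by a fault creation over $k$. That structural observation replaces your induction.
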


\section{{\fdgname} Analysis}\label{sec:structure_function}
In this section we provide an error function that builds upon the notion of error propagation from the previous section. This error function is instrumental in shaping the FT synthesis algorithm.

The error function of {\fdgname}s answers a diagnostic question: \textit{``What events could lead to an error over this capability?''}
This question is, in a way, the dual of the predictive question answered by the definitions of fault activation and error propagation: \textit{``What are the consequences of an error over this capability?''}
Since we have at present already developed that ``predictive'' framework, we use it now to answer the diagnostic question.

\autoref{tab:glossary_fdg_perdurant} provides a glossary of the perdurant {\fdgname} concepts introduced earlier.

\begin{table}[]
    \centering
    \begin{tabular}{cl}
        \toprule
        \textbf{Notation} & \textbf{Description} \\
        \midrule
        $\fdginstance \prec \fdginstance'$ & Situation $\fdginstance$ precedes the situation $\fdginstance'$ \\
        \midrule
        $\fdginstance \preceq \fdginstance'$ & Situation $\fdginstance$ precedes, or coincides with, $\fdginstance'$ \\
        \midrule
        $\fdginstance \overset{\event}\generalevent \fdginstance'$ & Event $\event$ with antecedent $\fdginstance$ and consequent $\fdginstance'$ \\
        \midrule
        $\fdginstance \xrightarrow{\event} \fdginstance'$ & A triggered event $\event$ \\
        \midrule
        $\fdginstance \overset{\spevent}\looparrowright \fdginstance'$ & A spontaneous event $\spevent$ \\
        \midrule
        $\fdginstance \xrightarrow{\Delta \langle c \rangle} \fdginstance'$ & A depletion event over the channel $c$ \\
        \midrule
        $\fdginstance \overset{\faultcreation_F}\looparrowright \fdginstance'$ & A fault creation event, creating the faults $F$ \\
        \midrule
        $\fdginstance \xrightarrow{\err \langle e \rangle} \fdginstance'$ & An error event over the expectation $e$ \\
        \midrule
        $\fdginstance \overset{\faultcreation_F}\looparrowright \fdginstance' \xrightarrow{\err \langle e \rangle} \fdginstance''$ & Fault activation over $e \in F$ \\
        \midrule
        $\fdginstance_i \overset{E}\rightsquigarrow \fdginstance_{i+n}$ & Error propagation \\
        \bottomrule
    \end{tabular}
    \caption{A glossary of perdurant {\fdgname} concepts.}
    \label{tab:glossary_fdg_perdurant}
\end{table}

\begin{definition}[{\fdgname} error function]\label{def:fdg_eval}
We use the notation $[\![ f ]\!]^{\fdginstance_{n}}_{\fdginstance_0} \in \{ \top, \bot \}$ to indicate that a capability $f$ is erroneous in the situation $\fdginstance_n$:
\begin{multline*}
    \forall \fdginstance_n : \fdg. \forall f \in \capability_{\fdginstance_n}. {\ } [\![ f ]\!]^{\fdginstance_{n}}_{\fdginstance_0} \iff \exists \fdginstance_{n-i}, \fdginstance_{n+j} : \fdg.\\
    \fdginstance_0 \preceq \fdginstance_{n-i} \preceq \fdginstance_n \prec \fdginstance_{n+j} \wedge \fdginstance_{n-i} \xrightarrow{\err \langle f \rangle} \fdginstance_{n+j}
\end{multline*}

\end{definition}

\mypar{A structure function implementation}
In its current form, the definition of the error function given in \autoref{def:fdg_eval} is still not very helpful in answering the diagnostic question. We have described in \autoref{prop:causes_for_errors} that, to find all possible causes for an error $\fdginstance_{n-i} \xrightarrow{\err \langle f \rangle} \fdginstance_{n+j}$ that spans the situation $\fdginstance_n$, we must consider both i) fault activation of a fault created in its bearer and ii) error propagation through functional dependency. I.e. given $[\![f]\!]_{\fdginstance^n} = \top$, we know that there must exist a previous situation $\fdginstance_{n-i}$ from which either fault creation or error propagation occurred.

To cite the principal NASA handbook on Fault Tree construction\cite{nasa2002}:
\begin{quote}
\textit{``In this way the analyst proceeds down the tree continually transferring the point of view from
mechanism to mode, and continually approaching finer resolution in defining mechanisms and
modes, until ultimately, the limit of resolution of the tree is reached. This limit consists of basic
component failures of one sort or another, and the tree is now complete.''}
\end{quote}

We now split on fault creation and the three cases of functional dependency to derive such a recursive structure function.

    
First, by definition of errors (\autoref{def:error}) there must exist an expectation $f \in \expectation_{\fdginstance_n}$. I.e.:
\begin{equation}\label{eq:strcuture_function_implies_expectation}
[\![ f ]\!]^{\fdginstance_{n}}_{\fdginstance_0} \implies f \in \expectation_{\fdginstance_n}
\end{equation}

Then, let $E_{\fdginstance_n}$ be a set of erroneous capabilities in $\fdginstance_n$:
\begin{equation}\label{eq:e_erroneous_capabilities}
E_{\fdginstance_n} = \left\{ f {\ } | {\ } f \in \capability_{\fdginstance_n}. {\ } \mytext{ErroneousCapability}(\capid_{\fdginstance_n}(f)) \right\}
\end{equation}

By definition of the erroneous phase of capabilities, there exists an error for each element of $E_{\fdginstance_n}$ whose begin point is in $\fdginstance_n$ or precedes $\fdginstance_n$, and whose end point is after $\fdginstance_n$:
\begin{multline}\label{eq:errors_span_f}
    \forall f \in E_{\fdginstance_n}. \exists \fdginstance_{n-i}, \fdginstance_{n+j} : \fdg.\\
    \fdginstance_{n-i} \preceq \fdginstance_n \prec \fdginstance_{n+j} \wedge \fdginstance_{n-i} \xrightarrow{\err \langle f \rangle} \fdginstance_{n+j}
\end{multline}

Let $\fdginstance_0, \fdginstance_n : \fdg$ be {\fdgname}s, such that $\fdginstance_0 \preceq \fdginstance_n$. Let $f \in \capability_{\fdginstance_n}$ be a capability. By \autoref{def:fdg_eval}, it is clear that
\[
    [\![ f ]\!]^{\fdginstance_{n}}_{\fdginstance_0} \iff f \in E_{\fdginstance_n}
\]

\mypar{Fault Creation}
Let $f \in \expectation_{\fdginstance_n}$ be an expectation for the manifestation of $f$ in $\fdginstance_n$. Then by \cref{eq:fault_activation_full}:
\begin{equation}\label{eq:fault_activation_under_structure_function}
    \left( \exists \fdginstance_{n-i}, \fdginstance_{n-i}' : \fdg. {\ } \fdginstance_{n-i} \overset{\faultcreation_{\{ \faultinstance{f} \}}}\looparrowright \fdginstance_{n-i}' \right) \implies {\ } f \in E_{\fdginstance_n}
\end{equation}

where $\fdginstance_{n-i}' \preceq \fdginstance_n$.

\noindent Furthermore, error propagation can occur;

\medskip
\noindent\mypar{Propagation case A (input propagation)}

The capability $f$ is a component capability:
\[
\exists \alpha \in \component_{\fdginstance_n}. {\ } f \in \capability^\alpha_{\fdginstance_n}
\]

Let $f \in \expectation_{\fdginstance_n}$ be an expectation for the manifestation of $f$ in $\fdginstance_n$. Then:
\begin{multline*}
    \exists r \in \preq_{\fdginstance_n}(f). \left( \forall S^A \in \fdep_{\fdginstance_0}(f, r). {\ } S^A \subseteq E_{\fdginstance_n} \right) \implies f \in E_{\fdginstance_n}
\end{multline*}

By \cref{eq:errors_span_f}, $S^A \subseteq E_{\fdginstance_n}$ implies there is an error event for all capabilities in each set of minimal configuration fragments in $\fdep_{\fdginstance_0}(f, r)$. I.e.:
\begin{equation}\label{def:structure_function_big_wedge}
    \exists r \in \preq_{\fdginstance_n}(f). \left( \bigwedge_{\mathrlap{S^A \in \fdep_{\fdginstance_0}(f, r)}} S^A \subseteq E_{\fdginstance_n} \right) \implies f \in E_{\fdginstance_n}
\end{equation}

A conjunction over the empty set is vacuously true, i.e.:
\[
\bigwedge_{S^A \in {\ } \emptyset} = \top
\]
whereby $\fdep_{\fdginstance_n}(f, r) = \emptyset$ implies that \autoref{def:structure_function_big_wedge} is (vacuously) true. This matches the semantics that ``if we cannot construct any such minimal configuration fragment, then the prerequisites for the manifestation of a capability are not met and the capability cannot be manifested'' (\autoref{def:functional_dependency}).

\medskip
\noindent \mypar{Propagation case B (depletion propagation)}

The capability $f$ is a channel production capability:
\[
\exists c \in \channel_{\fdginstance_n}. \exists r \in \resource. {\ } f = \channelprod_{\fdginstance_n}(c, r)
\]

Given that $f \in \expectation_{\fdginstance_n}$:
\begin{equation*}
    \left( \exists \fdginstance_{n-i}, \fdginstance_{n-i}' : \fdg. \fdginstance_{n-i} \xrightarrow{\Delta \langle \beta(f) \rangle} \fdginstance_{n-i}' \right) \implies f \in E_{\fdginstance_n}
\end{equation*}

where $\fdginstance_{n-i}' \preceq \fdginstance_n$.

\medskip
\noindent\mypar{Propagation case C (output propagation)}

$f$ is a channel consumption capability:
\[
\exists c \in \channel_{\fdginstance_n}. \exists r \in \resource. {\ } f = \channelcons_{\fdginstance_n}(c, r)
\]

Given that $f \in \expectation_{\fdginstance_n}$:
\begin{multline*}
    \exists r \in \preq_{\fdginstance_n}(f).\\
    \left( \forall S^C \in \fdepc_{\fdginstance_0}(f, r). {\ } S^C \subseteq E_{\fdginstance_n} \right) \implies f \in E_{\fdginstance_n}
\end{multline*}

By \cref{eq:e_erroneous_capabilities}, $S^C \subseteq E_{\fdginstance_n}$ implies there is an error event for all capabilities in each set of minimal configuration fragments in $\fdepc_{\fdginstance_0}(f, r)$. I.e.:
\begin{equation}\label{def:structure_function_big_wedge_c}
    \exists r \in \preq_{\fdginstance_n}(f). \left( \bigwedge_{\mathrlap{S^C \in \fdepc_{\fdginstance_0}(f, r)}} S^C \subseteq E_{\fdginstance_n} \right) \implies f \in E_{\fdginstance_n}
\end{equation}

Recall that we defined conjunction such that $\bigwedge_{S^C \in {\ } \emptyset} = \top$. Therefore, $\fdepc_{\fdginstance_n}(f, r) = \emptyset$ implies that \autoref{def:structure_function_big_wedge_c} is (vacuously) true. This matches the semantics that ``if we cannot construct any such minimal configuration fragment, then the prerequisites for the manifestation of a capability are not met and the capability cannot be manifested'' (\autoref{def:functional_dependency_c}).


\begin{figure*}[ht!]

    \normalsize


    \begin{align*}
    \forall {\,} \fdginstance_n : \fdg. {\ } f \in \capability_{\fdginstance_n}. {\ } [\![ f ]\!]^{\fdginstance_{n}}_{\fdginstance_0} \iff \mytext{id}_f \in \expectation_{\fdginstance_{n}} \wedge \Biggl( &\exists \fdginstance_{n-i}, \fdginstance_{n-i}' : \fdg. {\ } \fdginstance_{n-i} \overset{\faultcreation_{F}}{\looparrowright} \fdginstance_{n-i}' {\ } \vee
    \bigvee_{r \in \preq_{\fdginstance_{n}}(f)} \biggl( \\
        &\Bigl( \exists \alpha \in \component_{\fdginstance_{n}}. {\ } f \in \capability^\alpha_{\fdginstance_{n}} \implies {\ } \bigwedge_{S^A \in \fdep_{\fdginstance_0}(f, r)} \bigvee_{k \in S^A} [\![k]\!]^{\fdginstance_{n}}_{\fdginstance_0} \Bigr) \vee \\
        &\Bigl( \exists c \in \channel_{\fdginstance_{n}}. {\ } f = \channelprod_{\fdginstance_{n}}(c, r) \implies {\ } \exists \fdginstance_{n-j} : \fdg. {\ } [\![\fdepb_{\fdginstance_0}(f, r)]\!]^{\fdginstance_{n-j}}_{\fdginstance_0} \Bigr) \vee \\
        &\Bigl( \exists c \in \channel_{\fdginstance_{n}}. {\ } f = \channelcons_{\fdginstance_{n}}(c, r) \implies {\ } \bigwedge_{S^C \in \fdepc_{\fdginstance_0}} \bigvee_{g \in S^C} [\![g]\!]^{\fdginstance_{n}}_{\fdginstance_0} \Bigr) \biggr) \Biggr)
    \end{align*}

    \text{Where $\fdginstance_0 \prec \fdginstance_{n-i}$ and $\fdginstance_{n-i}' \preceq \fdginstance_{n}$, and $\mytext{id}_f = \capid_{\fdginstance_n}(f)$, and $\mytext{id}_{\faultinstance{f}} \in F$, and $\fdginstance_{n-j} \prec \fdginstance_n$.}\\
    \text{Additionally, there exists time points $t_{n-j}$ and $t_n$ such that $\mytext{obtainsIn}(\fdginstance_{n-j}, t_{n-j})$ and $\mytext{obtainsIn}(\fdginstance_n, t_n)$, and}\\
    \text{$t_n - t_{n-j} \geq \delta$.}

    \hrulefill
    \vspace*{4pt}
    
    \caption{The definition of the {\fdgname} structure function $[\![\cdot]\!]^{\fdginstance_n}_{\fdginstance_0}$.}\label{fig:structure_function}
\end{figure*}

\mypar{Final remarks}
For completeness, we must also define the case in which $\fdginstance_n$ \textit{is} the initial situation $\fdginstance_0$
. This is trivial by the definitions or errors (\autoref{def:error}) and causation (\autoref{def:causation}); let $f \in \expectation_{\fdginstance_n}$. Then:
\[
[\![ f ]\!]^{\fdginstance_{n}}_{\fdginstance_0} \iff f \not \in \text{dom} \varrho_{\fdginstance_n}
\]
and $\err \langle f \rangle$ has no cause.


Furthermore, the evaluation of a capability using this structure function is only possible if there exists a future situation $\fdginstance_{n+j}$ such that $\fdginstance_n \prec \fdginstance_{n+j}$ in or before which all relevant errors have their end point.

\mypar{Example}
In \autoref{fig:structure_function_example} we show a system with redundant cables between the power supply unit (PSU) and the print head from the running example. Let $g$ denote the \texttt{PSU}'s capability to \textit{provide electricity}. Let $h_0, k_0$, and $h_1, k_1$ respectively denote \texttt{cable harness 1} and \texttt{cable harness 2}'s capabilities to \textit{transport electricity}. Finally, let $f$ denote the \texttt{ink head}'s capability to \textit{deposit ink}.

\begin{figure}[ht!]
\centering

\includegraphics[]{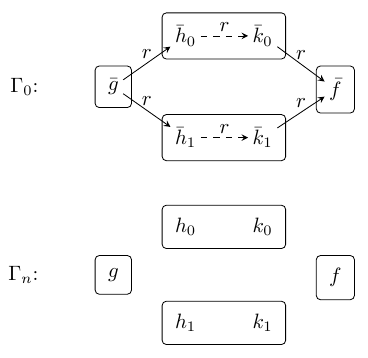}

\caption{An example of a system with redundancy.}
    \label{fig:structure_function_example}
\end{figure}

We now show that, using \cref{fig:structure_function}, we can symbolically determine all causes for an error over $\bar{f}$ in situation $\fdginstance^n$ w.r.t. the situation $\fdginstance_0$ as depicted in \autoref{fig:structure_function_example}. Let $\gamma \in \component_{\fdginstance_0}$ and $\alpha \in \component_{\fdginstance_0}$ be components, such that $\bar{g} \in \function^\gamma_{\fdginstance_0}$ and $\bar{f} \in \function^\alpha_{\fdginstance_0}$. Let $c_0 \in \channel_{\fdginstance_0}$ and $c_1 \in \channel_{\fdginstance_0}$ be channels, such that $h_0, k_0 \in \function^{c_0}_{\fdginstance_0}$ and $h_1, k_1 \in \function^{c_1}_{\fdginstance_0}$.

The arrows in \autoref{fig:structure_function_example} depict interfaces. I.e.: let $o_0 \in \ointerface_{\fdginstance_0}$ be an output interface, such that $\quaindcomp^o_{\fdginstance_0}(o_0) = \{ \capid_{\fdginstance_0}(g) \}$ and $\quaindchannel^o_{\fdginstance_0}(o_0) = \capid_{\fdginstance_0}(h_0)$ and $\mytext{proj}_3(o_0) = r$. Let $o_1 \in \ointerface_{\fdginstance_0}$ be an output interface, such that $\quaindcomp^o_{\fdginstance_0}(o_1) = \{ \capid_{\fdginstance_0}(g) \}$ and $\quaindchannel^o_{\fdginstance_0}(o_1) = \capid_{\fdginstance_0}(h_1)$ and $\mytext{proj}_3(o_1) = r$.

Let $i_0 \in \iinterface_{\fdginstance_0}$ be an input interface, such that $\quaindcomp^i_{\fdginstance_0}(i_0) = \{ \capid_{\fdginstance_0}(f) \}$ and $\quaindchannel^i_{\fdginstance_0}(i_0) = \capid_{\fdginstance_0}(k_0)$ and $\mytext{proj}_3(i_0) = r$. Let $i_1 \in \iinterface_{\fdginstance_0}$ be an input interface, such that $\quaindcomp^i_{\fdginstance_0}(i_1) = \{ \capid_{\fdginstance_0}(f) \}$ and $\quaindchannel^i_{\fdginstance_0}(i_1) = \capid_{\fdginstance_0}(k_1)$ and $\mytext{proj}_3(i_1) = r$.

From this, it follows that $\channelcons_{\fdginstance_0}(c_0, r) = h_0$ and $\channelprod_{\fdginstance_0}(c_0, r) = k_0$, and $\channelcons_{\fdginstance_0}(c_1, r) = h_1$ and $\channelprod_{\fdginstance_0}(c_1, r) = k_1$.

\begin{table}[t!]
\begin{tabular}{llll}
\textbf{Capability} & \textbf{$\fdep_{\fdginstance_0}(\cdot, r)$} & \textbf{$\fdepb_{\fdginstance_0}(\cdot, r)$} & \textbf{$\fdepc_{\fdginstance_0}(\cdot, r)$} \\
\toprule
$f$ & $\left\{ \{ k_0 \}, \{ k_1 \} \right\}$ & -- & -- \\
\midrule
$k_0$ & -- & $h_0$ & -- \\
\midrule
$k_1$ & -- & $h_1$ & -- \\
\midrule
$h_0$ & -- & -- & $\left\{ \{ g \} \right\}$ \\
\midrule
$h_1$ & -- & -- & $\left\{ \{ g \} \right\}$ \\
\midrule
$g$ & $\emptyset$ & -- & -- \\
\bottomrule
\end{tabular}
\caption{The functional dependency relationships in $\fdginstance_0$ of \autoref{fig:structure_function_example}.}\label{tab:structure_function_example_fdep}
\end{table}

\mypar{Derivation}
\autoref{tab:structure_function_example_fdep} presents the functional dependency relationships in $\fdginstance_0$ so that we can refer to them in the derivation. We also use the symbol $\faultcreation$ as shorthand notation for the existence of a fault creation event;
\[
\faultcreation^{\fdginstance_n}_{\fdginstance_0} \langle \mytext{id}_f \rangle \iff \exists \fdginstance_{n-i}, \fdginstance_{n-i}' : \fdg. {\ } \fdginstance_{n-i} \overset{\faultcreation_{F}}{\looparrowright} \fdginstance_{n-i}'
\]
where $\fdginstance_0 \prec \fdginstance_{n-i}$ and $\fdginstance_{n-i}' \preceq \fdginstance_{n}$, and $\mytext{id}_{\faultinstance{f}} \in F$.

We evaluate the structure function $[\![ f ]\!]^{\fdginstance_n}_{\fdginstance_0}$ for the presence of an error over the capability $f$;
\begin{align}\label{eq:example_structure_function_derivation}
\begin{split}
    [\![ f ]\!]^{\fdginstance_{n}}_{\fdginstance_0} &\iff \mytext{id}_f \in \expectation_{\fdginstance_n} \wedge \left( \faultcreation^{\fdginstance_n}_{\fdginstance_0} \langle \mytext{id}_f \rangle \vee \left( [\![k_0]\!]^{\fdginstance_{n}}_{\fdginstance_0} \wedge [\![k_1]\!]^{\fdginstance_{n}}_{\fdginstance_0} \right) \right)\\
    [\![ k_0 ]\!]^{\fdginstance_{n}}_{\fdginstance_0} &\iff \faultcreation^{\fdginstance_n}_{\fdginstance_0} \langle \mytext{id}_{k_0} \rangle \vee \left( \exists \fdginstance_{n-j} : \fdg. {\ } [\![h_0]\!]^{\fdginstance_{n-j}}_{\fdginstance_0} \right) \\
    [\![ k_1 ]\!]^{\fdginstance_{n}}_{\fdginstance_0} &\iff \faultcreation^{\fdginstance_n}_{\fdginstance_0} \langle \mytext{id}_{k_1} \rangle \vee \left( \exists \fdginstance_{n-j'} : \fdg. {\ } [\![h_1]\!]^{\fdginstance_{n-j'}}_{\fdginstance_0} \right) \\
    [\![h_0]\!]^{\fdginstance_{n-j}}_{\fdginstance_0} &\iff \faultcreation^{\fdginstance_{n-j}}_{\fdginstance_0} \langle \mytext{id}_{h_0} \rangle \vee [\![g]\!]^{\fdginstance_{n-j}}_{\fdginstance_0} \\
    [\![h_1]\!]^{\fdginstance_{n-j'}}_{\fdginstance_0} &\iff \faultcreation^{\fdginstance_{n-j'}}_{\fdginstance_0} \langle \mytext{id}_{h_1} \rangle \vee [\![g]\!]^{\fdginstance_{n-j'}}_{\fdginstance_0} \\
    [\![g]\!]^{\fdginstance_{n-j}}_{\fdginstance_0} &\iff \faultcreation^{\fdginstance_{n-j}}_{\fdginstance_0} \langle \mytext{id}_{g} \rangle \\
    [\![g]\!]^{\fdginstance_{n-j'}}_{\fdginstance_0} &\iff \faultcreation^{\fdginstance_{n-j'}}_{\fdginstance_0} \langle \mytext{id}_{g} \rangle
\end{split}
\end{align}

Recall that expectation is transitive over functional dependency (e.g. $\mytext{id}_f \in \expectation_{\fdginstance_n} \implies \mytext{id}_{k_0} \in \expectation_{\fdginstance_n}$).

Bottom-up substitution of the recursively defined equivalences yields a boolean formula with atomic propositions. E.g.:
\[
[\![h_0]\!]^{\fdginstance_{n-j}}_{\fdginstance_0} \iff \faultcreation^{\fdginstance_{n-j}}_{\fdginstance_0} \langle \mytext{id}_{h_0} \rangle \vee [\![g]\!]^{\fdginstance_{n-j}}_{\fdginstance_0}
\]
becomes:
\[
[\![h_0]\!]^{\fdginstance_{n-j}}_{\fdginstance_0} \iff \faultcreation^{\fdginstance_{n-j}}_{\fdginstance_0} \langle \mytext{id}_{h_0} \rangle \vee \faultcreation^{\fdginstance_{n-j}}_{\fdginstance_0} \langle \mytext{id}_{g} \rangle
\]

There are many sequences\footnote{``Sequence'' here is a loose term that also permits concurrent collections of events.} of events that satisfy \cref{eq:example_structure_function_derivation}. One example is given in \autoref{fig:structure_function_example_lts}. The set of configurations that satisfy this formula are exactly the distinct failure modes that satisfy the structure function of the corresponding fault tree.

\begin{figure}[ht!]
\centering

\includegraphics[]{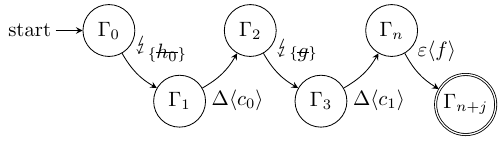}

\caption{A sequence of events that satisfies the formula $[\![ f ]\!]^{\fdginstance_{n}}_{\fdginstance_0}$.}
    \label{fig:structure_function_example_lts}
\end{figure}

\section{KG querying and {\fdgname} construction}\label{sec:kg_query_fdg_construction}
In this section, we take a side-step and show how we query KGs to instantiate {\fdgname}s.

We assume the KG is well-formed with respect to the ontological constraints. Enforcement of these data constraints in the GraphDB database (e.g. using SHACL) is a technical detail that we do not discuss in this work.


\subsection{Semantic mapping}\label{sec:gufo_semantic_mapping}
\autoref{tab:sparql_to_fdg} shows the a straightforward semantic mapping between RDF triple patterns and {\fdgname} elements.

\begin{table}[]
\centering
\begin{tabular}{ll}
\textbf{RDF triple pattern}             & \textbf{{\fdgname} construct} \\
\toprule
?x rdf:type :\texttt{Class}             & $x \in \phi( \texttt{Class})_{\fdginstance_0}$ \\
\midrule
?x :has ?f                              & $f \in \capability^x_{\fdginstance_0}$ \\
\midrule
?x :consumes ?r                         & $r \in \mytext{proj}_2 {\ } \capid^{-1}(x)$ \\
\midrule
?x :produces ?r                         & $f \in \mytext{proj}_3 {\ } \capid^{-1}(x)$ \\
\midrule
?c :accommodates ?r                     & $r \in \mytext{proj}_2 {\ } \capid^{-1}(c)$ \\
\midrule
?i :channel ?c                          & $c = \mytext{proj}_1 {\ } \capid^{-1}(i)$ \\
\midrule
?i :component ?a                        & $a = \mytext{proj}_2 {\ } \capid^{-1}(i)$ \\
\midrule
?i :component\_capabilities ?f          & $f \in \quaindcomp_{\fdginstance_0}(\interfaceid^{-1}(i))$ \\
\midrule
?i :channel\_capabilities ?f            & $f = \quaindchannel_{\fdginstance_0}(\interfaceid^{-1}(i))$ \\
\bottomrule
\end{tabular}
\caption{The semantic mapping between KGs and {\fdgname}s, with $\phi = \{ \mytext{Component} \mapsto \component, \mytext{Capability} \mapsto \capability, \mytext{ResourceType} \mapsto \resource, \mytext{Channel} \mapsto \channel, \mytext{InputInterface} \mapsto \iinterface, \mytext{OutputInterface} \mapsto \ointerface \}$.}\label{tab:sparql_to_fdg}
\end{table}

It is important to note that the state of the KG represents \textbf{only} the initial system configuration. None of the dynamic features of the domain, i.e. the changes in situations that events bring about, are encoded in the KG. The {\fdgname} constructed from the KG is the initial situation, denoted $\fdginstance_0$.

\autoref{lst:sparql} shows the SPARQL query \textit{Q} that retrieves components, functions and resource exchanges from a KG. We may interpret it, in plain English, as
\begin{quote}
    \textit{``What are all component capabilities that exchange resources through interfacing with channel capabilities?''}
\end{quote}

The formal semantics of SPARQL are given in \cite{Perez2006,perez2009semantics,salas_hogan}. The result mapping $\mu: V \rightarrow I$, is defined for all variables that we \texttt{SELECT} in \autoref{lst:sparql}.

We can evaluate \textit{Q} over an instantiated KG, which yields a result set $\Omega \in \mathcal{P}(\mu)$:
\[
[\![Q]\!]_{\mytext{KG}} = \Omega
\]

Where all selected variables are bound, and thus
\[
\forall \mu \in \Omega. {\ } \mytext{dom} {\ } \mu = \{ \qmark \alpha, \qmark f, \qmark c, \qmark k, \qmark r, \qmark h, \qmark \zeta, \qmark g, \qmark i, \qmark o \}
\]

\vspace{0.5em}
\noindent \begin{minipage}{0.98\columnwidth}
    \lstset{style=sparql}
\begin{lstlisting}[language=sparql, frame=single,  captionpos={b}, caption={The SPARQL query \textit{Q}.}, label={lst:sparql}]
PREFIX : <...>
PREFIX rdf: <...>
SELECT ?alpha ?f ?c ?k ?r ?h ?zeta ?g
 ?inputinterface ?outputinterface
WHERE{
 ?r rdf:type :ResourceType.

 ?alpha rdf:type :Component.
  ?alpha :has ?f.
   ?f rdf:type :Capability.
   ?f :consumes ?r.

 ?zeta rdf:type :Component.
  ?zeta :has ?g.
   ?g rdf:type :Capability.
   ?g :produces ?r.

 ?inputinterface rdf:type :InputInterface.
  ?inputinterface :component ?alpha.
  ?inputinterface :channel ?c.
  ?inputinterface :component_capabilities ?f.
  ?inputinterface :channel_capabilities ?k.
    
 ?c rdf:type :Channel.
  ?c :accommodates ?r.
   ?c :has ?k.
    ?k rdf:type :Capability.
    ?k :produces ?r.
   ?c :has ?h.
    ?h rdf:type :Capability.
    ?h :consumes ?r.
    
 ?outputinterface rdf:type :OutputInterface.
  ?outputinterface :channel ?c.
  ?outputinterface :component ?zeta.
  ?outputinterface :channel_capabilities ?h.
  ?outputinterface :component_capabilities ?g.
}
\end{lstlisting}
\end{minipage}

\subsection{{\fdgname} construction}

\newcommand{\partialcap}{\partial \capability}

We introduce a function $T : \mathcal{P}(\mu) \rightarrow \fdginstance_0$ that constructs {\fdgnamearticle} {\fdgname} from a result set over $\mytext{KG}$. We then build the initial {\fdgname} $\fdginstance_0$ from $\Omega$ in parts.

Let the set of partial capabilities be
\begin{multline*}
\partialcap = \bigcup_{\mathclap{\qmark x \in \{ \qmark f, \qmark h \}}} \left\{ (\mu(\qmark x), \{ \mu(\qmark r) \}, \emptyset) {\ } | {\ } \mu \in \Omega \right\} \cup\\
\bigcup_{\mathclap{\qmark x \in \{ \qmark g, \qmark k \}}} \left\{ (\mu(\qmark x), \emptyset, \{ \mu(\qmark r) \}) {\ } | {\ } \mu \in \Omega \right\}
\end{multline*}

Let $\mytext{CapabilityId}_{\fdginstance_0}$ be the set of capability IDs in $\fdginstance_0$:
\[
\mytext{CapabilityId}_{\fdginstance_0} = \{ \mytext{proj}_1 {\ } x {\ } | {\ } x \in \partialcap \}
\]

Let $R_{\mytext{in}}(\mytext{id})$ be the set of resource consumptions of the capability identified by $\mytext{id}$:
\[
R_{\mytext{in}}(\mytext{id}) = \bigcup \left\{ \mytext{proj}_2 {\ } x {\ } | {\ } x \in \partialcap. {\ } \mytext{proj}_1 {\ } x = \mytext{id} \right\}
\]

Similarly, let $R_{\mytext{out}}(\mytext{id})$ be the set of resource productions of the capability identified by $\mytext{id}$:
\[
R_{\mytext{out}}(\mytext{id}) = \bigcup \left\{ \mytext{proj}_3 {\ } x {\ } | {\ } x \in \partialcap. {\ } \mytext{proj}_1 {\ } x = \mytext{id} \right\}
\]

The set of capabilities $\capability_{\fdginstance_0}$ is then:
\[
\capability_{\fdginstance_0} = \left\{ (\mytext{id}, R_{\mytext{in}}(\mytext{id}), R_{\mytext{out}}(\mytext{id})) {\ } | {\ } \mytext{id} \in \mytext{CapabilityId}_{\fdginstance_0} \right\}
\]

By \autoref{post:vivacitas}, there exists a function for each capability such that this function manifests that unique capability. Let the relation $\vartheta : \mytext{CapabilityId}_{\fdginstance_0} \rightarrow \mytext{FunctionId}_{\fdginstance_0}$ return this unique manifestation. Let the set of functions in $\fdginstance_0$ permit a bijection over $\vartheta$ w.r.t. the set of capability IDs $\mytext{CapabilityId}_{\fdginstance_0}$:
\begin{equation*}
    \function_{\fdginstance_0} = \left\{ \vartheta(\mytext{id}) {\ } | {\ } \mytext{id} \in \mytext{CapabilityId}_{\fdginstance_0} \right\}
\end{equation*}

Let the set of faults $\fault_{\fdginstance_0} = \emptyset$ be the empty set.

\newcommand{\partialcomp}{\partial \component}
We can now construct the set of partial components $\partialcomp$:
\[
\partialcomp = \left\{ (\mu(\qmark \alpha), \{ \capid^{-1}_{\fdginstance_0}(\mu(\qmark f)) \}, \{ \vartheta(\mu(\qmark f)) \}, \emptyset) {\ } | {\ } \mu \in \Omega \right\}
\]

Let $\mytext{ComponentId}_{\fdginstance_0}$ be the set of component IDs in $\fdginstance_0$:
\[
\mytext{ComponentId}_{\fdginstance_0} = \{ \mytext{proj}_1 {\ } x {\ } | {\ } x \in \partialcomp \}
\]

Let $C(\mytext{id})$ be the set of capabilities of the component identified by $\mytext{id}$:
\[
C(\mytext{id}) = \bigcup \left\{ \mytext{proj}_2 {\ } x {\ } | {\ } x \in \partialcomp. {\ } \mytext{proj}_1 {\ } x = \mytext{id} \right\}
\]

Similarly, let $F(\mytext{id})$ be the set of component functions of the capability identified by $\mytext{id}$:
\[
F(\mytext{id}) = \bigcup \left\{ \mytext{proj}_3 {\ } x {\ } | {\ } x \in \partialcomp. {\ } \mytext{proj}_1 {\ } x = \mytext{id} \right\}
\]

And the set of components $\component_{\fdginstance_0}$ is then:
\[
\component_{\fdginstance_0} = \left\{ (\mytext{id}, C(\mytext{id}), F(\mytext{id}), \emptyset) {\ } | {\ } \mytext{id} \in \mytext{ComponentId}_{\fdginstance_0} \right\}
\]

\newcommand{\partialchannel}{\partial \channel}
Next, we construct the set of partial channels $\partialchannel$. The process is similar as for components:
\begin{multline*}
\partialchannel = \left\{ (\mu(\qmark c), \{ \capid^{-1}_{\fdginstance_0}(\mu(\qmark h)), \capid^{-1}_{\fdginstance_0}(\mu(\qmark k)) \}, \right.\\
\left. \{ \vartheta(\mu(\qmark h)), \vartheta(\mu(\qmark k)) \}, \emptyset) {\ } | {\ } \mu \in \Omega \right\}
\end{multline*}

Let $\mytext{ChannelId}_{\fdginstance_0}$ be the set of channel IDs in $\fdginstance_0$:
\[
\mytext{ChannelId}_{\fdginstance_0} = \{ \mytext{proj}_1 {\ } x {\ } | {\ } x \in \partialchannel \}
\]

Let $C'(\mytext{id})$ be the set of capabilities of the channel identified by $\mytext{id}$:
\[
C'(\mytext{id}) = \bigcup \left\{ \mytext{proj}_2 {\ } x {\ } | {\ } x \in \partialchannel. {\ } \mytext{proj}_1 {\ } x = \mytext{id} \right\}
\]

Similarly, let $F'(\mytext{id})$ be the set of channel functions of the capability identified by $\mytext{id}$:
\[
F'(\mytext{id}) = \bigcup \left\{ \mytext{proj}_3 {\ } x {\ } | {\ } x \in \partialchannel. {\ } \mytext{proj}_1 {\ } x = \mytext{id} \right\}
\]

And the set of channels $\channel_{\fdginstance_0}$ is then:
\[
\channel_{\fdginstance_0} = \left\{ (\mytext{id}, C'(\mytext{id}), F'(\mytext{id}), \emptyset) {\ } | {\ } \mytext{id} \in \mytext{ChannelId}_{\fdginstance_0} \right\}
\]

By \autoref{eq:capabilities_partition}, \autoref{eq:functions_partition} and \autoref{eq:faults_partition}, the set of capabilities $\capability_{\fdginstance_0}$, functions $\function_{\fdginstance_0}$ and faults $\fault_{\fdginstance_0} = \emptyset$ is simply the union of those that inhere in (and manifest modes of) components and capabilities:

\begin{align*}
    \capability_{\fdginstance_0} = \left( \bigcup_{\alpha \in \component_\fdginstance} \capability^\alpha_\fdginstance \right) &\cup \left( \bigcup_{c \in \channel_\fdginstance} \capability^c_\fdginstance \right)\\
    \function_{\fdginstance_0} = \left( \bigcup_{\alpha \in \component_\fdginstance} \function^\alpha_\fdginstance \right) &\cup \left( \bigcup_{c \in \channel_\fdginstance} \function^c_\fdginstance \right)
\end{align*}

The sets output ($\ointerface_{\fdginstance_0}$) and input ($\iinterface_{\fdginstance_0}$) interfaces are constructed as follows:
\begin{align*}
    \ointerface_{\fdginstance_0} &= \left\{ (\mu(\qmark \zeta), \mu(\qmark c), \mu(\qmark r)) {\ } | {\ } \mu \in \Omega \right\}\\
    \iinterface_{\fdginstance_0} &= \left\{ (\mu(\qmark c), \mu(\qmark \alpha), \mu(\qmark r)) {\ } | {\ } \mu \in \Omega \right\}
\end{align*}
Such that
\begin{align*}
    \interfaceid^o_{\fdginstance_0}((\mu(\qmark \zeta), \mu(\qmark c), \mu(\qmark r))) &= \mu(\qmark {\,} \mytext{outputinterface})\\
    \interfaceid^o_{\fdginstance_0}((\mu(\qmark c), \mu(\qmark \alpha), \mu(\qmark r))) &= \mu(\qmark {\,} \mytext{inputinterface})
\end{align*}

Let $\mytext{OutputInterfaceId}_{\fdginstance_0}$ be the set of output interface IDs in $\fdginstance_0$:
\[
\mytext{OutputInterfaceId}_{\fdginstance_0} = \left\{ \mu(\qmark {\,} \mytext{outputinterface}) {\ } | {\ } \mu \in \Omega \right\}
\]
And let $\mytext{InputInterfaceId}_{\fdginstance_0}$ be the set of input interface IDs in $\fdginstance_0$:
\[
\mytext{InputInterfaceId}_{\fdginstance_0} = \left\{ \mu(\qmark {\,} \mytext{inputinterface}) {\ } | {\ } \mu \in \Omega \right\}
\]

\newcommand{\partialquas}{\partial \quaindcomp}
The partial component capabilities that the output and input interface identified by $\mytext{id}$ aggregate are:
\begin{align*}
    \partialquas^o(\mytext{id}) &= \left\{ \mu(\qmark g) {\ } | {\ } \mu \in \Omega. {\ } \mu(\qmark {\,} \mytext{outputinterface}) = \mytext{id} \right\}\\
    \partialquas^i(\mytext{id}) &= \left\{ \mu(\qmark f) {\ } | {\ } \mu \in \Omega. {\ } \mu(\qmark {\,} \mytext{inputinterface}) = \mytext{id} \right\}
\end{align*}

Whereby the functions that return the component capabilities aggregated by an output or input interface are given by:
\begin{align*}
    \quaindcomp^o_{\fdginstance} &= \left\{ \interfaceid^{o,-1}_{\fdginstance_0}(\mytext{id}) \mapsto \partialquas^o(\mytext{id}) {\ } | {\ } \mytext{id} \in \mytext{OutputInterfaceId}_{\fdginstance_0} \right\}\\
    \quaindcomp^i_{\fdginstance} &= \left\{ \interfaceid^{i,-1}_{\fdginstance_0}(\mytext{id}) \mapsto \partialquas^i(\mytext{id}) {\ } | {\ } \mytext{id} \in \mytext{InputInterfaceId}_{\fdginstance_0} \right\}
\end{align*}

And the channel capabilities they aggregate are:
\begin{align*}
    \quaindchannel^o_{\fdginstance_0} &= \left\{ \interfaceid^{o,-1}_{\fdginstance_0}(\mu(\qmark {\,} \mytext{outputinterface})) \mapsto \mu(\qmark h) {\ } | {\ } \mu \in \Omega \right\}\\
    \quaindchannel^i_{\fdginstance_0} &= \left\{ \interfaceid^{i,-1}_{\fdginstance_0}(\mu(\qmark {\,} \mytext{inputinterface})) \mapsto \mu(\qmark k) {\ } | {\ } \mu \in \Omega \right\}
\end{align*}

Finally, let $\resource = \left\{ \mu(?r) {\ } | {\ } \mu \in \Omega \right\}$ and $\expectation_{\fdginstance_0} = \emptyset$.

The constructed {\fdgname} $\gamma_0$ is then:
\[
(\capability_{\fdginstance_0}, \function_{\fdginstance_0}, \fault_{\fdginstance_0}, \component_{\fdginstance_0}, \channel_{\fdginstance_0}, \ointerface_{\fdginstance_0}, \iinterface_{\fdginstance_0}, \expectation_{\fdginstance_0}) \in \fdg
\]
which obtains at some arbitrary time point t, such that
\[
\fdginstance_0 = (\gamma_0, t) : \fdg
\]

And finally,
\[
T(\Omega) = \fdginstance_0
\]


\section{The {\fdgname}-FT Transformation}\label{sec:transformation}

\newcommand{\target}{f}
\newcommand{\source}{g}

In this section we first introduce the formal FT semantics. We then provide a semantic mapping between the {\fdgname} concepts and FT concepts introduced in the previous sections. Lastly, we define the algorithm that performs the synthesis of FTs from {\fdgname} instances.

\subsection{Fault Tree Semantics}\label{sec:ft_def}
We present the formal structure and semantics of FTs following the summarization in \cite{Ruijters2015}.

\begin{definition}
\label{def:ft}

A static fault tree (FT) is a 4-tuple $FT = \langle BE, G, T, I \rangle$, consisting of the following components.

\begin{itemize}
    \item[] $BE$ is the set of basic events.
    \item[] $G$ is the set of gates, with $BE \cap G = \emptyset$. We write $E = BE \cup G$ for the set of elements.
    \item[] $T : G \rightarrow \mytext{GateTypes}$ is a function that describes the type of each gate.
    \item[] $I : G \rightarrow \mathcal{P}(E)$ describes the inputs of each gate.
\end{itemize}

\end{definition}

\newcommand{\ftfail}{S}
\begin{definition}\label{def:ft_semantics}

The semantics of a fault tree \mytext{FT} is a function
\[
\pi_{\mytext{FT}} : \mathcal{P}(BE) \times E \rightarrow \{0,1\}
\]
where $\pi_{\mytext{FT}}(\ftfail, e)$ indicates whether $e$ fails given the set {\ftfail } of failed BEs. It is defined as follows.

\begin{enumerate}
    \item \textbf{For $e \in BE$}, $\pi_{\mytext{FT}}(\ftfail, e) = e \in \ftfail$.
 
    \item \textbf{For $g \in G$ and $T(g) = \text{And}$}, let 
    \[
    \pi_{\mytext{FT}}(\ftfail, g) = \bigwedge_{x \in I(g)} \pi_{\mytext{FT}}(\ftfail, x).
    \]

    \item \textbf{For $g \in G$ and $T(g) = \text{Or}$}, let 
    \[
    \pi_{\mytext{FT}}(\ftfail, g) = \bigvee_{x \in I(g)} \pi_{\mytext{FT}}(\ftfail, x).
    \]


\end{enumerate}

\end{definition}

\subsection{Semantic mapping}\label{sec:fdg_ft_semantic_mapping}
We present the semantic partial mapping to show the correspondence between {\fdgname} and FT concepts. Given the greater specificity of concepts in {\fdgname}s, we can only propose this partial mapping in one direction. After all, we can construct FTs that do not pertain to systems architectures at all (e.g. an FT about social inequality). That is to say: the partial mapping between {\fdgname} concepts and FT concepts is necessarily surjective. Additionally, not all {\fdgname} concepts are mapped to an FT concept (e.g. the {\fdgname} concept \texttt{Fault} does not have a related concept in FTs). \autoref{tab:semantic_mapping_fdg_ft} lists the semantic partial mapping.

\begin{table}[]
\centering
\begin{tabular}{ll}
\textbf{{\fdgname} concept}   & \textbf{FT concept} \\
\toprule
Capability of Interest & Top-Level Event     \\
\midrule
Error Propagation      & Intermediate Event  \\
\midrule
Fault Activation       & Basic Event         \\
\bottomrule
\end{tabular}
\caption{The semantic mapping from {\fdgname} concepts to FT concepts.}\label{tab:semantic_mapping_fdg_ft}
\end{table}


There are two distinct causes for errors in {\fdgname}s. An error can either be the direct result of a fault activation in the bearing component, or the indirect result of error propagation that a capability is dependent on.

Fault trees distinguish three types of events; basic events (BEs), which occur spontaneously, intermediate events (IEs), which are caused by one or more other events\cite{Ruijters2015}, and top-level events (TLEs), which are at the root of an FT. IEs and TLEs are associated with a gate type that models the logical behavior of failure propagation.

\mypar{Top-level events}
An FT's TLE is the event of interest in FTA. This is the undesired event we wish to calculate metrics over.
Similarly, in {\fdgname} evaluation, we choose a capability of interest and place an expectation on it. This comes down to the same thing; the event that this capability is not (or no longer) manifested is the ``undesired event''. In {\fdgname} terminology, an \textit{error}.
It is therefore natural to map {\fdgname} capabilities of interest to top-level events.

Recall that by placing an expectation on a capability of interest, we transitively also place expectations on the capabilities that it is functionally dependent on. This results in a DAG of potential errors. Each vertex of this DAG (excluding the potential error over the capability of interest) gets mapped to IEs and/or BEs.

\mypar{Intermediate events}
Intermediate events are associated with a gate type. In standard FT's, the intermediate event with an \texttt{OR} gate type is observed when any of its input events occur. Intermediate events with an \texttt{AND} gate type are observed when \textit{all} of its input events occur.
In {\fdgname}s we distinguish two scenarios for which, if either of them occur, an error is observed over a capability $f$:
\begin{enumerate}[label=\roman*.]
    \item Fault activation over $f$, or;
    \item Error propagation preventing access to any of $f$'s prerequisite resources.
\end{enumerate}
This maps the error over $f$ to an \texttt{OR} type intermediate event with two inputs. Scenario (i) maps to a basic event, and scenario (ii) maps each prerequisite resource type to a distinct \texttt{AND} gate.

Scenario (ii) also pertains to the notion of minimal configuration fragments w.r.t. a prerequisite resource type $r$.
By \cref{eq:fdepa_simplified}, \cref{eq:fdepb_simplified} and \cref{eq:fdepc_simplified}, the size of a minimal configuration fragment is always one. Therefore we map such errors to child events directly, instead of mapping the minimal configuration fragment to an intermediate \texttt{OR} gate with a single child event representing the error over the minimal configuration fragment's single element. In the general case, such an \texttt{OR} gate should be generated for minimal configuration fragments.


In FTs, \texttt{AND} gates describe that all of the input events must happen for the particular intermediate event to be observed.
In {\fdgname}s, an error over at least one element (recall that in this work, there is only one) of \textit{each} minimal configuration fragment over $f$ for $r$ then means that we observe an error over $f$. We therefore map the set of minimal configuration fragments over $f$ for $r$ to a single \texttt{AND} gate.

\mypar{Basic events}
An FT's basic events are stochastically independent failure events. They are in a causal relationship w.r.t. the TLE of the fault tree.
In {\fdgname}s, fault creation is a spontaneous event (i.e. it is not caused by any endogenous part of the system).
In the focus of this work, a fault creation event is always a direct cause of a fault activation event. This induced fault activation is exclusively dependent on the same object as the fault creation that caused it. All components are susceptible to fault creation, and the fault activation that is, in this work, necessarily triggered by the fault creation over a capability is mapped to the FT concept of the basic event.

It is important to note that, since capabilities cannot be repaired in {\fdgname}s, there can occur \textbf{at most} one fault activation over each capability.

\begin{remark}[Fault Tree Events as Event Types]
    The fault tree events we synthesize are in fact based on event \textit{types}, rather than particular events. Qualitative FTA, for instance by use of the structure function, utilizes this distinction in a subtle way: membership in the set of failed basic events $S$ represents the \textit{occurrence} of a particular event of a certain type. This is a fundamental insight about the fault trees we synthesize from {\fdgname}s, and indeed fault trees analysis in general.
    Furthermore, this realization opens the door to a future ontological exploration of the quantitative analysis of FTs, e.g. in line with the notion of event type likelihood in COVER~\cite{sales_cover}.
\end{remark}

\begin{remark}
    According to Veseley et al., creating an FT with no intermediate labels between its gate events is \textit{``indicative of sloppy analysis''}~\cite{vesely1981fault}. Although the FT in \autoref{fig:transformation_illustration_ft} does not show event labels, meaningful labels can be determined as follows:
    \begin{itemize}
        \item A fault activation event over some capability $g$ has the exclusive participation of the bearing object $o$. The label that we associate with the basic event this fault activation is mapped to may take the form \textit{``Fault activation over $g$ in $o$''}.
        \item For an error over a channel $c$'s input capability $h$ that consumes a resource type $r$, a meaningful label for the synthesized \texttt{OR} gate may take the shape \textit{``$c$ fails to consume $r$''}.
        \item For an error over that channel's output capability $k$ that produces the resource type $r$, the label of the corresponding \texttt{OR} may be \textit{``$c$ fails to provide $r$''}.
        \item For an error over a component capability $f$'s access to a prerequisite resource $r$, one may choose to label its \texttt{AND} gate \textit{``$f$ loses sufficient access to $r$''}.
        \item For an error over a component $\alpha$'s output capability (or a capability of interest) $\ell$, a label like \textit{``$\alpha$ fails to $\ell$''} is appropriate for the generated \texttt{OR} gate.
    \end{itemize}
\end{remark}




\newcommand{\transform}{\tau}
\subsection{Transformation Definition}

We introduce a transformation function $\transform$, which synthesizes an FT from a given {\fdgname}. This transformation assumes that the given {\fdgname} is valid.



\newcommand{\ft}{\text{FT}}
\newcommand{\parent}{\gamma}
\newcommand{\getbes}{\mytext{BEs}}
\newcommand{\getgates}{\mytext{Gates}}
\newcommand{\mbe}{\text{be}}
\newcommand{\mor}{\text{or}}
\newcommand{\mand}{\text{and}}
\begin{definition}[The {\fdgname} to FT transformation function]\label{def:transformation}
    The function $\transform: \fdg \times \capability_\fdginstance \rightarrow \text{FT}$ transforms {\fdgnamearticle} {\fdgname} to an FT.
\end{definition}

\autoref{alg:fdg_ft_transformation} shows the precise definition of $\transform$ in pseudo code.

\begin{algorithm*}[t]
\small

\caption{The {\fdgname} to FT transformation function $\transform$}\label{alg:fdg_ft_transformation}

\begin{algorithmic}[1]
\footnotesize

\State $\mathcal{L} : \mytext{Gate} \cup \{\varepsilon\}$
\State $P \subseteq \mathcal{P}(\mathcal{L})$ \Comment{Encountered FT gates (recurs. guard)}

\vspace{0.5em}

\Function{\getbes}{$\langle \mytext{BE}, G, T, I \rangle :\!\mytext{FaultTree}$}
    \State \Return $\mytext{BE}$
\EndFunction

\vspace{0.5em}

\Function{\getgates}{$\langle \mytext{BE}, G, T, I \rangle :\!\mytext{FaultTree}$}
    \State \Return $G$
\EndFunction

\vspace{0.5em}

\Function{I}{$\langle \mytext{BE}, G, T, I \rangle :\!\mytext{FaultTree}$}
    \State \Return $I$
\EndFunction

\vspace{0.5em}

\Function{$\transform$}{$\fdginstance : \fdg, \target : \function$}
    \State $\ft \gets \mytext{new } \mytext{FaultTree}$
    \State $\transform'(\fdginstance, \emptyset, \target, \varepsilon, \ft)$
    \State \Return \ft
\EndFunction

\vspace{0.5em}

\Function{$\transform'$}{$\fdginstance :\!\fdg, P, \target :\!\capability, \parent :\!\mathcal{L}, \ft :\!\mytext{FaultTree}$}


    \If{$\mor_{\target} \in P$}
        \State Generate basic event $\mbe_{\target}$ and place it under $\parent$ such that 
        \[
            \mbe_{\target} \in \getbes(\ft); {\ } \mbe_{\target} \in I(\parent)
        \]
        \State \Return
    \EndIf

    \State Generate OR gate $\mor_{\target}$ such that
    \[
        \mor_{\target} \in \getgates(\ft)
    \]
    \State $P' \gets P \cup \{\mor_{\target}\}$ 

    \If{$\parent = \varepsilon$}
        \State $\mor_{\target}$ becomes the top-level event, such that
        \[
            \getgates(\ft) = \{ \mor_{\target} \}
        \]
    \Else
        \State Place $\mor_{\target}$ under $\parent$ such that
        \[
            \mor_{\target} \in I(\parent)
        \]
    \EndIf

    \State Generate basic event $\mbe_{\target}$ and place it under $\mor_{\target}$ such that 
    \[
        \mbe_{\target} \in \getbes(\ft); {\ } \mbe_{\target} \in I(\mor_{\target})
    \]


    \ForAll{$r \in \preq_\fdginstance(\target)$} \Comment{Prerequisite resources for $\target$}
        
        \State Generate AND gate ${}_{r}\mand_{\target}$ and place it under $\mor_{\target}$ such that
        \[
            {}_{r}\mand_{\target} \in \getgates(\ft); {\ } {}_{r}\mand_{\target} \in I(\mor_{\target})
        \]

        \ForAll{$k \in \fdep({\target}, r)$} \Comment{Channel output capabilities that provide ${\target}$ with $r$}

            \State Generate OR gate ${}_{k}\mor_{f}$ and place it under ${}_{r}\mand_{\target}$ such that
            \[
                {}_{k}\mor_{f} \in \getgates(\ft); {\ } {}_{k}\mor_{f} \in I({}_{r}\mand_{\target})
            \]
            
            \State Generate basic event $\mbe_{k}$ and place it under ${}_{k}\mor_{f}$ such that 
            \[
                \mbe_{k} \in \getbes(\ft); {\ } \mbe_{k} \in I({}_{k}\mor_{f})
            \]

            \State Obtain the channel's input capability $h = \fdepb_\fdginstance(f, r)$ \Comment{Channel input capability that provides $k$ with $r$}

            \State Generate OR gate ${}_{h}\mor_{k}$ and place it under ${}_{k}\mor_{f}$ such that
            \[
                {}_{h}\mor_{k} \in \getgates(\ft); {\ } {}_{h}\mor_{k} \in I({}_{k}\mor_{f})
            \]
            
            \State Generate basic event $\mbe_{h}$ and place it under ${}_{h}\mor_{k}$ such that 
            \[
                \mbe_{h} \in \getbes(\ft); {\ } \mbe_{h} \in I({}_{h}\mor_{k})
            \]

            \ForAll{$g \in \fdepc(h, r)$} \Comment{Component capabilities that provide $h$ with $r$}
                
                \State Generate OR gate ${}_{g}\mor_{h}$ and place it under ${}_{h}\mor_{k}$ such that
                \[
                    {}_{g}\mor_{h} \in \getgates(\ft); {\ } {}_{g}\mor_{h} \in I({}_{h}\mor_{k})
                \]
                
                \State Generate basic event $\mbe_{g}$ and place it under ${}_{g}\mor_{h}$ such that 
                \[
                    \mbe_{g} \in \getbes(\ft); {\ } \mbe_{g} \in I({}_{g}\mor_{h})
                \]

                \State $\transform'(\fdginstance, P', g, {}_{g}\mor_{h}, \ft)$
            \EndFor

        \EndFor
    \EndFor

\EndFunction

\end{algorithmic}
\end{algorithm*}


Recall the print head example from \autoref{fig:structure_function_example}. \autoref{fig:transformation_illustration} illustrates the transformation of that system into a fault tree $\ft = \transform(\fdginstance_0, f)$.

\begin{figure}[t!]
    \centering
    \begin{subfigure}[t]{0.55\linewidth}
        \centering
        
        \includegraphics[]{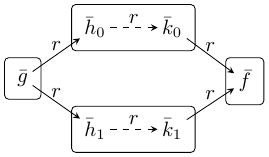}
                
        \caption{Example of the print head in its ground state $\fdginstance_0$.}\label{fig:transformation_example_ground_state}
    \end{subfigure}%
    ~
    \begin{subfigure}[t]{0.45\linewidth}
        \centering
        \includegraphics[width=\linewidth]{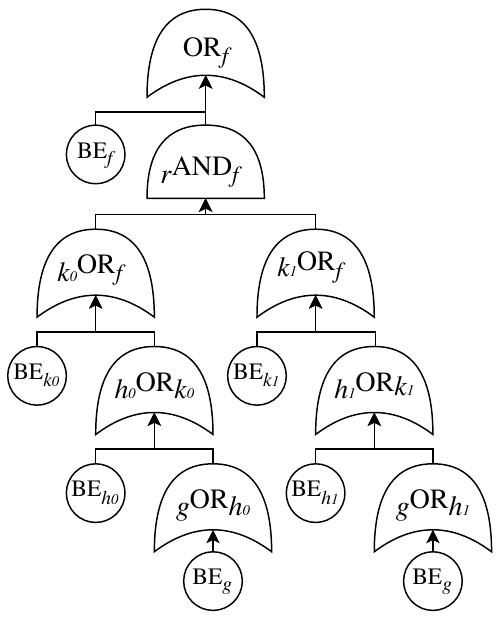}
        \caption{The FT synthesized from $\fdginstance_0$.}\label{fig:transformation_illustration_ft}
    \end{subfigure}
    \caption{Fault tree synthesized from the print head example.}\label{fig:transformation_illustration}
\end{figure}

\begin{figure}[ht!]

\end{figure}

\section{Transformation properties}\label{sec:transformation_properties}

In this section we prove the soundness and completeness of the synthesis algorithm presented in the previous section, with respect to error semantics.

Intuitively, the transformation is sound iff the failure semantics of the synthesized fault tree are not spurious. I.e. all positive evaluations of the structure function of the synthesized FT represent a positive evaluation of the {\fdgname}'s structure function.

Conversely, our transformation is complete iff it preserves all failure semantics of the original system. All failure modes that can occur in the {\fdgname} model must be represented by a corresponding failure mode in the generated fault tree.

The transformation is sound \textit{and} complete iff the synthesized fault tree represents the exact error semantics of the {\fdgname} it was generated from.

\begin{theorem}[The soundness and completeness of {$\transform$}]\label{thm:completeness}

Let $\target \in \capability_{\fdginstance_0}$ be a capability. Let $\fdginstance_0, \fdginstance_n : \fdg$ be valid {\fdgname}s. Then the transformation $\transform$ is sound and complete iff
\[
\forall \target \in \capability. {\ } [\![ \target ]\!]^{\fdginstance_n}_{\fdginstance_0} \iff \pi_{\transform(\fdginstance_0, \target)}(S, \target)
\]
where by \autoref{lem:s_from_e}, $S$ is derived from the set of erroneous components $E_{\fdginstance_n}$.
\end{theorem}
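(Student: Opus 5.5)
The plan is to prove the biconditional by structural induction on the recursion of $\transform'$ in \autoref{alg:fdg_ft_transformation}, matched clause by clause against the structure function of \autoref{fig:structure_function}.

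\textbf{Fixing $S$.} First I make the basic-event set precise, following \autoref{lem:s_from_e}. I take $\mbe_g \in S$ iff a fault creation $\faultcreation_F$ with $\faultinstance{g} \in F$ occurs between $\fdginstance_0$ and $\fdginstance_n$. For channel output capabilities $k$, the timing is shifted: the relevant fault on the paired input capability $h$ must occur at least $\delta$ earlier, as required by the side condition on $\fdginstance_{n-j}$.

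\textbf{Local lemma.} For every gate $\mor_g$ produced by $\transform'$, I show that $\pi(S,\mor_g)$ equals the right-hand side of \autoref{fig:structure_function} for $g$, with each recursive occurrence $[\![g']\!]$ replaced by $\pi(S,\mor_{g'})$. I unfold the gate semantics of \autoref{def:ft_semantics} as follows.
\begin{itemize}
\item The child $\mbe_g$ of $\mor_g$ contributes the fault-creation disjunct.
\item Each ${}_r\mand_g$ contributes $\bigwedge_{S^A\in\fdep(g,r)}$. The singleton collapse from \cref{eq:fdepa_simplified} and \cref{eq:fdepc_simplified} lets the inner $\bigvee_{k\in S^A}$ become one child per $k$.
\item The nested ${}_k\mor$ and ${}_h\mor$ chain reproduces cases B and C. Case B uses \cref{eq:fdepb_simplified}, and the $\delta$-shift is absorbed into the definition of $S$.
\item For capabilities with no prerequisites the $\forall r$ loop is empty, so $\mor_g$ reduces to $\mbe_g$. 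This matches the vacuous-conjunction remark only when $\fdep = \emptyset$ arises because $\preq$ is empty. I will check that a nonempty $\preq$ with an empty $\fdep$ yields an empty \texttt{AND}, which evaluates to $\top$, consistent with \cref{def:structure_function_big_wedge}.
\end{itemize}
The expectation conjunct $\mytext{id}_f\in\expectation_{\fdginstance_n}$ is discharged for the top-level capability by assumption, and it propagates to every dependency by \autoref{def:transitivity_of_expectations}.

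\textbf{Induction and both directions.} I then induct on the depth of the dependency relation $\dee_{\fdginstance_0}$ below $f$. Soundness is the $(\Leftarrow)$ direction and completeness the $(\Rightarrow)$ direction; both follow from the local lemma once the recursive occurrences agree. For completeness I additionally invoke \autoref{prop:causes_for_errors}: every error has either a fault-activation cause or a propagation cause, so no disjunct is missing. I also use \autoref{prop:no_repairs} to justify evaluating all dependencies at the single situation $\fdginstance_n$ rather than at scattered intermediate situations.

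\textbf{Main obstacle: cycles.} The well-foundedness of this induction is the hardest step. $\dee$ may be cyclic, and the algorithm truncates with the recursion guard $P$, replacing a revisited gate by a bare $\mbe_f$. My plan is to argue by least-fixpoint semantics. Since faults are never repaired, $E_{\fdginstance_n}$ is the least set closed under the rules of \autoref{fig:structure_function}. An error that can only be ``explained'' by a cycle back to $f$ would therefore require $f$ to be erroneous already via some non-cyclic cause. Replacing the cyclic reference by the basic event $\mbe_f$ is thus equivalent, because $\mbe_f$ is already a disjunct at the root $\mor_f$. Concretely, I induct on the pair (size of $\capability\setminus P$, depth), proving the lemma relative to the guard set $P$, and at the top level instantiate $P=\emptyset$.
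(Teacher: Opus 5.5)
\paragraph{Comparison with the paper's route.}
Your route is genuinely different from the paper's. The paper inducts on the size $K=I+M+N$ of the {\fdgname}: a size-4 base case plus Lemmas~\ref{lem:ip_case_i}, \ref{lem:ip_case_m} and~\ref{lem:ip_case_n}, which add a resource, a channel or a provider. Its generic unfolding treats the providers $g$ as leaves, so it never meets deeper dependency chains or the guard $P$. A $P$-relative induction along the recursion of $\tau'$ is the more general scheme, and your handling of the guard $P$ covers something the paper's proof never addresses. However, two of your steps fail as stated.

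\paragraph{First gap: case C is an OR in the algorithm but an AND in the structure function.}
In Algorithm~\ref{alg:fdg_ft_transformation}, every ${}_{g}\mathrm{or}_{h}$ with $g\in\fdepc(h,r)$ is an input of the OR gate ${}_{h}\mathrm{or}_{k}$. So the tree computes $\mathrm{be}_h\vee\bigvee_{g}\pi(S,{}_{g}\mathrm{or}_{h})$. Figure~\ref{fig:structure_function} and \cref{def:structure_function_big_wedge_c} instead give $\bigwedge_{S^C\in\fdepc(h,r)}\bigvee_{g\in S^C}[\![g]\!]$. After your singleton collapse this is a \emph{conjunction} over all providers of $h$, as Postulate~\ref{post:any_access_to_resource_type_sufficient} requires. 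Your local lemma is therefore false at case C.

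Concretely, let two power supplies $g_1,g_2$ feed the input $h$ of the only channel supplying $f$, and let only $g_1$ be faulty. The tree reports an error over $f$, yet $h$ keeps access to $r$ through $g_2$. So $[\![f]\!]^{\fdginstance_n}_{\fdginstance_0}$ is false, and your soundness direction fails.

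The paper's proof gets past this step only because its unfolding \cref{eq:ih_structure_function_derivation} writes $[\![h]\!]$ as a disjunction over the $[\![g]\!]$. Lemma~\ref{lem:ip_case_n} likewise treats an added provider as a disjunctive extension, silently departing from Figure~\ref{fig:structure_function}. You have three options:
\begin{itemize}
\item restrict to channel inputs with a single provider per resource;
\item put an AND gate over the ${}_{g}\mathrm{or}_{h}$ below ${}_{h}\mathrm{or}_{k}$;
\item adopt the disjunctive reading explicitly, at the price of contradicting that postulate.
\end{itemize}

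\paragraph{Second gap: the $\delta$-shift of case B cannot be absorbed into $S$ in general.}
Proposition~\ref{prop:no_repairs} does not help here. It gives only $[\![x]\!]^{\fdginstance_{n-j}}_{\fdginstance_0}\implies[\![x]\!]^{\fdginstance_n}_{\fdginstance_0}$, not the converse. A fault on $h$ created less than $\delta$ before $\fdginstance_n$ has not yet depleted the channel, so $k$ is still manifested. Hence no-repair cannot justify evaluating all dependencies at $\fdginstance_n$.

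A per-capability threshold in $S$ also fails once a capability occurs at two offsets under one AND gate. Let $f$ receive $r$ from redundant channels $c_1,c_2$ with outputs $k_1,k_2$. Let $c_1$ be fed by $g$, and $c_2$ by $x$, which is itself fed by $g$ through a third channel. Then $g$ occurs below ${}_{r}\mathrm{and}_{f}$ at offsets $\delta$ and $2\delta$. Suppose $g$'s fault is created between $2\delta$ and $\delta$ before $\fdginstance_n$.
\begin{itemize}
\item Putting the shared $\mathrm{be}_g$ into $S$ (as Lemma~\ref{lem:s_from_e} does, since $g\in E_{\fdginstance_n}$) makes the tree report an error over $f$, although $k_2$ still supplies $r$.
\item Leaving it out breaks the variant in which $k_2$ is also faulty.
\end{itemize}

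The paper does not resolve this either: Lemma~\ref{lem:s_from_e} maps the differently time-indexed fault-creation atoms of $g$ to one $b_g$. You need an explicit hypothesis. One choice is that every fault creation precedes $\fdginstance_n$ by at least $\delta$ times the largest number of channels on an acyclic dependency path from $f$.

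Under that hypothesis your cycle argument goes through. Its justification should be corrected, though. The leastness of $E_{\fdginstance_n}$ does not come from the absence of repairs. It comes from Proposition~\ref{prop:causes_for_errors}, from $\fdginstance_0$ being error-free, and from the fact that every dependency cycle crosses a channel and hence a $\delta$-delay.
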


\begin{lemma}[Derivation of $S$ from $E_{\fdginstance_n}$]\label{lem:s_from_e}
    In FTA, $S$ denotes the set of failed basic events. The set $S$ can be derived from $E' \subseteq E_{\fdginstance_n}$ s.t.
    \[
        \forall x \in E'. {\ } \exists ! b_x \in S. {\ } b_x \leftarrow x
    \]
    Note that basic event generation is idempotent in \autoref{alg:fdg_ft_transformation}. Furthermore, $\transform$ does not generate spurious basic events:
    \[
        \forall b_x \in S. {\ } \exists ! x \in E'. {\ } b_x \leftarrow x
    \]
    I.e. there exists a partial bijection from $E_{\fdginstance_n}$ to $S$.
\end{lemma}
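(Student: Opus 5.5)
The plan is to make the correspondence $b_x \leftarrow x$ explicit and then read both uniqueness claims off the structure of \autoref{alg:fdg_ft_transformation}. First I fix $E'$. Following \autoref{tab:semantic_mapping_fdg_ft} and the discussion of basic events, basic events correspond to \emph{fault activations}, not to arbitrary errors. So I take
\[
E' = \{ x \in E_{\fdginstance_n} {\ } | {\ } \faultcreation^{\fdginstance_n}_{\fdginstance_0} \langle \capid_{\fdginstance_n}(x) \rangle \},
\]
the erroneous capabilities whose error has a fault-creation event as direct cause. By \autoref{prop:causes_for_errors}, every element of $E_{\fdginstance_n}$ is either in $E'$ or erroneous purely by propagation. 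I then define $b_x \leftarrow x$ to mean that $b_x = \mbe_x$ is the basic event the algorithm generates for capability $x$, and set $S = \{ \mbe_x {\ } | {\ } x \in E' \}$.

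Second, I show that $x \mapsto \mbe_x$ is well defined, which gives the existence-and-uniqueness half. I inspect every line of \autoref{alg:fdg_ft_transformation} that generates a basic event: $\mbe_{\target}$ under $\mor_{\target}$, $\mbe_k$, $\mbe_h$, $\mbe_g$, and the recursion-guard case. In each, the event is indexed solely by a capability symbol. Generating $\mbe_x$ twice (for example $\mbe_g$ both inside the loop and again in the recursive call $\transform'(\fdginstance, P', g, \dots)$, or in the guard branch) therefore adds the same element to $\getbes(\ft)$, because $\getbes(\ft)$ is a set. This is the idempotence noted in the statement. Hence for each $x$ reachable by the traversal there is exactly one $\mbe_x$. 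I also need every $x \in E'$ that matters to $[\![\target]\!]$ to be reachable. For this I use stipulated cascading expectations (\autoref{def:transitivity_of_expectations}) together with \cref{eq:strcuture_function_implies_expectation}: erroneous capabilities relevant to $\target$ lie in the $\dee$-closure of $\target$, and the algorithm's nested loops over $\fdep$, $\fdepb$ and $\fdepc$ enumerate exactly that closure. Elements of $E'$ outside the closure are simply not in the domain, which is why the claim is only a \emph{partial} bijection.

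Third, for the no-spurious-events direction, I take $b \in S$. By the definition of $S$, $b = \mbe_x$ for some $x \in E'$. Injectivity of the labelling follows because distinct capabilities yield distinct symbols $\mbe_x \neq \mbe_y$; the capability identifiers $\capid$ are bijective. So $x$ is unique, and together with the second step this yields the partial bijection between $E_{\fdginstance_n}$ and $S$.

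The main obstacle is the second step. The algorithm's indexing is sloppy: for instance, $h = \fdepb_\fdginstance(f,r)$ is written with $f$ where $k$ is meant, and the recursion guard creates $\mbe_{\target}$ without an $\mor_{\target}$. I expect to have to argue that, under a charitable reading, every generated basic event is keyed by exactly one capability. I would handle this by first restating the algorithm as a function from capabilities to FT nodes, then proving by induction on the recursion depth, with $P$ bounding it since $\capability_{\fdginstance_0}$ is finite, that the set of basic events equals $\{\mbe_x {\ } | {\ } x \text{ in the } \dee\text{-closure of } \target\}$.
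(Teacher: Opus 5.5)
Your proposal is correct and follows the paper's proof sketch. You take $E'$ to be the fault-created capabilities that appear as fault-activation terms of $[\![ f ]\!]^{\Gamma_n}_{\Gamma_0}$, map each $x \in E'$ to the capability-indexed basic event $\mathrm{be}_x$ (whose generation in \autoref{alg:fdg_ft_transformation} is idempotent), and get the no-spurious-events direction by composition with injectivity of that labelling. There are two minor differences. The paper obtains $E' \subseteq E_{\Gamma_n}$ from \cref{eq:fault_activation_under_structure_function}, whereas you build it in by definition. Your reachability induction over recursion depth is also more careful than the paper's bare appeal to the algorithm.

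One fix is needed. Restrict your $E'$ to the $\mathscr{D}$-closure of $f$, as the paper implicitly does. Otherwise $S = \{ \mathrm{be}_x \mid x \in E' \}$ contains symbols that are not basic events of $\tau(\Gamma_0, f)$, and your remark that such elements of $E'$ lie ``outside the domain'' contradicts your own definition of $E'$.
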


\begin{proof}[Proof sketch for \autoref{lem:s_from_e}]
    Suppose $\fdginstance_0, \fdginstance_n : \fdg$ are valid {\fdgname}s, such that $\fdginstance_0 \preceq \fdginstance_n$. Let $f \in \capability_{\fdginstance_n}$ be a capability.

    We use the symbol $\faultcreation$ as shorthand notation for the existence of a fault creation event;
    \[
    \faultcreation^{\fdginstance_n}_{\fdginstance_0} \langle \mytext{id}_f \rangle \iff \exists \fdginstance_{n-i}, \fdginstance_{n-i}' : \fdg. {\ } \fdginstance_{n-i} \overset{\faultcreation_{F}}{\looparrowright} \fdginstance_{n-i}'
    \]
    where $\fdginstance_0 \prec \fdginstance_{n-i}$ and $\fdginstance_{n-i}' \preceq \fdginstance_{n}$, and $\mytext{id}_{\faultinstance{f}} \in F$.

    By \cref{eq:fault_activation_under_structure_function}, we have that
    \[
        \faultcreation^{\fdginstance_n}_{\fdginstance_0} \langle \mytext{id}_{x} \rangle \implies x \in E_{\fdginstance_n}
    \]

    It is obvious that the set of all capabilities whose fault activation is a term in $[\![ f ]\!]^{\fdginstance_n}_{\fdginstance_0}$ is then a subset $E' \subseteq E_{\fdginstance_n}$.

    Furthermore, from \autoref{alg:fdg_ft_transformation}, we have that $\transform$ maps the fault activation over such a capability $x \in E'$ to a unique basic event $b_x \in \mytext{BE}$\footnote{Note that the generation of basic events in $\transform$ is idempotent.}. Such a derived basic event is denoted $b_x \leftarrow x$. I.e.:
    \[
        \exists ! b_x \in S. {\ } b_x \leftarrow \faultcreation^{\fdginstance_n}_{\fdginstance_0} \langle \mytext{id}_{x} \rangle
    \]

    By simple composition of these two results, we obtain
    \[
        \exists E' \subseteq E_{\fdginstance_n}. {\ } \forall b_x \in S. {\ } \exists !x \in E'. {\ } b_x \leftarrow x
    \]


\end{proof}


\subsection{Inductive proof}
We proceed by induction over the size of the {\fdgname}. We split the induction in three cases. Case 1 is the addition of a new prerequisite resource type through existing connected channels and components. Case 2 is the addition of a channel with an existing resource type between already connected components. Case 3 is the addition of a new component that provides of an existing resource type to an already connected channel.

Let $\fdginstance_0 : \fdg$ be a valid {\fdgname}. Let $\ft = \transform(\fdginstance_0, \target)$. Let $I = |\resource_{\fdginstance_0}|$, $M = |\channel_{\fdginstance_0}|$, and $N = |\capability^\component_{\fdginstance_0}|$ respectively denote the number of resources, channels and component capabilities\footnote{We define $\capability^\component_{\fdginstance_0} = \{ f {\ } | {\ } f \in \capability_{\fdginstance_0}. {\ } \exists \alpha \in \component_{\fdginstance_0}. {\ } f \in \capability^\alpha_{\fdginstance_0} \}$.} in the {\fdgname} $\fdginstance_0$. Let $K = I + M + N$ be the finite size of $\fdginstance_0$.

To prove that $[\![ \target ]\!]^{\fdginstance_n}_{\fdginstance_0} \iff \pi_{\ft}(S, \target)$ holds for {\fdgnamearticle} {\fdgname} of any size $K$.

\begin{figure}
    \centering
    \includegraphics[width=0.9\linewidth]{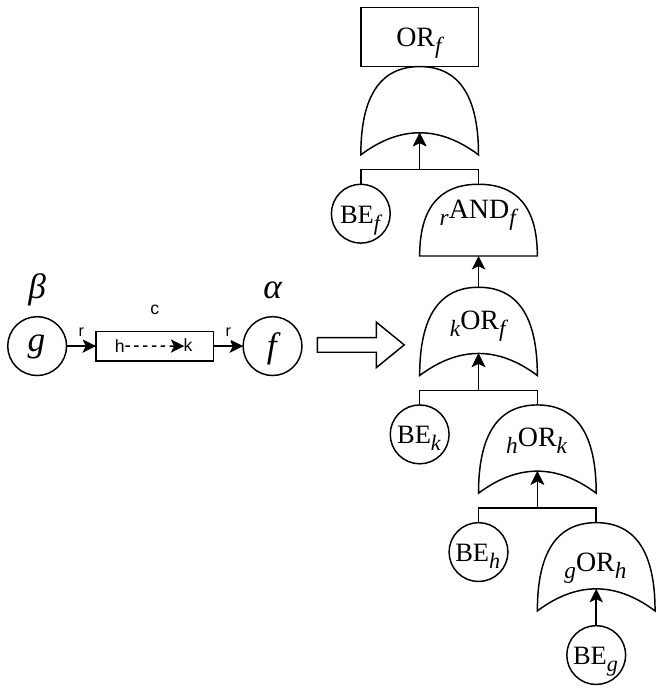}
    \caption{An illustration of the base case.}\label{fig:ip_base_case}
\end{figure}

\begin{lemma}[Base case]\label{lem:ip_base_case}
    Observe the {\fdgname} of size $I + M + N = 4$ in \autoref{fig:ip_base_case}, showing two component capabilities and a single channel between them. We argue that this interaction is the minimal {\fdgname} that is meaningful for analysis (i.e. the analysis of a single component with no inputs or outputs is trivial). The capability $\target \in \capability_{\fdginstance_n}$ receives resource type $r$ from $g \in \capability_{\fdginstance_n}$ through channel $c \in \channel_{\fdginstance_n}$.
    
    Let $\ft = \transform(\fdginstance_0, f)$ be the fault tree synthesized from the {\fdgname} $\fdginstance_0$. We then have a logical equivalence under the one-to-one correspondence between fault creation events and basic events:
    \[
        [\![ f ]\!]^{\fdginstance_{n}}_{\fdginstance_0} \iff \pi_\ft(S, f)
    \]
\end{lemma}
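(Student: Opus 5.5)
The plan is to compute both sides explicitly for the four-element {\fdgname} of \autoref{fig:ip_base_case} and compare them as Boolean formulas over the same atoms. We identify each fault creation event $\faultcreation^{\fdginstance_n}_{\fdginstance_0}\langle \mytext{id}_x \rangle$ with the basic event $\mbe_x$ via \autoref{lem:s_from_e}. The capabilities involved are $f$, $g$, $h = \channelcons_{\fdginstance_0}(c,r)$ and $k = \channelprod_{\fdginstance_0}(c,r)$. The functional dependencies in $\fdginstance_0$ are read off as in \autoref{tab:structure_function_example_fdep}:
\[
\fdep_{\fdginstance_0}(f,r)=\{\{k\}\}, \quad \fdepb_{\fdginstance_0}(k,r)=h, \quad \fdepc_{\fdginstance_0}(h,r)=\{\{g\}\}.
\]
We treat $g$ as having no prerequisites, so $\preq(g)=\emptyset$. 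This fixes the minimal example we argue is meaningful.

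First, we unfold the structure function of \autoref{fig:structure_function} bottom-up, exactly as in \cref{eq:example_structure_function_derivation}:
\begin{align*}
[\![g]\!] &\iff \faultcreation\langle g\rangle,\\
[\![h]\!] &\iff \faultcreation\langle h\rangle \vee [\![g]\!],\\
[\![k]\!] &\iff \faultcreation\langle k\rangle \vee \exists \fdginstance_{n-j}.\,[\![h]\!]^{\fdginstance_{n-j}},\\
[\![f]\!] &\iff \mytext{id}_f\in\expectation_{\fdginstance_n}\wedge(\faultcreation\langle f\rangle \vee [\![k]\!]).
\end{align*}
We discharge the expectation conjuncts using \autoref{def:transitivity_of_expectations}: placing the expectation on the capability of interest $f$ propagates it to $k$, $h$ and $g$.

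Second, we trace \autoref{alg:fdg_ft_transformation} on $(\fdginstance_0,f)$. It produces $\mor_f$ with children $\mbe_f$ and ${}_r\mand_f$. Under ${}_r\mand_f$ it places ${}_k\mor_f=\{\mbe_k, {}_h\mor_k\}$, with ${}_h\mor_k=\{\mbe_h, {}_g\mor_h\}$ and ${}_g\mor_h=\{\mbe_g, \ldots\}$. The recursive call on $g$ adds only $\mbe_g$ again, since $\preq(g)=\emptyset$, and basic event generation is idempotent. Applying \autoref{def:ft_semantics}, the AND gate has a single input, so it collapses, and we obtain
\[
\pi_\ft(S,f)=\mbe_f\vee\mbe_k\vee\mbe_h\vee\mbe_g .
\]
The fully substituted structure function reads $\faultcreation\langle f\rangle\vee\faultcreation\langle k\rangle\vee\faultcreation\langle h\rangle\vee\faultcreation\langle g\rangle$, which is syntactically the same disjunction under the correspondence $\mbe_x \leftrightarrow \faultcreation\langle x\rangle$.

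The main obstacle is the time shift in case B: $[\![k]\!]^{\fdginstance_n}$ refers to $[\![h]\!]^{\fdginstance_{n-j}}$ at an earlier situation, at least $\delta$ before $\fdginstance_n$. The fault tree, by contrast, is evaluated on the single set $S$ derived from $E_{\fdginstance_n}$. To bridge this, we use \autoref{prop:no_repairs}, so an error over $h$ at $\fdginstance_{n-j}$ persists to $\fdginstance_n$. Conversely, a fault creation over $h$ or $g$ that counts in $S$ must have occurred early enough for the depletion event $\Delta\langle c\rangle$ to complete. Here we will stipulate, as the paper does in the final remarks of \autoref{sec:structure_function}, that evaluation happens only once all relevant errors have their end points, i.e. after the depletion delay has elapsed. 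Under that stipulation the existential over $\fdginstance_{n-j}$ reduces to membership in $E_{\fdginstance_n}$, and the two formulas coincide.
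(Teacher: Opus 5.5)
Your unfolding of $[\![f]\!]^{\fdginstance_n}_{\fdginstance_0}$, your trace of the algorithm, and your final matching of fault-creation atoms with basic events through Lemma~\ref{lem:s_from_e} coincide with the paper's proof. The one real departure is your last paragraph, on the $\delta$-shift of case B. The paper's proof does not discuss this shift. It simply maps every fault-creation term of the substituted formula to its basic event, including the terms for $h_0$ and $g$ that are evaluated at the earlier situation $\fdginstance_{n-j}$.

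You are right that something more is needed if $S$ is read as recording every fault created up to $\fdginstance_n$. Under that reading, a fault created in $g$ less than $\delta$ before $\fdginstance_n$ puts $\mytext{BE}_g$ in $S$, while $k$, and hence $f$, is not yet erroneous. However, the remedy you adopt is your own hypothesis, not the paper's. The closing remark of Section~\ref{sec:structure_function} only asks that the errors spanning $\fdginstance_n$ end in some later situation $\fdginstance_{n+j}$. It says nothing about fault creations preceding $\fdginstance_n$ by at least $\delta$. As written, you therefore prove the base case only for runs that satisfy that timing restriction.

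The term-wise reading avoids this, and it is the only reading under which the paper's one-line conclusion goes through. Let each basic event stand for its own term with its own time bound:
$\mytext{BE}_h \in S$ (resp.\ $\mytext{BE}_g \in S$) iff a fault was created in $h$ (resp.\ $g$) by time $t_n-\delta$, and $\mytext{BE}_f, \mytext{BE}_k \in S$ iff the fault was created by time $t_n$. The two sides then become literally the same disjunction, and neither Proposition~\ref{prop:no_repairs} nor any stipulation is needed.

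There is also a minor slip in your trace. Since $P'=\{\mytext{OR}_f\}$, the recursive call on $g$ does create a gate $\mytext{OR}_g$ under ${}_{g}\mytext{OR}_{h}$, with $\mytext{BE}_g$ beneath it. This only repeats $\mytext{BE}_g$ inside a disjunction and changes nothing; the paper's derivation omits it as well.
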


\begin{proof}[Base case]
    \mypar{{\fdgname} Derivation}
    We evaluate the structure function $[\![ f ]\!]^{\fdginstance_n}_{\fdginstance_0}$ for the existence of an error over the capability $f$:
    \begin{align*}\label{eq:base_case_structure_function_derivation}
        [\![ f ]\!]^{\fdginstance_{n}}_{\fdginstance_0} &\iff \faultcreation^{\fdginstance_n}_{\fdginstance_0} \langle \mytext{id}_f \rangle \vee [\![k_0]\!]^{\fdginstance_{n}}_{\fdginstance_0} \\
        [\![ k_0 ]\!]^{\fdginstance_{n}}_{\fdginstance_0} &\iff \faultcreation^{\fdginstance_n}_{\fdginstance_0} \langle \mytext{id}_{k_0} \rangle \vee \left( \exists \fdginstance_{n-j} : \fdg. {\ } [\![h_0]\!]^{\fdginstance_{n-j}}_{\fdginstance_0} \right) \\
        [\![h_0]\!]^{\fdginstance_{n-j}}_{\fdginstance_0} &\iff \faultcreation^{\fdginstance_{n-j}}_{\fdginstance_0} \langle \mytext{id}_{h_0} \rangle \vee [\![g]\!]^{\fdginstance_{n-j}}_{\fdginstance_0} \\
        [\![g]\!]^{\fdginstance_{n-j}}_{\fdginstance_0} &\iff \faultcreation^{\fdginstance_{n-j}}_{\fdginstance_0} \langle \mytext{id}_{g} \rangle
    \end{align*}
    
    Recall that by \autoref{eq:strcuture_function_implies_expectation}, expectation is vacuous for any capability that we evaluate through the structure function. For readability, we therefore discard the conjunctive term $\mytext{id}_f \in \expectation_{\fdginstance_n}$. Furthermore, recall expectation is transitive over functional dependency (a mechanism reflected by the structure function). E.g. $\mytext{id}_f \in \expectation_{\fdginstance_n} \implies \mytext{id}_{k_0} \in \expectation_{\fdginstance_n}$, also omitted for readability.
    
    Bottom-up substitution of the recursively defined equivalences yields a boolean formula with atomic propositions. E.g.:
    \[
    [\![h_0]\!]^{\fdginstance_{n-j}}_{\fdginstance_0} \iff \faultcreation^{\fdginstance_{n-j}}_{\fdginstance_0} \langle \mytext{id}_{h_0} \rangle \vee [\![g]\!]^{\fdginstance_{n-j}}_{\fdginstance_0}
    \]
    becomes:
    \[
    [\![h_0]\!]^{\fdginstance_{n-j}}_{\fdginstance_0} \iff \faultcreation^{\fdginstance_{n-j}}_{\fdginstance_0} \langle \mytext{id}_{h_0} \rangle \vee \faultcreation^{\fdginstance_{n-j}}_{\fdginstance_0} \langle \mytext{id}_{g} \rangle
    \]
    
    \mypar{FT derivation}
    Assume a set $S$ of failed basic events, derived from the set of erroneous components as defined in \autoref{thm:completeness}. Recursively applying the fault tree structure function $\pi$ over the fault tree $\ft = \transform(\fdginstance_0, \target)$ yields:
    \begin{align*}\label{eq:base_case_ft_structure_function_derivation}
        \pi_\ft(S, \mytext{OR}_f) &\iff \mytext{BE}_f \in S \vee \pi_\ft(S, {}_{r}\mytext{AND}_f) \\
        \pi_\ft(S, {}_{r}\mytext{AND}_f) &\iff \pi_\ft(S, {}_{k}\mytext{OR}_f) \\
        \pi_\ft(S, {}_{k}\mytext{OR}_f) &\iff \mytext{BE}_k \in S \vee \pi_\ft(S, {}_{h}\mytext{OR}_k) \\
        \pi_\ft(S, {}_{h}\mytext{OR}_k) &\iff \mytext{BE}_h \in S \vee \pi_\ft(S, {}_{g}\mytext{OR}_h) \\
        \pi_\ft(S, {}_{g}\mytext{OR}_h) &\iff \mytext{BE}_g \in S
    \end{align*}

    \mypar{To conclude}
    By \autoref{lem:s_from_e}, for any capability $x \in \capability_{\fdginstance_n}$ such that $\faultcreation \langle \capid_{\fdginstance_n}(x) \rangle$ is a term in $[\![ f ]\!]^{\fdginstance_{n}}_{\fdginstance_0}$, we have a one-to-one correspondence to a derived term $b_x \in S$ s.t. ${\ } b_x \leftarrow x$ in the FT structure function $\pi_\ft(S, b_f)$:
    \[
        \faultcreation \langle \capid_{\fdginstance_n}(x) \rangle \mapsto b_x
    \]
\end{proof}

\begin{figure}
    \centering
    \includegraphics[width=\linewidth]{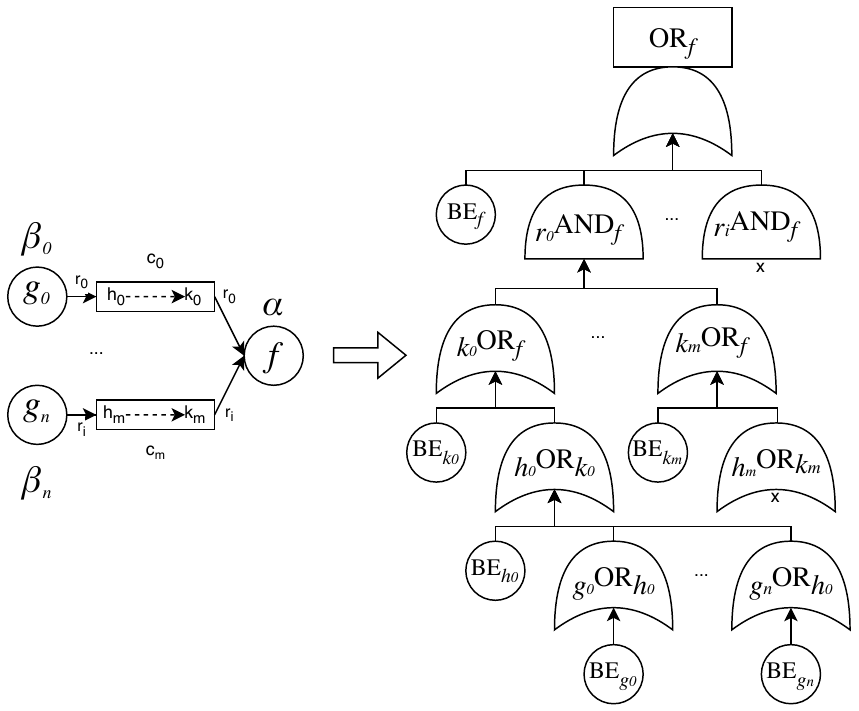}
    \caption{An illustration of the general case of the transformation.}\label{fig:ip_general_case}
\end{figure}

\begin{theorem}[IH]\label{thm:ip_ih}
    Assume a valid {\fdgname} $\fdginstance_0$ of size $K = I + M + N$, as depicted in \autoref{fig:ip_general_case}. Let $f \in \capability_{\fdginstance_0}$ be a capability, and let $\ft = \transform(\fdginstance_0, f)$ be the synthesized fault tree. The induction hypothesis is
    \[
        [\![ f ]\!]^{\fdginstance_{n}}_{\fdginstance_0} \iff \pi_\ft(S, f)
    \]
\end{theorem}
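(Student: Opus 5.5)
The statement labelled ``IH'' is the induction hypothesis itself, so what must be proved is the inductive step: if $[\![ f ]\!]^{\fdginstance_{n}}_{\fdginstance_0} \iff \pi_\ft(S, f)$ holds for every valid {\fdgname} of size $K$ and every capability $f$, it holds for every valid {\fdgname} of size $K+1$. The base case is \autoref{lem:ip_base_case}. I would argue by strong induction on $K = I + M + N$. Every valid {\fdgname} of size $K+1$ arises from one of size $K$ by one of the three extensions announced above: (1) a new prerequisite resource type, (2) a new channel carrying an existing resource type, (3) a new component capability feeding an existing channel. Given $\fdginstance_0^{+}$ and its predecessor $\fdginstance_0$, I would compare the unfolding of the structure function of \autoref{fig:structure_function} with the unfolding of $\pi$ on $\transform(\fdginstance_0^{+}, f)$. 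The aim is to show that the extension changes both sides by the same Boolean subformula, under the correspondence $\faultcreation\langle \mytext{id}_x\rangle \mapsto b_x$ of \autoref{lem:s_from_e}.

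\textbf{Per-case matching.} First I would fix the local correspondence between \autoref{fig:structure_function} and \autoref{alg:fdg_ft_transformation}. The outer disjunction ``fault creation $\vee \bigvee_r$'' is $\mor_f$ with children $\mbe_f$ and the ${}_r\mand_f$. Each $\bigwedge_{S^A \in \fdep(f,r)}$ is ${}_r\mand_f$; since minimal configuration fragments are singletons by \cref{eq:fdepa_simplified}, the inner $\bigvee_{k\in S^A}$ collapses to a single child ${}_k\mor_f$. The depletion case B is ${}_h\mor_k$, via $\fdepb$ in \cref{eq:fdepb_simplified}. Case C is the disjunction over ${}_g\mor_h$, using \cref{eq:fdepc_simplified}.

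Then the three extensions go as follows. In case 1 a new conjunct $r'$ appears under the outer disjunction on the {\fdgname} side, and a new sibling ${}_{r'}\mand_f$ appears on the FT side. In case 2 a new conjunct $[\![k']\!]$ enters an existing $\bigwedge$, matched by a new child ${}_{k'}\mor_f$ of the existing AND gate. In case 3 a new disjunct $[\![g']\!]$ enters an existing case-C disjunction, matched by a new child ${}_{g'}\mor_h$. In each case the newly added subterm has the same recursive shape as a smaller {\fdgname} rooted at the new capability. I would invoke the IH on that sub-{\fdgname}, which has size $\le K$, and on the old part. The two formulas are then equivalent because $\wedge$/$\vee$ are congruences for $\iff$.

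\textbf{Main obstacle.} The time indices and shared nodes are the hard part. Case B evaluates $[\![h]\!]$ at an earlier situation $\fdginstance_{n-j}$, whereas the FT is atemporal. I would handle this with \autoref{prop:no_repairs}: errors are monotone in $\preceq$, so $\exists \fdginstance_{n-j}.\,[\![h]\!]^{\fdginstance_{n-j}}$ implies $[\![h]\!]^{\fdginstance_{n}}$. The converse direction requires $S$ to record only fault creations occurring by $\fdginstance_{n}-\delta$ for channel-input events, so I would fix $S$ through \autoref{lem:s_from_e} accordingly. Shared nodes and cycles also need care. The recursion guard $P$ replaces a revisited $\mor_g$ by $\mbe_g$, so I must show the truncation loses nothing. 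A revisited capability already contributes its own fault-creation disjunct on the current path. Any error reaching it through the cycle would have to be caused by the very error being explained, and that is excluded because causation in UFO-B is acyclic in time (T1--T3). Hence the cyclic disjunct is redundant, and dropping it preserves equivalence. Combining the three cases with the base case closes the induction and yields \autoref{thm:completeness}.
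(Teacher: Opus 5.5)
Your skeleton is the paper's: induction on $K=I+M+N$ through the three extensions of Lemmas~\ref{lem:ip_case_i}, \ref{lem:ip_case_m} and~\ref{lem:ip_case_n}, on top of Lemma~\ref{lem:ip_base_case}, matching the two unfoldings clause by clause under $\faultcreation\langle\mytext{id}_x\rangle\mapsto b_x$. The step that fails is your case C, and with it case 3.

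You correctly read the $\bigwedge_{S^A}\bigvee_{k\in S^A}$ of case A as the AND gate ${}_{r}\mand_{f}$. But Fig.~\ref{fig:structure_function} has the same pattern in case C, $\bigwedge_{S^C\in\fdepc_{\fdginstance_0}(h,r)}\bigvee_{g\in S^C}[\![g]\!]$. By \cref{eq:fdepc_simplified} this is a \emph{conjunction} over the providers of $h$. That is as it should be: by Postulate~\ref{post:any_access_to_resource_type_sufficient} and \cref{eq:access_necessary_manifestation}, $h$ stays manifested while any provider's output interface survives. Algorithm~\ref{alg:fdg_ft_transformation}, by contrast, hangs every ${}_{g}\mor_{h}$ under the OR gate ${}_{h}\mor_{k}$.

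So adding a fresh provider $g'$ to an existing channel adds a conjunct on the {\fdgname} side and a disjunct on the FT side. The equivalence breaks as soon as $h$ has two providers. With a fault created only in $g_1$, $\pi_{\ft}$ is true at ${}_{h}\mor_{k}$, and at the top event if $k$ is $f$'s only source of $r$. Yet $[\![h]\!]$ is false, because $g_2$ still supplies $r$. Your ``case C is the disjunction over ${}_{g}\mor_{h}$'' is exactly how the paper writes \cref{eq:ih_structure_function_derivation} and Lemma~\ref{lem:ip_case_n}, but it is not what the structure function says. Case 3 only goes through if every channel input has a single provider, or if the algorithm places an AND gate over the ${}_{g}\mor_{h}$.

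Several further steps need repair.

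First, invoking the IH on ``a smaller {\fdgname} rooted at the new capability'' does not typecheck in cases 1 and 2. The new fragments are rooted at ${}_{r'}\mand_{f}$ or at the channel node ${}_{k'}\mor_{f}$. But $\transform$ starts from an empty guard and expands its argument through $\fdep$, which is only defined for component capabilities. The paper instead unfolds the new fragment explicitly. That works because in the depicted shape the providers are leaves, $[\![g]\!]\iff\faultcreation\langle\mytext{id}_g\rangle$. If you do the same, the guard $P$ never fires and your cycle paragraph is not needed.

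Second, your timing remark is a real point the paper skips. With $S$ built from all fault creations up to $\fdginstance_n$ (Lemma~\ref{lem:s_from_e}), a fault created in $h$ within the last $\delta$ time units makes the FT fire while $k$ is still manifested. However, the shifted threshold must also apply to the providers $g$ below $h$, not only to channel inputs. It is consistent only because every $h$ and $g$ sits exactly one channel hop below $f$. A basic event reached along paths with different numbers of channel hops cannot be given one threshold, so this fix does not survive recursion into arbitrary sub-{\fdgname}s.

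Third, $K$ does not count interfaces. Linking an existing provider to a second existing channel changes both formulae without changing $K$. Your claim that every {\fdgname} of size $K+1$ arises from one of size $K$ by the three extensions therefore needs a measure that includes interfaces.

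(Minor: in case 1 the new block for $r'$ is a disjunct of $\bigvee_r$, not a conjunct.)
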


We derive the structure function for a generic {\fdgname} of size $K$ for reusability. We do the same for the structure function of the FT synthesized from this {\fdgname}.
\mypar{{\fdgname} Derivation}
    We evaluate the structure function $[\![ f ]\!]^{\fdginstance_n}_{\fdginstance_0}$ for the existence of an error over the capability $f$:
    \begin{align}\label{eq:ih_structure_function_derivation}
    \begin{split}
        [\![ f ]\!]^{\fdginstance_{n}}_{\fdginstance_0} &\iff \faultcreation^{\fdginstance_n}_{\fdginstance_0} \langle \mytext{id}_f \rangle \vee
            \\ &\biggl( \Bigl(
            \dots \wedge [\![k^{r_0}_m]\!]^{\fdginstance_{n}}_{\fdginstance_0} \Bigr) \vee \dots \vee \Bigl(
            \dots \wedge [\![k^{r_i}_m]\!]^{\fdginstance_{n}}_{\fdginstance_0} \Bigr) \biggr) \\
        [\![ k^{r_0}_0 ]\!]^{\fdginstance_{n}}_{\fdginstance_0} &\iff \faultcreation^{\fdginstance_n}_{\fdginstance_0} \langle \mytext{id}_{k^{r_0}_0} \rangle \vee \left( \exists \fdginstance_{n-j} : \fdg. {\ } [\![h^{r_0}_0]\!]^{\fdginstance_{n-j}}_{\fdginstance_0} \right) \\
        &\cdots\\
        [\![ k^{r_i}_m ]\!]^{\fdginstance_{n}}_{\fdginstance_0} &\iff \faultcreation^{\fdginstance_n}_{\fdginstance_0} \langle \mytext{id}_{k^{r_i}_m} \rangle \vee \left( \exists \fdginstance_{n-j} : \fdg. {\ } [\![h^{r_i}_m]\!]^{\fdginstance_{n-j}}_{\fdginstance_0} \right) \\
        [\![h^{r_0}_0]\!]^{\fdginstance_{n-j}}_{\fdginstance_0} &\iff \faultcreation^{\fdginstance_{n-j}}_{\fdginstance_0} \langle \mytext{id}_{h^{r_0}_0} \rangle \vee [\![g^{h^{r_0}_0}_0]\!]^{\fdginstance_{n-j}}_{\fdginstance_0} \vee \dots \vee [\![g^{h^{r_i}_m}_n]\!]^{\fdginstance_{n-j}}_{\fdginstance_0} \\
        &\cdots\\
        [\![h^{r_i}_m]\!]^{\fdginstance_{n-j}}_{\fdginstance_0} &\iff \faultcreation^{\fdginstance_{n-j}}_{\fdginstance_0} \langle \mytext{id}_{h^{r_i}_m} \rangle \vee [\![g^{h^{r_0}_0}_0]\!]^{\fdginstance_{n-j}}_{\fdginstance_0} \vee \dots \vee [\![g^{h^{r_i}_m}_n]\!]^{\fdginstance_{n-j}}_{\fdginstance_0} \\
        [\![g^{h^{r_0}_0}_0]\!]^{\fdginstance_{n-j}}_{\fdginstance_0} &\iff \faultcreation^{\fdginstance_{n-j}}_{\fdginstance_0} \langle \mytext{id}_{g^{h^{r_0}_0}_0} \rangle \\
        &\cdots\\
        [\![g^{h^{r_i}_m}_n]\!]^{\fdginstance_{n-j}}_{\fdginstance_0} &\iff \faultcreation^{\fdginstance_{n-j}}_{\fdginstance_0} \langle \mytext{id}_{g^{h^{r_i}_m}_n} \rangle \\
    \end{split}
    \end{align}

    Note that $\fdginstance_{n-j}$ is fresh in each term $[\![h^{r_x}_y]\!]^{\fdginstance_{n-j}}_{\fdginstance_0}$, though we have simplified its index here for readability.

\mypar{FT derivation}
    Assume a set $S$ of failed basic events, derived from the set of erroneous components as defined in \autoref{thm:completeness}. Recursively applying the fault tree structure function $\pi$ over the fault tree $\ft = \transform(\fdginstance_0, \target)$ yields:
    \begin{align}\label{eq:ih_ft_structure_function_derivation}
    \begin{split}
        \pi_\ft(S, \mytext{OR}_f) &\iff \mytext{BE}_f \in S \vee\\
            \bigl( \pi_\ft(S, &{}_{r_0}\mytext{AND}_f) \vee \dots \vee \pi_\ft(S, {}_{r_i}\mytext{AND}_f) \bigr) \\
        \pi_\ft(S, {}_{r0}\mytext{AND}_f) &\iff\\
            \pi_\ft(S, &{}_{k^{r_0}_0}\mytext{OR}_f) \wedge \dots \wedge \pi_\ft(S, {}_{k^{r_0}_m}\mytext{OR}_f) \\
        \cdots\\
        \pi_\ft(S, {}_{r_i}\mytext{AND}_f) &\iff\\
            \pi_\ft(S, &{}_{k^{r_i}_0}\mytext{OR}_f) \wedge \dots \wedge \pi_\ft(S, {}_{k^{r_i}_m}\mytext{OR}_f) \\
        \pi_\ft(S, {}_{k^{r_0}_0}\mytext{OR}_f)) &\iff \mytext{BE}_{k^{r_0}_0} \in S \vee \pi_\ft(S, {}_{h^{r_0}_0}\mytext{OR}_{k^{r_0}_0}) \\
        \cdots\\
        \pi_\ft(S, {}_{k^{r_i}_m}\mytext{OR}_f)) &\iff \mytext{BE}_{k^{r_i}_m} \in S \vee \pi_\ft(S, {}_{h^{r_i}_m}\mytext{OR}_{k^{r_i}_m}) \\
        \pi_\ft(S, {}_{h^{r_0}_0}\mytext{OR}_k) &\iff \mytext{BE}_{h^{r_0}_0} \in S \vee\\
            \pi_\ft(S, &{}_{g^{h^{r_0}_0}_0}\mytext{OR}_{h^{r_0}_0}) \vee \dots \vee \pi_\ft(S, {}_{g^{h^{r_i}_m}_n}\mytext{OR}_{h^{r_i}_m}) \\
        \cdots\\
        \pi_\ft(S, {}_{h^{r_i}_m}\mytext{OR}_k) &\iff \mytext{BE}_{h^{r_i}_m} \in S \vee\\
            \pi_\ft(S, &{}_{g^{h^{r_0}_0}_0}\mytext{OR}_{h^{r_0}_0}) \vee \dots \vee \pi_\ft(S, {}_{g^{h^{r_i}_m}_n}\mytext{OR}_{h^{r_i}_m}) \\
        \pi_\ft(S, {}_{g^{h^{r_0}_0}_0}\mytext{OR}_{h^{r_0}_0}) &\iff \mytext{BE}_{g^{h^{r_0}_0}_0} \in S\\
        \cdots\\
        \pi_\ft(S, {}_{g^{h^{r_i}_m}_n}\mytext{OR}_{h^{r_i}_m}) &\iff \mytext{BE}_{g^{h^{r_i}_m}_n} \in S
    \end{split}
    \end{align}

\begin{proof}[IH proof sketch]
    To prove \autoref{thm:ip_ih}, we induct over the size of $\fdginstance_0$ to show that $K + 1$ holds. We do so in three dimensions.

    \begin{enumerate}
        \item By \autoref{lem:ip_case_i}, we have that the IH holds for
        \[K + 1 = (I + 1) + M + N\]
        \item By \autoref{lem:ip_case_m}, we have that the IH holds for
        \[K + 1 = I + (M + 1) + N\]
        \item By \autoref{lem:ip_case_n}, we have that the IH holds for
        \[K + 1 = I + M + (N + 1)\]
    \end{enumerate}

    Furthermore, if $K$ increases with any elements that do not affect the terms of the structure function $[\![ f ]\!]^{\fdginstance_{n}}_{\fdginstance_0}$, then the IH holds trivially. We have thus shown that for any valid {\fdgname} of size $K + 1$, the IH holds.
\end{proof}

\begin{lemma}[Case $K + 1 = (I + 1) + M + N$]\label{lem:ip_case_i}

    Let $\fdginstance_0'$ be the extension of $\fdginstance_0$ by a fresh prerequisite resource $r_{i + 1}$. Let $f \in \capability_{\fdginstance_0'}$ be a capability, and let $\ft' = \transform(\fdginstance_0', f)$ be the synthesized fault tree. Then
    \[
        [\![ f ]\!]^{\fdginstance_{n}}_{\fdginstance_0'} \iff \pi_{\ft'}(S, f)
    \]
    
\end{lemma}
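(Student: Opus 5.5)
The plan is to compare the two recursive unfoldings, \cref{eq:ih_structure_function_derivation} for the {\fdgname} and \cref{eq:ih_ft_structure_function_derivation} for the synthesized FT, before and after adding $r_{i+1}$. The aim is to show that the extension contributes exactly one new disjunct on each side, and that these two disjuncts are equivalent.

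First, fix what the extension $\fdginstance_0'$ contains. We have $\preq_{\fdginstance_0'}(f) = \preq_{\fdginstance_0}(f) \cup \{ r_{i+1} \}$. For $r_{i+1}$ to be meaningful, it must be supplied through already existing channels and components: some channel output capabilities $k^{r_{i+1}}_0, \dots, k^{r_{i+1}}_m$, their inputs $h^{r_{i+1}}_y = \fdepb_{\fdginstance_0'}(k^{r_{i+1}}_y, r_{i+1})$, and producing component capabilities $g$. By \cref{eq:fdepa_simplified} and \cref{eq:fdepc_simplified}, every minimal configuration fragment is a singleton. By \autoref{post:events_as_changes}, all functional dependency sets for $r_0,\dots,r_i$ are unchanged.

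Next, unfold the structure function of \autoref{fig:structure_function} for $f$ in $\fdginstance_0'$. The big disjunction over $r \in \preq(f)$ splits into the disjunction over $r_0,\dots,r_i$, which is literally the body of $[\![ f ]\!]^{\fdginstance_n}_{\fdginstance_0}$ minus the fault-creation term, plus one new disjunct $D_{r_{i+1}} = \bigwedge_{y} [\![k^{r_{i+1}}_y]\!]^{\fdginstance_n}_{\fdginstance_0'}$. Symmetrically, in \autoref{alg:fdg_ft_transformation} the loop over $\preq(f)$ produces the same gates as for $\ft$, plus one new gate ${}_{r_{i+1}}\mytext{AND}_f$ under $\mytext{OR}_f$. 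Hence
\[
\pi_{\ft'}(S, \mytext{OR}_f) \iff \pi_{\ft}(S, \mytext{OR}_f) \vee \pi_{\ft'}(S, {}_{r_{i+1}}\mytext{AND}_f).
\]
By the IH (\autoref{thm:ip_ih}), the first disjuncts on the two sides are equivalent. This uses the fact that the $\fdginstance_0$-part of the subtree is generated identically, since the recursion guard $P$ depends only on the ancestor gates.

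It remains to show $D_{r_{i+1}} \iff \pi_{\ft'}(S, {}_{r_{i+1}}\mytext{AND}_f)$. Both sides are conjunctions indexed by the same set $\fdep(f, r_{i+1})$, so it suffices to compare them conjunct by conjunct: $[\![k]\!]$ against ${}_{k}\mytext{OR}_f$, then $[\![h]\!]$ against ${}_{h}\mytext{OR}_k$, then $[\![g]\!]$ against ${}_{g}\mytext{OR}_h$. At each level, the fault-creation term corresponds to the basic event via the partial bijection of \autoref{lem:s_from_e}. At the $g$ level, the recursive call $\transform'(\fdginstance_0', P', g, \dots)$ is handled by the IH applied to $g$, whose dependency cone lies in the existing graph, or else it terminates by the guard.

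The main obstacle is the temporal quantifier in case B: $[\![k]\!]^{\fdginstance_n}$ refers to $\exists \fdginstance_{n-j}.\,[\![h]\!]^{\fdginstance_{n-j}}$, while the FT has no time. I would resolve this with \autoref{prop:no_repairs}. Erroneousness is monotone, so an error over $h$ at some $\fdginstance_{n-j} \preceq \fdginstance_n$ persists. Together with the derivation of $S$ from all fault creations up to $\fdginstance_n$, this means the existential collapses to membership of the corresponding basic events in $S$, exactly as was done implicitly in the base case \autoref{lem:ip_base_case}.
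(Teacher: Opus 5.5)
Your skeleton is the paper's own proof: unfold \cref{eq:ih_structure_function_derivation} and \cref{eq:ih_ft_structure_function_derivation} for $\fdginstance_0'$, note that each side gains exactly one disjunct (the conjunction of the $[\![k^{r_{i+1}}_y]\!]$, resp.\ the gate ${}_{r_{i+1}}\mytext{AND}_f$), match the new atoms one-to-one via \autoref{lem:s_from_e}, and discharge the rest with the IH. The paper says nothing about the time indices. The step you add to handle them is where your argument breaks.

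Case B of \autoref{fig:structure_function} requires $t_n - t_{n-j} \geq \delta$. So $[\![k]\!]^{\fdginstance_n}_{\fdginstance_0'}$ needs $h$ to be erroneous at a situation at least $\delta$ \emph{before} $\fdginstance_n$. \autoref{prop:no_repairs} only propagates erroneousness forward in time. That suffices for $[\![f]\!] \Rightarrow \pi$, but the converse needs the opposite.

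Here is a concrete counterexample. Take a single channel for $r_{i+1}$, and let the only fault creation in the history be over a producer $g$ of $r_{i+1}$, completed at $t_n - \delta/2$. With your $S$ (all fault creations up to $\fdginstance_n$), $\mytext{BE}_g \in S$, so ${}_{r_{i+1}}\mytext{AND}_f$ and hence the top event evaluate to $1$. Yet the channel has not yet depleted, and no admissible $\fdginstance_{n-j}$ makes $[\![h]\!]$ true. So $[\![f]\!]^{\fdginstance_n}_{\fdginstance_0'}$ is false.

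The repair, which \autoref{lem:s_from_e} leaves implicit, is to read $S$ off the time-indexed atoms of the unfolded formula:
\begin{itemize}
\item $\mytext{BE}_x$ with $x$ equal to $f$ or some $k$ is in $S$ iff the fault creation over $x$ completed by $t_n$;
\item $\mytext{BE}_x$ with $x$ some $h$ or $g$ is in $S$ iff the fault creation over $x$ completed by $t_n - \delta$.
\end{itemize}
The existential over $\fdginstance_{n-j}$ then collapses because $\faultcreation^{\fdginstance_m}_{\fdginstance_0}\langle x \rangle$ is monotone in $\fdginstance_m$ by definition. No appeal to persistence of errors is needed. In this flattened three-level shape each basic event sits at a single delay depth, so such an $S$ is well defined despite idempotent basic-event generation.

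Two smaller points:
\begin{itemize}
\item The invariance of $\fdep$, $\fdepb$, $\fdepc$ for $r_0, \dots, r_i$ follows from their definitions, since they select interfaces by resource type. It does not follow from \autoref{post:events_as_changes}, which concerns events between situations rather than a structural extension of $\fdginstance_0$.
\item At the $g$ level you need no IH; nor does \autoref{thm:ip_ih}, stated for $\fdginstance_0$ with an empty guard, literally apply there. In this shape $[\![g]\!]^{\fdginstance_{n-j}}_{\fdginstance_0'} \iff \faultcreation^{\fdginstance_{n-j}}_{\fdginstance_0}\langle \mytext{id}_g \rangle$, and the recursive call only adds $\mytext{OR}_g$ over the same basic event $\mytext{BE}_g$.
\end{itemize}
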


\begin{proof}[Case $I+1$]
    Recall the formulae \cref{eq:ih_structure_function_derivation} and \cref{eq:ih_ft_structure_function_derivation}. By adding an additional resource $r_{i+1}$, we obtain the following extension for the {\fdgname} structure function:
    \begin{align*}\label{eq:ih_i_structure_function_derivation}
        [\![ f ]\!]^{\fdginstance_{n}}_{\fdginstance_0} &\iff \faultcreation^{\fdginstance_n}_{\fdginstance_0} \langle \mytext{id}_f \rangle \vee
            \\ &\biggl( 
            \dots \vee
            \Bigl( \dots \wedge [\![k^{r_{i}}_m]\!]^{\fdginstance_{n}}_{\fdginstance_0} \Bigr) \bm{\vee \Bigl(} \mathbf{\dots} \boldsymbol{\wedge} \bm{ [\![k^{r_{i+1}}_m]\!]^{\fdginstance_{n}}_{\fdginstance_0} } \bm{\Bigr)} \biggr) \\
        %
        &\cdots\\
        [\![ k^{r_{i}}_m ]\!]^{\fdginstance_{n}}_{\fdginstance_0} &\iff \faultcreation^{\fdginstance_n}_{\fdginstance_0} \langle \mytext{id}_{k^{r_i}_m} \rangle \vee \left( \exists \fdginstance_{n-j} : \fdg. {\ } [\![h^{r_i}_m]\!]^{\fdginstance_{n-j}}_{\fdginstance_0} \right) \\
        \bm{[\![ k^{r_{i+1}}_m ]\!]^{\fdginstance_{n}}_{\fdginstance_0}} & \bm{\iff \faultcreation^{\fdginstance_n}_{\fdginstance_0} \langle \textbf{id}_{k^{r_{i+1}}_m} \rangle}\\
            &\bm{\vee \left( \exists \fdginstance_{n-j} : \fdg. {\ } [\![h^{r_{i+1}}_m]\!]^{\fdginstance_{n-j}}_{\fdginstance_0} \right)} \\
        %
        &\cdots\\
        [\![h^{r_i}_m]\!]^{\fdginstance_{n-j}}_{\fdginstance_0} &\iff \faultcreation^{\fdginstance_{n-j}}_{\fdginstance_0} \langle \mytext{id}_{h^{r_i}_m} \rangle  \vee \dots \vee [\![g^{h^{r_i}_m}_n]\!]^{\fdginstance_{n-j}}_{\fdginstance_0} \\
        \bm{[\![h^{r_{i+1}}_m]\!]^{\fdginstance_{n-j}}_{\fdginstance_0}} & \bm{\iff \faultcreation^{\fdginstance_{n-j}}_{\fdginstance_0} \langle \textbf{id}_{h^{r_{i+1}}_m} \rangle \vee} \mathbf{\dots} \bm{\vee [\![g^{h^{r_{i+1}}_m}_n]\!]^{\fdginstance_{n-j}}_{\fdginstance_0}} \\
        %
        &\cdots\\
        [\![g^{h^{r_i}_m}_n]\!]^{\fdginstance_{n-j}}_{\fdginstance_0} &\iff \faultcreation^{\fdginstance_{n-j}}_{\fdginstance_0} \langle \mytext{id}_{g^{h^{r_i}_m}_n} \rangle \\
        \bm{[\![g^{h^{r_{i+1}}_m}_n]\!]^{\fdginstance_{n-j}}_{\fdginstance_0}} &\bm{\iff \faultcreation^{\fdginstance_{n-j}}_{\fdginstance_0} \langle \textbf{id}_{g^{h^{r_{i+1}}_m}_n} \rangle} \\
    \end{align*}

    For the FT structure function, we obtain a similar extension:
    \begin{align*}
        \pi_\ft(S, \mytext{OR}_f) &\iff \mytext{BE}_f \in S \vee\\
            & \bigl( \dots \vee \pi_\ft(S, {}_{r_i}\mytext{AND}_f)\\
                &\qquad \bm{\vee \pi_\ft(S, {}_{r_{i+1}}\textbf{AND}_f)} \bigr) \\
        \cdots\\
        \pi_\ft(S, {}_{r_i}\mytext{AND}_f) &\iff \pi_\ft(S, {}_{k^{r_i}_0}\mytext{OR}_f) \wedge \dots\\
            &\wedge \pi_\ft(S, {}_{k^{r_i}_m}\mytext{OR}_f) \\
        \bm{\pi_\ft(S, {}_{r_{i+1}}\textbf{AND}_f)} & \bm{\iff \pi_\ft(S, {}_{k^{r_{i+1}}_0}\textbf{OR}_f)} \boldsymbol{\wedge} \mathbf{\dots}\\
            &\boldsymbol{\wedge} \bm{\pi_\ft(S, {}_{k^{r_{i+1}}_m}\textbf{OR}_f)} \\
        %
        \cdots\\
        \pi_\ft(S, {}_{k^{r_i}_m}\mytext{OR}_f)) &\iff \mytext{BE}_{k^{r_i}_m} \in S \vee \pi_\ft(S, {}_{h^{r_i}_m}\mytext{OR}_{k^{r_i}_m}) \\
        \bm{\pi_\ft(S, {}_{k^{r_{i+1}}_m}\textbf{OR}_f))} & \bm{\iff \textbf{BE}_{k^{r_{i+1}}_m} \in S}\\
            &\bm{\vee \pi_\ft(S, {}_{h^{r_{i+1}}_m}\textbf{OR}_{k^{r_{i+1}}_m})} \\
        %
        \cdots\\
        \pi_\ft(S, {}_{h^{r_i}_m}\mytext{OR}_k) &\iff \mytext{BE}_{h^{r_i}_m} \in S \vee \dots\\
            &\vee \pi_\ft(S, {}_{g^{h^{r_i}_m}_n}\mytext{OR}_{h^{r_i}_m}) \\
        \bm{\pi_\ft(S, {}_{h^{r_{i+1}}_m}\textbf{OR}_k)} & \bm{\iff \textbf{BE}_{h^{r_{i+1}}_m} \in S \vee} \mathbf{\dots}\\
            &\bm{\vee \pi_\ft(S, {}_{g^{h^{r_{i+1}}_m}_n}\textbf{OR}_{h^{r_{i+1}}_m})} \\
        %
        \cdots\\
        \pi_\ft(S, {}_{g^{h^{r_i}_m}_n}\mytext{OR}_{h^{r_i}_m}) &\iff \mytext{BE}_{g^{h^{r_i}_m}_n} \in S \\
        \bm{\pi_\ft(S, {}_{g^{h^{r_{i+1}}_m}_n}\textbf{OR}_{h^{r_{i+1}}_m})} & \bm{\iff \textbf{BE}_{g^{h^{r_{i+1}}_m}_n} \in S}
    \end{align*}

    Apply bottom-up substitution to observe that both resulting boolean formulae are extended with clauses that are disjunctive w.r.t. \cref{eq:ih_structure_function_derivation} and \cref{eq:ih_ft_structure_function_derivation} and between whose terms there is a one-to-one mapping. I.e.:
    \begin{multline*}
        [\![k^{r_{i+1}}_0]\!]^{\fdginstance_{n}}_{\fdginstance_0} \wedge \dots \wedge [\![k^{r_{i+1}}_m]\!]^{\fdginstance_{n}}_{\fdginstance_0} \iff\\
        \pi_\ft(S, {}_{k^{r_{i+1}}_0}\mytext{OR}_f) \wedge \dots \wedge \pi_\ft(S, {}_{k^{r_{i+1}}_m}\mytext{OR}_f)
    \end{multline*}

    And by the IH:
    \[
        [\![ f ]\!]^{\fdginstance_{n}}_{\fdginstance_0'} \iff \pi_{\ft'}(S, f)
    \]
    
\end{proof}

\begin{lemma}[Case $K + 1 = I + (M + 1) + N$]\label{lem:ip_case_m}

    Let $\fdginstance_0'$ be the extension of $\fdginstance_0$ by a fresh channel $c_{m + 1}$. Let $f \in \capability_{\fdginstance_0'}$ be a capability, and let $\ft' = \transform(\fdginstance_0', f)$ be the synthesized fault tree. Then
    \[
        [\![ f ]\!]^{\fdginstance_{n}}_{\fdginstance_0'} \iff \pi_{\ft'}(S, f)
    \]
    
\end{lemma}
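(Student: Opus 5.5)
The plan mirrors the proof of \autoref{lem:ip_case_i}: write the extended \fdgname{} structure function and the extended FT structure function side by side, and show that the new terms correspond one-to-one. The fresh channel $c_{m+1}$ accommodates an existing resource type $r_x \in \preq_{\fdginstance_0}(f)$. It comes with an input capability $h^{r_x}_{m+1} = \channelcons_{\fdginstance_0'}(c_{m+1}, r_x)$ and an output capability $k^{r_x}_{m+1} = \channelprod_{\fdginstance_0'}(c_{m+1}, r_x)$. It is connected by an output interface to some existing producer capability $g$, and by an input interface to $f$ (or to some intermediate consumer already present in the expansion of $f$).

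\textbf{Step 1: effect on the functional dependencies.} By \cref{eq:fdepa_simplified}, $\fdep_{\fdginstance_0'}(f, r_x) = \fdep_{\fdginstance_0}(f, r_x) \cup \{\{k^{r_x}_{m+1}\}\}$. By \cref{eq:fdepb_simplified}, $\fdepb_{\fdginstance_0'}(k^{r_x}_{m+1}, r_x) = h^{r_x}_{m+1}$. By \cref{eq:fdepc_simplified}, $\fdepc_{\fdginstance_0'}(h^{r_x}_{m+1}, r_x)$ contains the singletons of the connected producers. All other dependency sets are unchanged. The connected producers are existing capabilities, so their subformulas already occur in \cref{eq:ih_structure_function_derivation}.

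\textbf{Step 2: extend the \fdgname{} side.} Using the structure function of \autoref{fig:structure_function}, the only changes to \cref{eq:ih_structure_function_derivation} are these:
\begin{itemize}
\item the conjunction for $r_x$ gains one conjunct $[\![k^{r_x}_{m+1}]\!]^{\fdginstance_n}_{\fdginstance_0'}$;
\item there are fresh equivalences for $k^{r_x}_{m+1}$, namely fault creation or $\exists \fdginstance_{n-j}.\,[\![h^{r_x}_{m+1}]\!]$;
\item there is a fresh equivalence for $h^{r_x}_{m+1}$, namely fault creation or a disjunction over the producers.
\end{itemize}

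\textbf{Step 3: extend the FT side.} Tracing \autoref{alg:fdg_ft_transformation} on \cref{eq:ih_ft_structure_function_derivation}, the gate ${}_{r_x}\mytext{AND}_f$ gains the child ${}_{k^{r_x}_{m+1}}\mytext{OR}_f$. Under it the algorithm places $\mytext{BE}_{k^{r_x}_{m+1}}$, then ${}_{h^{r_x}_{m+1}}\mytext{OR}_{k}$ with $\mytext{BE}_{h^{r_x}_{m+1}}$, and finally the producer OR gates.

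\textbf{Step 4: conclude.} Apply bottom-up substitution and use \autoref{lem:s_from_e} to match $\faultcreation\langle \mytext{id}_{k^{r_x}_{m+1}}\rangle \mapsto \mytext{BE}_{k^{r_x}_{m+1}}$ and $\faultcreation\langle \mytext{id}_{h^{r_x}_{m+1}}\rangle \mapsto \mytext{BE}_{h^{r_x}_{m+1}}$. This gives
\[
[\![k^{r_x}_{m+1}]\!]^{\fdginstance_n}_{\fdginstance_0'} \iff \pi_{\ft'}(S, {}_{k^{r_x}_{m+1}}\mytext{OR}_f).
\]
The two formulas are then extended by the same conjunct inside the same $r_x$-clause. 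Since the remaining clauses coincide by the IH (\autoref{thm:ip_ih}), and conjunction is congruent under $\iff$, the equivalence transfers to $[\![ f ]\!]^{\fdginstance_n}_{\fdginstance_0'} \iff \pi_{\ft'}(S, f)$.

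Two side cases remain. If $c_{m+1}$ is not connected to any capability reachable from $f$, neither formula changes and the claim is immediate, as remarked in the IH proof sketch. If the producer $g$ was already expanded (so $\mor_g \in P$), the recursion guard of the algorithm emits only a basic event. I must then argue that this matches the \fdgname{} side, where $[\![g]\!]$ is already an existing subformula. I will argue that $\mytext{BE}_g$ is idempotently the same basic event, so its truth value agrees.

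\textbf{Main obstacle.} The hardest part is the temporal shift in case B. The new $h$-term is evaluated at a fresh earlier situation $\fdginstance_{n-j}$, while the FT has no time. I would rely on \autoref{prop:no_repairs}: erroneous capabilities stay erroneous, so $\exists \fdginstance_{n-j}.\,[\![h]\!]^{\fdginstance_{n-j}}_{\fdginstance_0'}$ collapses to membership of the corresponding basic events in $S$, which is derived from $E_{\fdginstance_n}$. This assumes $\delta$ has elapsed before $\fdginstance_n$, as required in \autoref{fig:structure_function}.
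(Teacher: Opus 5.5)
Your Steps 1--4 follow the paper's own argument. The fresh channel adds one conjunct $[\![k^{r_x}_{m+1}]\!]$ to the $r_x$-clause. Algorithm~\ref{alg:fdg_ft_transformation} adds the matching child ${}_{k^{r_x}_{m+1}}\mathrm{OR}_f$ under ${}_{r_x}\mathrm{AND}_f$. Bottom-up substitution, \autoref{lem:s_from_e} and the IH then close the case.

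\textbf{The fresh equivalence for $h^{r_x}_{m+1}$ is wrong as stated.} Case C of \autoref{fig:structure_function} is $\bigwedge_{S^C \in \fdepc_{\fdginstance_0}(h,r)} \bigvee_{g \in S^C} [\![g]\!]$. Every $S^C$ is a singleton, so this is a \emph{conjunction} over the producers of $h^{r_x}_{m+1}$, not a disjunction: by the sufficient-access postulate, one live producer keeps $h$ supplied. The algorithm, by contrast, hangs every ${}_g\mathrm{OR}_h$ directly under the OR gate ${}_h\mathrm{OR}_k$. Take two producers $g_1, g_2$ on $c_{m+1}$ and fault only $g_1$. The tree then makes ${}_{k^{r_x}_{m+1}}\mathrm{OR}_f$ true, while $k^{r_x}_{m+1}$ is still supplied. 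So your key equivalence $[\![k^{r_x}_{m+1}]\!] \iff \pi_{\mathrm{FT}'}(S, {}_{k^{r_x}_{m+1}}\mathrm{OR}_f)$ fails. The paper's proof writes the same disjunction, following the algorithm rather than the figure, so you match it. To make the step valid, either restrict $c_{m+1}$ to a single producer capability, where the two readings coincide, or change the algorithm to put an AND gate over the producers.

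\textbf{Both side arguments need repair.}
\begin{itemize}
\item \emph{Recursion guard.} Each recursive call receives $P' = P \cup \{\mathrm{or}_f\}$, so the guard $\mathrm{or}_g \in P$ fires only when $g$ is an ancestor on the current path, i.e.\ when $c_{m+1}$ closes a dependency cycle. The tree then puts $\mathrm{be}_g$, the fault creation of $g$, where the \fdgname{} side has the full term $[\![g]\!]$. These differ whenever $g$ errs by propagation, so idempotent basic-event generation is not why they agree. What works is the causal reading of the error equations. Errors arise only from fault creation and depletion, so self-supporting cycles are excluded and only the least solution is admissible. For an ancestor equation $x = b \vee \Psi(x)$ with $\Psi$ monotone, the least solution is $b \vee \Psi(\bot)$, which equals the tree's $b \vee \Psi(b)$.
\item \emph{Timing.} The side condition $t_n - t_{n-j} \geq \delta$ in \autoref{fig:structure_function} constrains the witness $\fdginstance_{n-j}$, not when faults were created. \autoref{prop:no_repairs} only takes you from an earlier error into $E_{\fdginstance_n}$. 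A fault created on $h^{r_x}_{m+1}$ less than $\delta$ before $\fdginstance_n$ puts $\mathrm{BE}_{h^{r_x}_{m+1}}$ in $S$ while $k^{r_x}_{m+1}$ is not yet depleted. You therefore need an explicit extra hypothesis: every fault creation counted in $S$ precedes $\fdginstance_n$ by at least $\delta$. The paper's proof silently identifies the two fault-creation propositions and never states this.
\end{itemize}
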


\begin{proof}[Case $M + 1$]
    Recall the formulae \cref{eq:ih_structure_function_derivation} and \cref{eq:ih_ft_structure_function_derivation}. By adding an additional channel $c_{m+1}$, we obtain the following extension for the {\fdgname} structure function:

    \begin{align*}\label{eq:ih_i_structure_function_derivation}
        [\![ f ]\!]^{\fdginstance_{n}}_{\fdginstance_0} &\iff \faultcreation^{\fdginstance_n}_{\fdginstance_0} \langle \mytext{id}_f \rangle \vee
            \\ &\biggl( 
            \dots \vee
            \Bigl( \dots \wedge [\![k^{r_{i}}_{m}]\!]^{\fdginstance_{n}}_{\fdginstance_0} \boldsymbol{\wedge} \bm{ [\![k^{r_{i}}_{m+1}]\!]^{\fdginstance_{n}}_{\fdginstance_0} } \bm{\Bigr)} \biggr) \\
        %
        &\cdots\\
        [\![ k^{r_{i}}_m ]\!]^{\fdginstance_{n}}_{\fdginstance_0} &\iff \faultcreation^{\fdginstance_n}_{\fdginstance_0} \langle \mytext{id}_{k^{r_i}_m} \rangle \vee \left( \exists \fdginstance_{n-j} : \fdg. {\ } [\![h^{r_i}_m]\!]^{\fdginstance_{n-j}}_{\fdginstance_0} \right) \\
        \bm{[\![ k^{r_{i}}_{m+1} ]\!]^{\fdginstance_{n}}_{\fdginstance_0}} & \bm{\iff \faultcreation^{\fdginstance_n}_{\fdginstance_0} \langle \textbf{id}_{k^{r_{i}}_{m+1}} \rangle}\\
            &\qquad \qquad \bm{\vee \left( \exists \fdginstance_{n-j} : \fdg. {\ } [\![h^{r_{i}}_{m+1}]\!]^{\fdginstance_{n-j}}_{\fdginstance_0} \right)} \\
        %
        &\cdots\\
        [\![h^{r_i}_m]\!]^{\fdginstance_{n-j}}_{\fdginstance_0} &\iff \faultcreation^{\fdginstance_{n-j}}_{\fdginstance_0} \langle \mytext{id}_{h^{r_i}_m} \rangle \vee \dots \vee [\![g^{h^{r_i}_m}_n]\!]^{\fdginstance_{n-j}}_{\fdginstance_0} \\
        \bm{[\![h^{r_{i}}_{m+1}]\!]^{\fdginstance_{n-j}}_{\fdginstance_0}} & \bm{\iff \faultcreation^{\fdginstance_{n-j}}_{\fdginstance_0} \langle \textbf{id}_{h^{r_{i}}_{m+1}} \rangle \vee} \mathbf{\dots} \bm{\vee [\![g^{h^{r_{i}}_{m+1}}_n]\!]^{\fdginstance_{n-j}}_{\fdginstance_0}} \\
        %
        &\cdots\\
        [\![g^{h^{r_i}_m}_n]\!]^{\fdginstance_{n-j}}_{\fdginstance_0} &\iff \faultcreation^{\fdginstance_{n-j}}_{\fdginstance_0} \langle \mytext{id}_{g^{h^{r_i}_m}_n} \rangle \\
        \bm{[\![g^{h^{r_{i}}_{m+1}}_n]\!]^{\fdginstance_{n-j}}_{\fdginstance_0}} &\bm{\iff \faultcreation^{\fdginstance_{n-j}}_{\fdginstance_0} \langle \textbf{id}_{g^{h^{r_{i}}_{m+1}}_n} \rangle} \\
    \end{align*}

    \noindent For the FT structure function, we obtain a similar result:
    \begin{align*}
        \pi_\ft(S, \mytext{OR}_f) &\iff \mytext{BE}_f \in S \vee\\
            & \bigl( \dots \vee \pi_\ft(S, {}_{r_i}\mytext{AND}_f) \bigr) \\
        \cdots\\
        \pi_\ft(S, {}_{r_i}\mytext{AND}_f) &\iff \dots \wedge \pi_\ft(S, {}_{k^{r_i}_m}\mytext{OR}_f)\\
            &\boldsymbol{\wedge} \bm{\pi_\ft(S, {}_{k^{r_i}_{m+1}}\textbf{OR}_f)} \\
        %
        \cdots\\
        \pi_\ft(S, {}_{k^{r_i}_m}\mytext{OR}_f)) &\iff \mytext{BE}_{k^{r_i}_m} \in S\\
            &\vee \pi_\ft(S, {}_{h^{r_i}_m}\mytext{OR}_{k^{r_i}_m}) \\
        \bm{\pi_\ft(S, {}_{k^{r_{i}}_{m+1}}\textbf{OR}_f))} & \bm{\iff \textbf{BE}_{k^{r_{i}}_{m+1}} \in S}\\
            &\bm{\vee \pi_\ft(S, {}_{h^{r_{i}}_{m+1}}\textbf{OR}_{k^{r_{i}}_{m+1}})} \\
        %
        \cdots\\
        \pi_\ft(S, {}_{h^{r_i}_m}\mytext{OR}_k) &\iff \mytext{BE}_{h^{r_i}_m} \in S\\
            &\vee \pi_\ft(S, {}_{g^{h^{r_i}_m}_n}\mytext{OR}_{h^{r_i}_m}) \\
        \bm{\pi_\ft(S, {}_{h^{r_{i}}_{m+1}}\textbf{OR}_k)} & \bm{\iff \textbf{BE}_{h^{r_{i}}_{m+1}} \in S \vee} \mathbf{\dots}\\
            &\bm{\vee \pi_\ft(S, {}_{g^{h^{r_{i}}_{m+1}}_n}\textbf{OR}_{h^{r_{i}}_{m+1}})} \\
        %
        \cdots\\
        \pi_\ft(S, {}_{g^{h^{r_i}_m}_n}\mytext{OR}_{h^{r_i}_m}) &\iff \mytext{BE}_{g^{h^{r_i}_m}_n} \in S \\
        \bm{\pi_\ft(S, {}_{g^{h^{r_{i}}_{m+1}}_n}\textbf{OR}_{h^{r_{i}}_{m+1}})} & \bm{\iff \textbf{BE}_{g^{h^{r_{i}}_{m+1}}_n} \in S}
    \end{align*}

    Apply bottom-up substitution to observe that both resulting boolean formulae are extended with clauses that are conjunctive w.r.t. \cref{eq:ih_structure_function_derivation} and \cref{eq:ih_ft_structure_function_derivation} and between whose terms there is a one-to-one mapping. I.e.:
    \begin{equation*}
        [\![k^{r_{i}}_{m+1}]\!]^{\fdginstance_{n}}_{\fdginstance_0} \iff \pi_\ft(S, {}_{k^{r_i}_{m+1}}\mytext{OR}_f)
    \end{equation*}

    And by the IH:
    \[
        [\![ f ]\!]^{\fdginstance_{n}}_{\fdginstance_0'} \iff \pi_{\ft'}(S, f)
    \]
    
\end{proof}

\begin{lemma}[Case $K + 1 = I + M + (N + 1)$]\label{lem:ip_case_n}

    Let $\fdginstance_0'$ be the extension of $\fdginstance_0$ by a fresh component capability $g_{n + 1}$. Let $f \in \capability_{\fdginstance_0'}$ be a capability, and let $\ft' = \transform(\fdginstance_0', f)$ be the synthesized fault tree. Then
    \[
        [\![ f ]\!]^{\fdginstance_{n}}_{\fdginstance_0'} \iff \pi_{\ft'}(S, f)
    \]
    
\end{lemma}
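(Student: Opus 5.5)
The plan follows the template of \autoref{lem:ip_case_i} and \autoref{lem:ip_case_m}. I would start from the generic derivations \cref{eq:ih_structure_function_derivation} and \cref{eq:ih_ft_structure_function_derivation} for $\fdginstance_0$. I would then locate exactly where a fresh component capability $g_{n+1}$ enters both formulae.

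Under the well-formedness assumed for the KG of \autoref{sec:kg_query_fdg_construction}, a fresh component capability only affects $[\![f]\!]$ through an output interface $o$. Here $g_{n+1} \in \quaindcomp^o_{\fdginstance_0'}(o)$ and $\quaindchannel^o_{\fdginstance_0'}(o) = \capid(h^{r_x}_y)$ for some existing channel input capability $h^{r_x}_y$ on which $f$ transitively depends. Otherwise $g_{n+1}$ does not occur in the terms of $[\![f]\!]$, and the claim holds trivially, as remarked in the proof sketch of \autoref{thm:ip_ih}.

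\textbf{Key steps.}

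\textbf{Step 1.} Using \cref{eq:fdepc_simplified}, I compute $\fdepc_{\fdginstance_0'}(h^{r_x}_y, r_x) = \fdepc_{\fdginstance_0}(h^{r_x}_y, r_x) \cup \{\{g_{n+1}\}\}$. All other functional dependency sets are unchanged, because $g_{n+1}$ is fresh and consumes nothing that introduces new minimal configuration fragments. At minimum I would assume $\preq(g_{n+1}) = \emptyset$, as for the leaf $g$ in \autoref{tab:structure_function_example_fdep}. If $g_{n+1}$ has prerequisites, its subtree is handled by the recursive call $\transform'(\fdginstance_0', P', g_{n+1}, \cdot, \ft')$ together with the IH applied to $g_{n+1}$ as the capability of interest.

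\textbf{Step 2.} Plugging this into \autoref{fig:structure_function}, the case C clause for $h^{r_x}_y$ is the conjunction over singleton fragments of the disjunction of their members. I read it, as the paper does in \cref{eq:ih_structure_function_derivation}, as a disjunction over providers. It therefore gains one disjunct $[\![g_{n+1}]\!]^{\fdginstance_{n-j}}_{\fdginstance_0'}$, which unfolds to $\faultcreation^{\fdginstance_{n-j}}_{\fdginstance_0'}\langle \mytext{id}_{g_{n+1}}\rangle$.

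\textbf{Step 3.} On the FT side, the innermost loop of \autoref{alg:fdg_ft_transformation} iterates over the enlarged $\fdepc(h,r)$. It therefore generates one extra gate ${}_{g_{n+1}}\mytext{OR}_{h^{r_x}_y}$ under ${}_{h^{r_x}_y}\mytext{OR}_k$, with basic event $\mytext{BE}_{g_{n+1}}$. So $\pi_{\ft'}(S,{}_{h^{r_x}_y}\mytext{OR}_k)$ gains the disjunct $\mytext{BE}_{g_{n+1}} \in S$.

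\textbf{Step 4.} By \autoref{lem:s_from_e}, $\faultcreation\langle \mytext{id}_{g_{n+1}}\rangle \mapsto \mytext{BE}_{g_{n+1}}$ is a one-to-one correspondence, so the two new disjuncts are equivalent. Both formulae are monotone Boolean expressions built by the same substitution skeleton. Replacing equivalent subterms at corresponding positions therefore preserves the equivalence given by the IH. Bottom-up substitution then yields $[\![f]\!]^{\fdginstance_n}_{\fdginstance_0'} \iff \pi_{\ft'}(S,f)$.

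\textbf{Main obstacle.} The subtle point is that $g_{n+1}$ may also feed several channel input capabilities, or coincide with a capability already on the path, creating a cycle. In that case the recursion guard $P$ in \autoref{alg:fdg_ft_transformation} truncates the expansion to a bare basic event, while the structure function recurses with a time shift $\fdginstance_{n-j}$.

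I would handle multiple feeds by repeating Step 2 for each output interface containing $g_{n+1}$; each feed contributes a matching disjunct. For cycles, I would first try to argue that the truncated branch is absorbed. The expansion that was cut off already appears disjunctively higher up in the tree, so by monotonicity and Proposition~\ref{prop:no_repairs} (errors persist), the cut does not change the truth value. If this absorption argument fails, I would restrict the lemma to acyclic dependency structure.
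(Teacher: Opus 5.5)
Your outline reproduces the paper's own proof of this case. The paper also extends \cref{eq:ih_structure_function_derivation} and \cref{eq:ih_ft_structure_function_derivation} at the channel input $h$ that $g_{n+1}$ feeds, pairs $\faultcreation\langle \mytext{id}_{g_{n+1}} \rangle$ with $\mytext{BE}_{g_{n+1}}$ via Lemma~\ref{lem:s_from_e}, and closes with the IH. However, the move you flag in Step~2 is not a harmless reading. It is exactly where the argument breaks, and the paper makes the same move without comment.

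By Fig.~\ref{fig:structure_function} and \cref{def:structure_function_big_wedge_c}, the case~C clause is $\bigwedge_{S^C \in \fdepc_{\fdginstance_0'}(h,r)} \bigvee_{g \in S^C} [\![g]\!]$. Since every fragment is a singleton, this equals $\bigwedge_{\{g\} \in \fdepc_{\fdginstance_0'}(h,r)} [\![g]\!]$. So, apart from its own fault, $h$ is erroneous only if \emph{all} of its providers are, which is what Postulate~\ref{post:any_access_to_resource_type_sufficient} and \cref{eq:access_necessary_manifestation} require. Hence $[\![g_{n+1}]\!]$ enters the {\fdgname} formula as a new \emph{conjunct}.

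Algorithm~\ref{alg:fdg_ft_transformation}, by contrast, places ${}_{g_{n+1}}\mytext{OR}_{h}$ directly under the OR gate ${}_{h}\mytext{OR}_{k}$. So $\mytext{BE}_{g_{n+1}} \in S$ enters the FT formula as a new \emph{disjunct}. The ``same substitution skeleton'' your Step~4 relies on therefore fails at precisely the node this lemma modifies. With one provider per channel input, as in the base case, $\bigwedge$ and $\bigvee$ coincide. That is why the mismatch stays hidden until a channel input receives a second provider, which is exactly what this case does.

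The equivalence is in fact false here, not just unproven. Take the configuration of Lemma~\ref{lem:ip_base_case} and add a fresh $g_{n+1}$ producing $r$, with an output interface into $h_0$. Expect $f$, and create a fault on $g_{n+1}$ only, well before $\fdginstance_n$.

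\begin{itemize}
\item Only the interface of $g_{n+1}$ is terminated, so $h_0$ keeps access to $r$ through $g$ and stays manifested.
\item No depletion of $c$ follows, so no error reaches $k_0$ or $f$, and $[\![f]\!]^{\fdginstance_n}_{\fdginstance_0'} = \bot$.
\item Yet $g_{n+1}$ is expected (by cascading) and erroneous, so $\mytext{BE}_{g_{n+1}} \in S$ and $\pi_{\ft'}(S,f) = \top$.
\end{itemize}

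To get a provable lemma you must change one side. The natural fix is to insert an AND gate over the ${}_{g}\mytext{OR}_{h}$ gates beneath ${}_{h}\mytext{OR}_{k}$, mirroring ${}_{r}\mytext{AND}_{f}$. Your Steps~2--4 then work, with a new conjunct on both sides. The alternative is to redefine case~C disjunctively, which contradicts Postulate~\ref{post:any_access_to_resource_type_sufficient}.

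There is a second, separate problem in Step~4. The new term $\faultcreation^{\fdginstance_{n-j}}_{\fdginstance_0'}\langle \mytext{id}_{g_{n+1}} \rangle$ sits under $\exists \fdginstance_{n-j}$ with $t_n - t_{n-j} \geq \delta$. You equate it with $\mytext{BE}_{g_{n+1}} \in S$, but $S$ is built from $E_{\fdginstance_n}$, i.e.\ from faults created at any time before $\fdginstance_n$. The two differ for faults created within the last $\delta$ time units. So $S$ must be built from time-stamped fault creations rather than from $E_{\fdginstance_n}$.
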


\begin{proof}[Case $N + 1$]
    Recall the formulae \cref{eq:ih_structure_function_derivation} and \cref{eq:ih_ft_structure_function_derivation}. By adding an additional component capability $g_{n+1}$, we obtain the following extension for the {\fdgname} structure function:

    \begin{align*}\label{eq:ih_i_structure_function_derivation}
        [\![ f ]\!]^{\fdginstance_{n}}_{\fdginstance_0} &\iff \faultcreation^{\fdginstance_n}_{\fdginstance_0} \langle \mytext{id}_f \rangle \vee
            \\ &\biggl( \Bigl(
            \dots \wedge [\![k^{r_0}_m]\!]^{\fdginstance_{n}}_{\fdginstance_0} \Bigr) \vee \dots \vee \Bigl(
            \dots \wedge [\![k^{r_i}_m]\!]^{\fdginstance_{n}}_{\fdginstance_0} \Bigr) \biggr) \\
        %
        &\cdots\\
        [\![ k^{r_{i}}_m ]\!]^{\fdginstance_{n}}_{\fdginstance_0} &\iff \faultcreation^{\fdginstance_n}_{\fdginstance_0} \langle \mytext{id}_{k^{r_i}_m} \rangle \vee \left( \exists \fdginstance_{n-j} : \fdg. {\ } [\![h^{r_i}_m]\!]^{\fdginstance_{n-j}}_{\fdginstance_0} \right) \\
        %
        &\cdots\\
        [\![h^{r_i}_m]\!]^{\fdginstance_{n-j}}_{\fdginstance_0} &\iff \faultcreation^{\fdginstance_{n-j}}_{\fdginstance_0} \langle \mytext{id}_{h^{r_i}_m} \rangle \vee \dots \vee [\![g^{h^{r_i}_m}_n]\!]^{\fdginstance_{n-j}}_{\fdginstance_0}\\
            &\bm{\vee [\![g^{h^{r_i}_m}_{n+1}]\!]^{\fdginstance_{n-j}}_{\fdginstance_0}} \\
        %
        &\cdots\\
        [\![g^{h^{r_i}_m}_n]\!]^{\fdginstance_{n-j}}_{\fdginstance_0} &\iff \faultcreation^{\fdginstance_{n-j}}_{\fdginstance_0} \langle \mytext{id}_{g^{h^{r_i}_m}_n} \rangle \\
        \bm{[\![g^{h^{r_{i}}_{m}}_{n+1}]\!]^{\fdginstance_{n-j}}_{\fdginstance_0}} &\bm{\iff \faultcreation^{\fdginstance_{n-j}}_{\fdginstance_0} \langle \textbf{id}_{g^{h^{r_{i}}_{m}}_{n+1}} \rangle} \\
    \end{align*}

    For the FT structure function, we obtain a similar extension:
    \begin{align*}
        \pi_\ft(S, \mytext{OR}_f) &\iff \mytext{BE}_f \in S \vee\\
            & \bigl( \dots \vee \pi_\ft(S, {}_{r_i}\mytext{AND}_f) \bigr) \\
        \cdots\\
        \pi_\ft(S, {}_{r_i}\mytext{AND}_f) &\iff \dots \wedge \pi_\ft(S, {}_{k^{r_i}_m}\mytext{OR}_f) \\
        %
        \cdots\\
        \pi_\ft(S, {}_{k^{r_i}_m}\mytext{OR}_f)) &\iff \mytext{BE}_{k^{r_i}_m} \in S \vee \pi_\ft(S, {}_{h^{r_i}_m}\mytext{OR}_{k^{r_i}_m}) \\
        %
        \cdots\\
        \pi_\ft(S, {}_{h^{r_i}_m}\mytext{OR}_k) &\iff \mytext{BE}_{h^{r_i}_m} \in S \vee \dots\\
            &\qquad \vee \pi_\ft(S, {}_{g^{h^{r_i}_m}_n}\mytext{OR}_{h^{r_i}_m})\\
            &\qquad \bm{\vee \pi_\ft(S, {}_{g^{h^{r_i}_m}_{n+1}}\mytext{OR}_{h^{r_i}_m})} \\
        %
        \cdots\\
        \pi_\ft(S, {}_{g^{h^{r_i}_m}_n}\mytext{OR}_{h^{r_i}_m}) &\iff \mytext{BE}_{g^{h^{r_i}_m}_n} \in S \\
        \bm{\pi_\ft(S, {}_{g^{h^{r_{i}}_{m}}_{n+1}}\textbf{OR}_{h^{r_{i}}_{m}})} & \bm{\iff \textbf{BE}_{g^{h^{r_{i}}_{m}}_{n+1}} \in S}
    \end{align*}

    Apply bottom-up substitution to observe that both resulting boolean formulae are extended with terms that are disjunctive w.r.t. \cref{eq:ih_structure_function_derivation} and \cref{eq:ih_ft_structure_function_derivation} and between who there is a one-to-one mapping. I.e.:
    \begin{equation*}
        [\![g^{h^{r_i}_m}_{n+1}]\!]^{\fdginstance_{n-j}}_{\fdginstance_0} \iff \pi_\ft(S, {}_{g^{h^{r_i}_m}_{n+1}}\mytext{OR}_{h^{r_i}_m})
    \end{equation*}

    And by the IH:
    \[
        [\![ f ]\!]^{\fdginstance_{n}}_{\fdginstance_0'} \iff \pi_{\ft'}(S, f)
    \]
    
\end{proof}

\section{Related work}\label{sec:related_work}

\subsection{Failure Mode Effects Analysis}

Failure Mode Effects Analysis (FMEA) is an inductive reliability analysis method in which experts and analysts consider the potential effects of envisioned failure modes of a system. This knowledge is captured in FMEA tables -- a human-readable, textual format, listing a subset of a system's potential failure modes. Due to their natural-language descriptions, FMEA tables cannot be evaluated with classical algorithms, though some progress in mechanized natural-language interpretation has been made\cite{xu2020data}.

Since manual FMEA is a labor-intensive process, it is usually carried out in the later design phases\sidenote{This is something PK could write about!}. By that point, changes to the system design are much more difficult to accommodate and consequently much more costly\cite{incose2023incose}. As a result, the emphasis often lies on the mitigation of known design flaws, rather than improving the system design\sidenote{Again, something PK can corroborate.}.

FTA, in contrast to FMEA is a deductive approach. FT construction starts from the undesired top-level event, and potential causes are explored in a recursive top-down manner. FTs provides an intuitive visualization of how failure modes and their interactions result in system-level failure, making them easier to interpret than FMEA tables -- especially for large systems. Furthermore, fault trees are formally defined and they afford many metrics that can be computed algorithmically (i.e. minimal cut sets and minimal path sets\cite{Ruijters2015}). The strength of FTA lies in the combination of visual failure mode interactions and automatic reliability metrics.

If FTA reliability metrics can be obtained with low effort in the early design phases, design changes based on new insights are less costly to implement. Engineers and decision-makers have the ability to make systems more robust by design, and simultaneously to reduce costly ``just in case'' design decisions because of uncertainty in this early stage. However, even though analysis is fully mechanized, the manual \textit{construction} of fault trees typically suffers from the same labor-intensiveness and consistency challenges associated with manual FMEA construction.

\subsection{Fault Tree and FTA Extensions}
Dynamic Fault Trees (DFTs) extend classical FTs by introducing temporal and state-dependent failure semantics~\cite{Ruijters2015}. In addition to standard Boolean gates, DFTs incorporate constructs such as priority-AND gates, functional dependencies\footnote{The semantics of the term ``functional dependency'' in DFTs is not the same as how we define it in this work.}, and spare management, enabling the representation of dynamic system behavior and sequencing constraints. DFTs have been widely applied in the analysis of safety-critical systems. Several approaches have investigated the automatic generation of DFTs from architectural or behavioral models. However, the detailed level of knowledge required to create a DFT is unavailable at early design stages, which makes the DFT an inappropriate formalism for the focus of this work.

A more domain-specific extension of fault trees, called Component Fault Trees\cite{component_fault_trees}(CFTs), makes the role of components as the bearers of faults explicit by allowing modelers to encapsulate internal component failure modes. However, the lack of an ontological foundation allows for the construction of CFTs that are unintended by the modeler, e.g. those that have no consistent notion of (component) identity, time or causality. Furthermore, their manual construction confronts stakeholders with the same challenges as in the case of standard FTs.

Nicoletti et al. introduce Probabilistic Fault tree Logic (PFL) -- a query language for fault trees, allowing analysts to obtain user-specific and complex metrics of interest \cite{Nicoletti2023PFL}. PFL also allows for locally setting evidence in queries without changing the values of the underlying FT. In the context of this work, analysts can use PFL to obtain quantitative FT metrics despite the fact that the FTs synthesized from {\fdgname}s do no ``contain'' numerical failure probabilities.

\subsection{Fault Tree Generation from MBSE}


There is extensive work on (semi-)automatically transforming failure-decorated MBSE models to (dynamic) FTs.

Baklouti et al. generate dynamic FTs from decorated SysML models by introducing a novel redundancy profile\cite{baklouti_dynamic_2020}. However, this requires explicit safety assessment from the modeler (i.e. redundancy does not \textit{follow} from the architecture, it is \textit{assumed} of the architecture) and its tooling is proprietary to SysML (version 1) extended with the authors' profile. Furthermore, despite its practical applicability, the work lacks an ontological foundation.

Castet et al. introduce a fault management ontology, providing engineers at JPL with a common vocabulary of problematic system behavior\cite{castet2016fault}. A tool that integrates this ontology with SysML tooling is presented in \cite{castet2018failure}. The fault management ontology describes concepts like constraints, violations, mitigation and explanations. These concepts are complementary to the ontological concepts (e.g. components, capability, faults and errors) that we introduce in this work.

Mhenni et al. \cite{Mhenni2014} describe a method to synthesize FTs from SysML's Internal Block Diagrams (IBD). IBDs represent a system's internal structure and the interactions between its components through ports. The FT synthesis algorithm is done by a combination of standard graph traversal algorithms and generating sub-FTs for known IBD patterns. This is similar in spirit to the transformation we propose in this work. However, the existing approach relies on specific SysML semantics. In contrast, the conceptual model proposed in our work is generally applicable to any KG that contain structural and functional system knowledge, regardless of the modeling language in which such system knowledge is originally expressed. This allows us to integrate different models (in different modeling languages) that describe the same system.

\mypar{Alternative System Models}
The work by Bozzano et al. in \cite{Bozzano2011} allows for the automatic generation of FTs from SLIM models (an extension of AADL). While these FTs are expressive and accurate, the synthesis methodology requires detailed nominal models and explicit error models of a system.

Majdara and Wakabayashi proposee a method to generate FTs from a system model using so-called function tables for components~\cite{majdara2009new}. Such a table describes the output of a component given its current failure status. Additionally, there are state transition tables that describe the state changes that a component can go through. Lastly, they employ a notion of \textit{flow}, which is similar in spirit to our own notion of \textit{resource types}. The inter-connectivity of components that exchange flows is the basis for the FT synthesis algorithm. Though these works address some of the challenges that we also touch on in this paper, the ontological foundation is minimal. Furthermore, the need for function tables and state transition tables, which encode failure states and their consequences, incurs the same modeling burden that many of the MBSE approaches suffer from.

What all these approaches have in common, is that FT synthesis relies on manually decorating system architecture with failure information. In doing so, they allow for the \textit{recording} of expert knowledge about specific failure modes. In contrast, {\fdgname}s do not require (and indeed do not facilitate) the manual modeling of failure behavior. Therefore, the ontology we present must encode sufficiently rich semantics to enable the \textit{inference} of failure modes -- and this by the very same \textit{mechanisms} that experts may employ.

\subsection{Embedding Fault Knowledge in KGs}
When fault semantics are directly encoded in an ontology, like in that proposed by Meng et al. \cite{Meng2024}, generating FTs from KGs that instantiate such an ontology is relatively straight-forward. The ontology that Shen et al. present in \cite{Shen2023} even explicitly encodes the basic FT concepts. However, both these ontologies describe KGs where explicit fault knowledge is already present. The synthesis of FTs from an ontology about the structural and functional aspects of system architectures, \textit{without} specified fault knowledge, is a key value proposition of our approach.

There are many automated methods to extract knowledge graphs from unstructured data~\cite{zhong2023}. In this work, we assume that a knowledge graph representing the system architecture is available.

\subsection{Ontology for Risk}
The Common Ontology of Value and Risk (COVER)~\cite{sales_cover} proposes an ontological foundation for reasoning about risk, causality, and events in complex socio-technical systems. COVER emphasizes the distinction between risk and value perception, event types, and situations, enabling a semantically rigorous treatment of social risk and value experience. COVER provides explicit conceptual grounding for what objects participate in risk and value events, and how such events are causally connected. Our work aligns with this perspective by treating fault creation, fault activation, and error propagation as ontologically distinct event types grounded in UFO. However, whereas COVER primarily focuses on high-level conceptual modeling of events and risk, our approach operationalizes these ideas for the automated synthesis of FTs. Furthermore, the domain-specific notions relevant to CPSs are not captured by COVER (though the possibility for aligned of {\fdgname}s with COVER has fruitful potential).

Nicoletti et al. explicitly highlight the importance of objects, e.g. components, in FTA and Attack Tree Analysis explicit. The WATCHDOG \cite{nicoletti_watchdog} approach, based on COVER, proposes object-oriented disruption graphs (DOGs). COVER describes how Threat Objects and Objects at Risk can participate in Threat Events and Loss Events. WATCHDOG draws a parallel between the preconditions of such events and the objects that participate in them. However, these conditions are, at the leafs of a DOG, \textit{enablers} of a basic events. The work does not explicitly consider the mechanisms of non-synchronic situations (i.e. the world at different time points), and as such leaves a notion of causality between events underspecified. Because of the gap between ``enabling conditions'' and causality, this work is incompatible with the type of functional dependency analysis we focus on here. Furthermore, by allowing \textit{any kind of object} in the model, the approach relies on manual, iterative expert modeling for the correct construction of DOGs (e.g. the correct modeling of causal relationships between events is not guided or enforced by the model). In contrast, since {\fdgname}s are domain-specific to CPS architecture, they are sufficiently constrained to enable fully automated synthesis of FTs.

Venceslau et al.~\cite{Venceslau2014} later extended the work of Majdara and Wakabayashi~\cite{majdara2009new} with an ontology (in a much leaner sense of the word than what we commit to here) and applied it to an industrial plant case study. Though there are works in the direction of an ontology for risk models~\cite{oliveira2025toward}, a comprehensive ontological analysis of CPSs for reliability analysis is lacking.

Another general observation is that most of the state of the art focuses on vertical failure propagation, i.e. propagation of failure along the mereological composition of a system. This in contrast to error propagation along its horizontal integration, which is what we focus on in this work.

\section{Conclusion}

This paper introduced the {\fdgnamefull} ({\fdgname}), a semantically and ontologically grounded model for representing functional dependencies and fault propagation in cyber-physical systems (CPSs). By combining principles from Model-Based Systems Engineering, knowledge graphs, and the Unified Foundational Ontology (UFO), the {\fdgname} provides a lightweight yet expressive intermediate representation that bridges CPS architectural knowledge and reliability analysis.

Building upon this model, we presented an automated synthesis pipeline that transforms knowledge-graph representations of CPS architectures into fault trees suitable for classical Fault Tree Analysis. In contrast to existing approaches that rely heavily on manually decorated architectural models and expert-specified failure behavior, our approach infers fault propagation structure directly from functional dependencies encoded in the {\fdgname}. The model explicitly distinguishes between faults, fault activations, errors, events, and situations, thereby addressing several semantic ambiguities that are commonly conflated in traditional fault tree methodologies.

The proposed ontology and formalization enable the unified representation of heterogeneous engineering knowledge while remaining sufficiently lightweight for early design stages, where detailed behavioral information is often unavailable. Furthermore, grounding the framework in UFO provides rigorous semantics for causality, object participation, event manifestation, and dependency, improving consistency and traceability across synthesized reliability models.

Overall, this work demonstrates that meaningful and semantically rich fault trees can be synthesized automatically from minimally structured CPS architectural knowledge. This opens opportunities for earlier and more scalable reliability analysis, reducing the effort associated with manual fault modeling and supporting more informed architectural decision-making during system development.

Future work includes extending the framework with quantitative reliability information, integrating behavioral and temporal semantics, incorporating uncertainty and probabilistic reasoning, and validating the approach on large-scale industrial CPS case studies. In addition, future research may explore tighter integration with MBSE toolchains and automated extraction of {\fdgname}s from heterogeneous engineering artifacts and natural-language system documentation.

\subsection{Acknowledgments}
\noindent The authors would like to thank dr. Claudenir Morais Fonseca for the insightful discussions and valuable advice on OntoUML and the Unified Foundational Ontology (UFO), which greatly contributed to the development of this work.


\medskip

\noindent \begin{minipage}{3.8em}
    \includegraphics[width=3.5em]{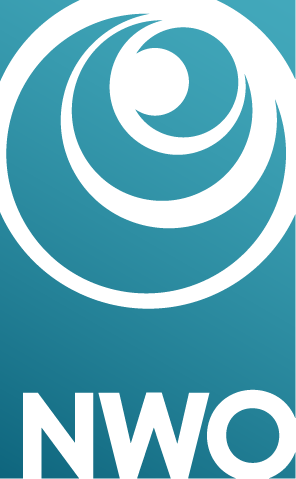}
  \end{minipage}
  \begin{minipage}{.4\textwidth}
    This publication is part of the project \href{https://zorro-project.nl}{ZORRO} with project number KICH1.ST02.21.003 of the research programme Key Enabling Technologies (KIC) which is (partly) financed by the Dutch Research Council (NWO). 

  \end{minipage}

\bibliographystyle{plain}
\bibliography{bibliography}



\end{document}